\documentclass{article}

\usepackage{arxiv}

\usepackage[utf8]{inputenc} %
\usepackage[T1]{fontenc}    %
\usepackage{hyperref}       %
\usepackage{url}            %
\usepackage{booktabs}       %
\usepackage{amsfonts}       %
\usepackage{nicefrac}       %
\usepackage{microtype}      %
\usepackage{xcolor}         %

\usepackage{physics}
\usepackage{amsmath}
\usepackage{amsthm}
\usepackage{amssymb}
\usepackage{bm}
\usepackage{graphicx}
\usepackage{subfigure}

\newtheorem{theorem}{Theorem}[section]
\newtheorem{corollary}{Corollary}[section]
\newtheorem{lemma}{Lemma}[section]

\newtheorem{proposition}{Proposition}[section]

\def\>{\ensuremath{\rangle}}
\def\<{\ensuremath{\langle}}

\newcommand {\qbit}[2]{{|#1\>_{#2}\<#1|}}

\newcommand{\myproof}[1]{\textit{Proof}. #1 \hfill\qedsymbol}

\title{Towards Efficient Ansatz Architecture for Variational Quantum Algorithms}

\author{%
Anbang Wu \\
  Department of Computer Science\\
  University of California, Santa Barbara \\
  \texttt{anbang@ucsb.edu} \\
  \And
  Gushu Li \\
  Department of Electrical \& Computer Engineering\\
  University of California, Santa Barbara \\
  \texttt{gushuli@ece.ucsb.edu} \\
  \AND
  Yuke Wang \\
  Department of Computer Science\\
  University of California, Santa Barbara \\
  \texttt{yuke\_wang@cs.ucsb.edu} \\
  \AND
  Boyuan Feng \\
  Department of Computer Science\\
  University of California, Santa Barbara \\
  \texttt{boyuan@cs.ucsb.edu} \\
  \AND
  Yufei Ding \\
  Department of Computer Science\\
  University of California, Santa Barbara \\
  \texttt{yufeiding@cs.ucsb.edu} \\
  \AND
  Yuan Xie \\
  Department of Electrical \& Computer Engineering\\
  University of California, Santa Barbara \\
  \texttt{yuanxie@ucsb.edu} \\
}

\begin{document}

\maketitle

\begin{abstract}
Variational quantum algorithms are the quantum analogy of the successful neural network in classical machine learning.
The ansatz architecture design, which is similar to classical neural network architecture design,  is critical to the performance of a variational quantum algorithm.
In this paper, we explore how to design efficient ansatz and 
investigate several common design options in today's quantum software frameworks.
In particular, we study the number of effective parameters in different ansatz architectures and theoretically prove that an ansatz with parameterized RX and RZ gates and alternating two-qubit gate entanglement layers would be more efficient (i.e., more effective parameters per two-qubit gate). Such ansatzes are expected to have stronger expressive power and obtain better solutions. 
Numerical experimental results show that our efficient ansatz architecture outperforms other ansatzes with smaller ansatz size and better optimization results.

\end{abstract}

\section{Introduction}

Quantum machine learning~\cite{biamonte2017quantum} is an emerging machine learning paradigm due to its intrinsic large state space and the ability to fast generating probabilistic distributions that are justifiably hard to sample from a classical machine~\cite{aaronson2017complexity,boixo2018characterizing}.
Variational Quantum Algorithms (VQAs), which come with intrinsic noise resilience~\cite{JarrodRMcClean2016TheTO, Sharma2019NoiseRO} and modest computation resource requirement~\cite{cerezo2020variational}, are expected to tackle 
machine learning tasks and demonstrate practical usage on the near-term quantum computing devices.
VQA-based quantum machine learning has been applied on various tasks, including  optimization~\cite{Farhi2014AQA, Hadfield2019FromTQ}, chemistry simulation~\cite{McArdle2020QuantumCC, Peruzzo2014AVE}, quantum program compilation~\cite{Khatri2018QuantumAQ, Sharma2019NoiseRO}, etc.

VQA can be considered as the quantum version of the classical neural network (NN)~\cite{Killoran2018ContinuousvariableQN}.
The structure of the model in VQA, which is often termed as `ansatz' in the quantum computing community, is a parameterized quantum circuit\footnote{the conventional name of quantum program under the well-adopted quantum circuit model~\cite{nielsen2010quantum}} similar to the NN architecture.
In a VQA ansatz, there are some single-qubit gates with parameters applied on different qubits locally. There are also some two-qubit gates connecting different qubit pairs and formulating a network structure. 
The ansatz is executed on a quantum processor to evaluate a cost function determined by the specific machine learning task, which is similar to a forward execution of NN. 
The parameters in the ansatz are tuned through a classical optimizer to minimize the cost function, which is the analogy of %
NN training.

Similar to classical NN, %
 the ansatz architecture is critical to the performance of VQA and is actively being studied.
Some ansatz architectures, e.g., Quantum CNN~\cite{Cong2018QuantumCN}, Quantum RNN~\cite{Bausch2020RecurrentQN}, Quantum LSTM~\cite{Chen2020QuantumLS}, and Quantum GNN~\cite{Verdon2019QuantumGN}, are inspired by successful classical NN architectures.
Some other ansatz architectures are inspired by the application, e.g., the Unitary Coupled-Cluster Singles and Doubles (UCCSD) ansatz~\cite{Barkoutsos2018QuantumAF} for 
chemistry simulation %
and the Quantum Alternating Operator Ansatz~\cite{Farhi2014AQA} for combinatorial problems.
However, such ansatzes have \textit{not} yet been demonstrated in experiments due to their relatively high resource requirements and complex structures.

Except for the architectures mentioned above, one type of ansatz architecture of particular interest is the hardware-efficient ansatz (HEA) employed in several recent experimental demonstrations of VQAs~\cite{Arute2020HartreeFockOA,arute2020quantum,Kandala2017HardwareefficientVQ}.
HEA usually employs layers of parameterized single-qubit gates, which can provide linear transformation (similar to the weight matrix in NN), and layers of two-qubit gates, which can bring entanglement in the state and enrich the expressive power (like the non-linear layers in NN). 
The architectures of HEAs are regular and  easy to be efficiently mapped onto near-term hardware with limited computation capability.
In this paper, we focus on HEA which is probably the most practical model for QML in the next decade.

One drawback of HEA is that its solution space may not include the desired solution due to its application-independent design.
Today's quantum computing software frameworks like Qiskit~\cite{Qiskit}, provide multiple reconfigurable HEA templates while the best HEA design in practice is not known.
One active research direction to evaluate the quality of HEA designs is to study the expressive power of different HEA designs~\cite{Sim2019ExpressibilityAE,Rasmussen2020ReducingTA,Nakaji2020ExpressibilityOT, Funcke2020DimensionalEA}.
In general, a HEA with more gates and parameters is more expressive and more likely to capture the desired solution.

Recall that the original design objective of HEA is to reduce the gate count 
for efficiency on hardware.
Being more expressive with more gates is clearly in the opposite direction.
For example, the strong expressive power of quantum states comes from the entanglement among different qubits and the entanglement must be established by the non-local two-qubit gates.
But the two-qubit gates usually have relatively long latency and high error rates.
For example, the average error rate of two-qubit gates on IBM's superconducting quantum devices is $10\times$ higher than that of single-qubit gates.

In this paper, we investigate the efficient ansatz architecture design and our goal is to provide more expressive power with fewer gates and parameters.
We analyze several common parameterized single-qubit layer designs and two-qubit entanglement layer designs.
In particular, we study the number of effective parameters (parameters that cannot be combined with other ones) and the number of the expensive two-qubit gates in different architectures.
The efficiency can then be improved by increase the number of effective parameters per two-qubit gate.

Our theoretical analysis is based on a widely-used hardware model where the two-qubit gates are fixed CX gates and single-qubit gates are parameterized rotation gates along different axes.
We made several key observations.
\textbf{First}, if an ansatz consists of a single type of rotation (e.g., rotation along X-axis, RX) in the entanglement layer, the upper bound of the number of effective parameters is small.
\textbf{Second}, increasing the number of CX gates in the entanglement layer may not enrich the expressive power. 
\textbf{Third}, an ansatz made up of two types of rotation gates (e.g., rotation along X- and Y-axis, RX-RZ) and an alternating entanglement layer is more efficient. %
These observations provide insights about how to get more expressive power with fewer gates (especially two-qubit gates).

We evaluate our theoretical results by comparing the RX-RZ-CX ansatz with other common ansatz architectures on various VQA tasks, including molecule simulation and combinatorial problems.
Experiment results show that the simulation error of RX-RZ-CX ansatz with alternating entanglement is only around 4.4\% of that of the RX-CX ansatz with a linear CX entanglement on average for molecule simulation tasks. 
For combinatorial optimization tasks, all tested ansatz architectures can achieve the optimal solution while the RX-RZ-CX ansatz has the smallest number of CX gates.

Our major contributions can be summarized as follows:
\begin{enumerate}
\vspace{-8pt}
    \item We study the expressive power of commonly-used hardware-efficient ansatzes via calculating the number of effective parameters.
    \vspace{-3pt}
    \item We propose and examine various common HEA configuration options to help select efficient architectures for HEA. 
    \vspace{-3pt}
    \item Our numerical experiments demonstrate the effectiveness of the suggested ansatz design. The proposed RX-RZ-CX ansatz with alternating entanglement can achieve better solutions comparing with the RX-CX ansatz with linear entanglement.
\end{enumerate}

\subsection{Related Work}

The architecture of ansatz is one of the key design problems for VQAs~\cite{cerezo2020variational}.
Today's quantum software frameworks, e.g., IBM's Qiskit Aqua library~\cite{Qiskit}, provide various built-in ansatz architectures.
The design methodology of VQA ansatz can be classified into two types. 
The first type is to design the ansatz architecture based on the target problem, e.g., the UCCSD ansatz for the chemistry simulation applications~\cite{peruzzo2014variational}.
These ansatzes are in general easier to converge~\cite{Wiersema2020ExploringEA} while they require a large number of gates and are beyond the computation capability of noisy near-term devices.
The second type of ansatz is usually termed as hardware-efficient ansatz and its architecture is designed based on the target hardware platform~\cite{Kandala2017HardwareefficientVQ}.
These ansatzes have relatively lower gate counts and have been widely employed to experimentally demonstrate VQAs~\cite{Arute2020HartreeFockOA, arute2020quantum,  Kandala2017HardwareefficientVQ, kokail2019self}.

It is not yet fully resolved how to optimally construct a hardware-efficient ansatz.
One active research area towards this objective to is study the expressive power of different ansatz constructions
because an ansatz with a strong expressive power can represent more complicated functions~\cite{Hubregtsen2020EvaluationOP} and is more likely to cover the desired solution. 
Sim et al. proposed to analyze the ansatz expressive power from a statistical perspective~\cite{Sim2019ExpressibilityAE} and the proposed method is later employed by  \cite{Rasmussen2020ReducingTA} and \cite{Nakaji2020ExpressibilityOT} to detect redundant parameters and compare the expressive power of different ansatzes, respectively.
\cite{Funcke2020DimensionalEA} proposed a geometric algorithm that identifies superfluous parameters with ancilla qubits.
In contrast, this paper provides an algebraic analysis of commonly-used ansatzes. Our analysis directly identifies redundant parameters and suggests that the RX-RZ-CX ansatz with alternating entanglement is more efficient in the sense of the number of effective parameters per two-qubit gate.

It has also been noticed that highly expressive ansatzes are more difficult to train~\cite{Holmes2021ConnectingAE, Patti2020EntanglementDB, Tangpanitanon2020ExpressibilityAT}, which is the same with classical machine learning.
Various strategies~\cite{Rasmussen2020ReducingTA, Funcke2020DimensionalEA, Nakaji2020ExpressibilityOT, Grant2019AnIS} have been proposed to improve VQA trainability. 
How to efficiently train an expressive ansatz is worth further exploring yet beyond the scope of this paper.

\section{Preliminary}

\subsection{Quantum Computation Basics}

The quantum bit (qubit) is the basic information processing unit in quantum computing. 
The state of a qubit can be represented by a unit vector in a 2-dimensional Hilbert space spanned by the computational basis states $\ket{0}$ and $\ket{1}$ of the qubit.
Specifically, the state $\ket{\psi}$ of a qubit can be expressed by $\ket{\psi} = a\ket{0}+b\ket{1}$ where $a,b \in \mathbb{C}$ and $\abs{a}^2 + \abs{b}^2 = 1$.
Without ambiguity, $\ket{\psi}$ can be written in the vector form $\ket{\psi} = (a,b)^\intercal$ which is known as the state vector.
For a system with $n$ qubits, its state space is the tensor product of all its qubits and is a $2^n$-dimensional Hilbert space.
A state vector in this space is a vector with $2^n$ elements $\ket{\psi}=(a_0,\dots,a_{2^n-1})^\intercal$.

The most popular quantum computation model is the circuit model~\cite{nielsen2010quantum}.
In this model, quantum algorithms are represented by quantum circuits which are composed of sequences of quantum gates and measurements applied on qubits.
Quantum gates on an $n$-qubit system are unitary operators in the $2^n$-dimensional Hilbert space.
For example, the single-qubit bit-flip gate $\mathrm{X}$ can convert quantum state $\ket{0}$ to quantum state $\ket{1}$:
$
\mathrm{X} = \begin{pmatrix}
    0 & 1 \\
    1 & 0
\end{pmatrix}, \ \mathrm{X}\ket{0}=\begin{pmatrix}
    0 & 1 \\
    1 & 0
\end{pmatrix}\begin{pmatrix}
    1 \\
    0
\end{pmatrix} = \begin{pmatrix}
    0 \\
    1
\end{pmatrix} = \ket{1}.$

In this paper, we only consider single-qubit gates and two-qubit gates. Because they have been proved to be universal~\cite{Dawson2006TheSA} and most existing quantum hardware platforms only support single- and two-qubit gates.
We list the definitions of the $x$-axis rotation gate $\mathrm{RX}$, $z$-axis rotation gate $\mathrm{RZ}$, a two-qubit gate $\mathrm{CX}$, and their circuit representation in the following:
$$\begingroup
\setlength\arraycolsep{2pt}
\mathrm{RX}(\theta) =\begin{pmatrix}
\cos{\frac{\theta}{2}}   & -i\sin{\frac{\theta}{2}} \\
-i\sin{\frac{\theta}{2}} & \cos{\frac{\theta}{2}}
\end{pmatrix}, \mathrm{RZ}(\theta) = 
    \begin{pmatrix}
        e^{-i \frac{\theta}{2}}   & 0 \\
        0 & e^{i\frac{\theta}{2}}
    \end{pmatrix},
\mathrm{CX} = \begin{pmatrix}
    1 & 0 & 0 & 0 \\
    0 & 1 & 0 & 0 \\
    0 & 0 & 0 & 1 \\
    0 & 0 & 1 & 0 \\
\end{pmatrix}
\endgroup
,
\begin{array}{c}
\includegraphics[width=90pt]{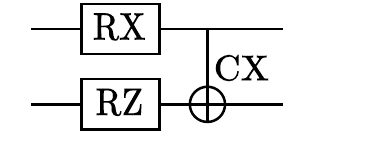}
\end{array}
$$
The RX and RZ gates are parameterized single-qubit gates. The CX is a fixed two-qubit gate.

\textbf{Entanglement:} The entanglement is a unique feature in quantum computing and can be created by two-qubit gates.
For example, CX gate can transform quantum state $\ket{\psi_0} = \frac{1}{\sqrt{2}}(\ket{0}+\ket{1})\ket{0}$ of qubits $q_0$ and $q_1$ to $\ket{\psi_1} = \frac{1}{\sqrt{2}}(\ket{00}+\ket{11})$:
$$\begin{pmatrix}
    1 & 0 & 0 & 0 \\
    0 & 1 & 0 & 0 \\
    0 & 0 & 0 & 1 \\
    0 & 0 & 1 & 0 \\
\end{pmatrix}\begin{pmatrix}
    \frac{1}{\sqrt{2}} \\
    0 \\
    \frac{1}{\sqrt{2}} \\
    0
\end{pmatrix} = \begin{pmatrix}
    \frac{1}{\sqrt{2}} \\
    0 \\
    0 \\
    \frac{1}{\sqrt{2}}
\end{pmatrix}.$$
The state $\ket{\psi_1}$ cannot be generated from $\ket{\psi_0}$ by only using single-qubit gates.
Such ability of a two-qubit gate to produce correlations between different qubits can greatly enrich the expressive power of quantum algorithms~\cite{Sim2019ExpressibilityAE}.

The measurements are not covered in this section since measurement usually does not appear in the HEA architecture, which is the focus of this paper. We refer the reader to~\cite{nielsen2010quantum} for details.

\subsection{VQA Basics and Ansatz Architecture}

The quantum circuit of a VQA is a parameterized circuit (a.k.a. ansatz) and some gates can be tuned to change their functions (e.g., the $\rm RX(\theta)$ gate mentioned above). 
Therefore, the final state $\ket{\psi(\bm{\theta})}$ after executing the ansatz will depend on the parameters. 
The solution of the VQA task is usually encoded in the minimal eigenvalue (or the corresponding eigenstate) of a Hermitian operator $O$ called observable.
On a quantum processor, we can evaluate $\bra{\psi(\bm{\theta})}O\ket{\psi(\bm{\theta})}$ and then optimize parameters $\bm{\theta}$ to minimize $\bra{\psi(\bm{\theta})}O\ket{\psi(\bm{\theta})}$.
After $\bra{\psi(\bm{\theta})}O\ket{\psi(\bm{\theta})}$ is  minimized, $\bra{\psi(\bm{\theta})}O\ket{\psi(\bm{\theta})}$ will be the minimal eigenvalue and $\ket{\psi(\bm{\theta})}$ is the corresponding eigenstate.

One natural question regarding the performance of VQA is that whether the eigenstate with respect to the smallest eigenvalue of the observable can be reached or approximated by tuning the parameters in the ansatz.
The answer is highly related to the architecture of the ansatz.
In this paper, we focus on one type of ansatz architecture, namely hardware-efficient ansatz (HEA), because it is constructed by gates that can be directly executed on the near-term quantum hardware.
A typical HEA has a layered architecture~\cite{Kandala2017HardwareefficientVQ} which consists of two-qubit gate layers to provide entanglement and single-qubit gate layers with tunable parameters.
Figure~\ref{fig:entanglementblock} (a) shows a linear entanglement layer with CX gates connecting all qubits one by one.
The single-qubit gate layer of hardware-efficient ansatz only contains rotation gates most time. Figure~\ref{fig:rxcxlayer} (a) provides an example of 
HEA which contains only single-qubit gates of RX and linear entanglement layer of CX gates. 
The design space of HEA can be explored in several directions, including changing the order of two-qubit gates in the entanglement layer (Figure~\ref{fig:entanglementblock} (b) (c)), applying two-qubit gates on different qubit pairs (e.g., alternating entanglement in Figure~\ref{fig:rxcxlayer} (b)), modifying the single-qubit gate layer (Figure~\ref{fig:rxcxlayer} (c)), etc.

\begin{figure}[t]
    \centering
    \includegraphics[width=0.5\columnwidth]{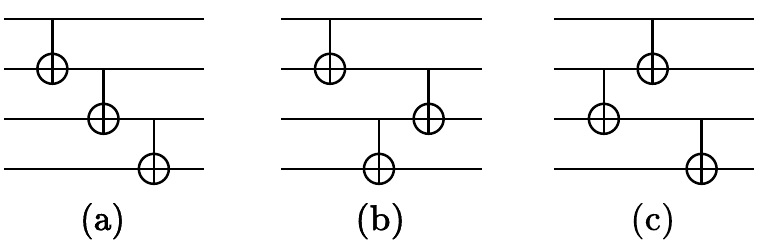}
    \vspace{-10pt}
    \caption{CX gates in an entanglement layer. (a) linear entanglement layer. (b)(c) examples of changing the order of CX gates in an entanglement layer}
    \label{fig:entanglementblock}
\end{figure}
\begin{figure}[t]
    \centering
     \vspace{-5pt}
    \includegraphics[width=\columnwidth]{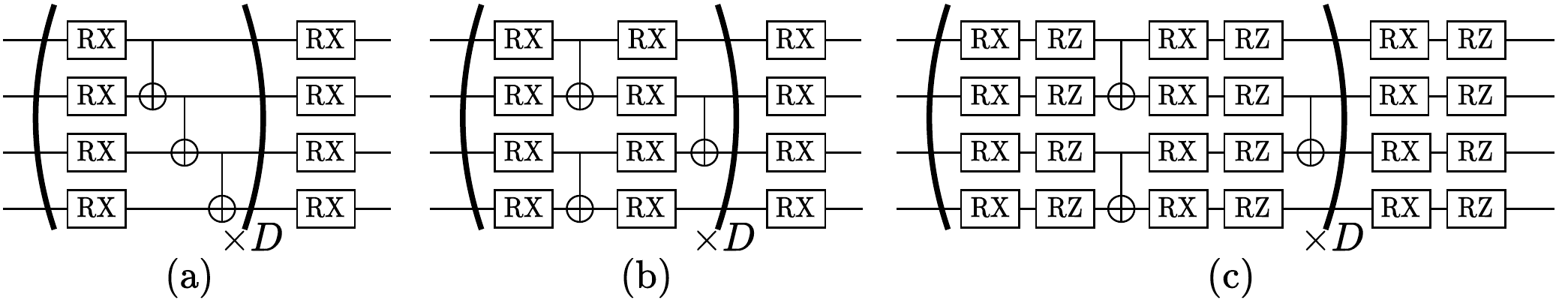}
    \vspace{-20pt}
    \caption{RX-CX ansatz with (a) linear entanglement (b) alternating entanglement, parameters in the RX gates are omitted %
    (c) RX-RZ-CX ansatz with alternating entanglement}
    \vspace{-10pt}
    \label{fig:rxcxlayer}
\end{figure}

\section{HEA Expressive Power Analysis}\label{sec:analysis}

In this section, we analyze the expressive power of commonly-used ansatz forms from existing quantum software frameworks.
We will focus on HEA since this type of ansatz has relatively low cost and is more likely to be deployed on near-term devices.
In particular, we study two key aspects in ansatz design, the two-qubit gate connection topology and the selection of single-qubit rotation gates.
We assume that all two-qubit gates are CX gates without parameters and all single-qubit gates are parameterized rotation gates. 
This is a widely used setting in prior research~\cite{Kandala2017HardwareefficientVQ, Sim2019ExpressibilityAE} and today's quantum software frameworks~\cite{Qiskit, Bergholm2018PennyLaneAD}.

\subsection{Effective Parameters of RX-CX Ansatz}
We first study the number of effective parameters since more effective parameters in an ansatz lead to stronger expressive power. 
We start from a simple case with one CX gate and two RX gates placed before and after the CX gate on the target qubit of the CX gate.. 
In the following proposition, we demonstrate an example in which the two parameters of the two RX gates can be aggregated.

\begin{proposition}[X-rule]\label{prop:x_rule}
The following two quantum circuits are equivalent. %
\begin{align*}
\vspace{-5pt}
     \includegraphics[height=5\fontcharht\font`\B]{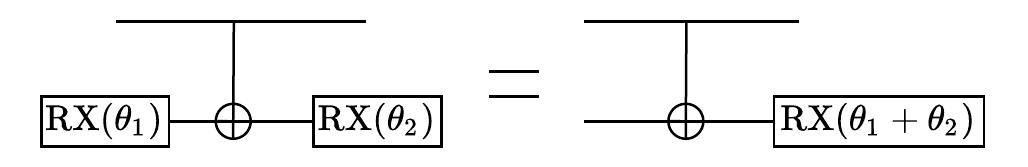} 
     \vspace{-10pt}
\end{align*}
\end{proposition}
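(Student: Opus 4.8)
The plan is to verify the claimed equivalence directly at the level of the two-qubit unitary matrices, exploiting the special algebraic relationship between the bit-flip generator of the $\mathrm{RX}$ gate and the $\mathrm{CX}$ gate. As described in the text, the left circuit applies $\mathrm{RX}(\theta_1)$ to the target qubit, then the $\mathrm{CX}$, then $\mathrm{RX}(\theta_2)$ again on the target; I want to show this collapses to a single $\mathrm{RX}(\theta_1+\theta_2)$ together with the $\mathrm{CX}$, so that the two rotation parameters merge into one effective parameter.

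First I would write the $\mathrm{CX}$ in its controlled form $\mathrm{CX} = \ket{0}\bra{0}\otimes \id + \ket{1}\bra{1}\otimes \mathrm{X}$, where the first tensor factor acts on the control and the second on the target. The full unitary of the left circuit is then $(\id\otimes \mathrm{RX}(\theta_2))\,\mathrm{CX}\,(\id\otimes \mathrm{RX}(\theta_1))$, and substituting the controlled form expands it into a sum of two terms indexed by the control projectors $\ket{0}\bra{0}$ and $\ket{1}\bra{1}$.

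The key step is the observation that $\mathrm{X}$ commutes with $\mathrm{RX}(\theta)$ for every $\theta$, since $\mathrm{RX}(\theta) = e^{-i\theta \mathrm{X}/2}$ is a function of $\mathrm{X}$ alone. This lets me slide the $\mathrm{X}$ appearing in the $\ket{1}\bra{1}$ branch past the surrounding rotations, so that both branches of the sum pick up the same common factor $\mathrm{RX}(\theta_2)\,\mathrm{RX}(\theta_1) = \mathrm{RX}(\theta_1+\theta_2)$, where the last equality is the one-parameter group law for rotations about a fixed axis. Re-collecting the two branches then factors the unitary back into $(\id\otimes \mathrm{RX}(\theta_1+\theta_2))\,\mathrm{CX}$, equivalently $\mathrm{CX}\,(\id\otimes \mathrm{RX}(\theta_1+\theta_2))$ since the combined rotation again commutes through the controlled-$\mathrm{X}$, which is exactly the right-hand circuit.

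I do not expect a genuine obstacle here: the whole argument rests on the commutation $\mathrm{X}\,\mathrm{RX}(\theta)=\mathrm{RX}(\theta)\,\mathrm{X}$ and the additivity $\mathrm{RX}(\theta_1)\,\mathrm{RX}(\theta_2)=\mathrm{RX}(\theta_1+\theta_2)$, both immediate from the exponential form. The only point requiring care is the bookkeeping of which qubit each factor acts on, keeping the control/target tensor ordering fixed throughout, and confirming that the merged rotation may be placed on either side of the $\mathrm{CX}$ — which is precisely what makes the two original parameters collapse into a single effective one.
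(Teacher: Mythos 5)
Your proposal is correct and amounts to the same direct verification the paper invokes (the paper's proof is simply ``check by calculation''); your decomposition $\mathrm{CX}=\ket{0}\bra{0}\otimes \mathrm{I}+\ket{1}\bra{1}\otimes \mathrm{X}$ combined with $\mathrm{X}\,\mathrm{RX}(\theta)=\mathrm{RX}(\theta)\,\mathrm{X}$ and $\mathrm{RX}(\theta_1)\mathrm{RX}(\theta_2)=\mathrm{RX}(\theta_1+\theta_2)$ is just a cleanly organized way of carrying out that calculation. No gaps.
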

\vspace{-5pt}
\myproof{The equivalence can be checked immediately by calculation.}

Since the two parameters $\theta_1$ and $\theta_2$ can be combined into one parameter $\theta = \theta_1+\theta_2$ in one RX gate in the proposition above, the circuit above with two parameters only has one effective parameter. 
This example suggests that an ansatz may have many parameters but the number of effective parameters may be smaller since some parameters can be combined.

We then study a more complicated case for RX-CX ansatz.
The RX-CX ansatz has a layered architecture (Figure~\ref{fig:rxcxlayer} (a)). 
In this ansatz, we first have one layer of RX gates on all qubits.
Then we employ an entanglement block, in which the qubits are connected by CX gates to create entanglement and enhance the expressive power of the ansatz.
We use linear entanglement in this case.
In the following proposition, we show that the number of effective parameters in an RX-CX ansatz with linear entanglement is very limited even if the ansatz is deep and has a large number of parameters.

\begin{proposition}\label{prop:rx_cx_upperbound}
Suppose the number of effective parameters of an n-qubit RX-CX ansatz with linear entanglement is $K$. We have the following conclusion:
$K\leq\lfloor\frac{3n^2+1}{4}\rfloor$
and the parameter combination is illustrated in Figure~\ref{fig:parametercombination}.
\end{proposition}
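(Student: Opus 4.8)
The plan is to turn the circuit-identity bookkeeping into a single algebraic count over $\mathbb{F}_2$. First I would push all CX gates to one side: writing the ansatz as an alternating product of RX-layers $R^{(i)}$ and fixed linear-entanglement blocks $B$, I conjugate each rotation layer by the blocks lying to its right, obtaining $U(\bm\theta)=B^{\,d}\prod_i \hat R^{(i)}$ where each $\hat R^{(i)}$ is a product of rotations $\exp(-\tfrac{i}{2}\theta_{i,q}P_{i,q})$ with $P_{i,q}=B^{-(i-1)}X_q B^{\,i-1}$. The generalized X-rule (Proposition~\ref{prop:x_rule}) is precisely the statement that conjugating $X$ on the target of a CX leaves it fixed; together with $CX\,X_c\,CX=X_c X_t$ on the control, this shows every $P_{i,q}$ is a tensor product of $X$'s and identities, i.e. an X-type Pauli.

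The key observation is that all X-type Paulis mutually commute. The rotations in $\prod_i \hat R^{(i)}$ can therefore be reordered and merged freely, so the parameterized part collapses to $\exp(-\tfrac{i}{2}\sum_{P\in\mathcal P}\phi_P P)$, where $\mathcal P$ is the set of \emph{distinct} X-type Paulis occurring among the $P_{i,q}$ and $\phi_P$ aggregates the $\theta_{i,q}$ feeding that $P$. As distinct Paulis are linearly independent generators, the $\phi_P$ are genuinely free, so the number of effective parameters equals $|\mathcal P|$; these are exactly the combinations depicted in Figure~\ref{fig:parametercombination}.

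Next I would count $|\mathcal P|$ by linear algebra over $\mathbb{F}_2$. Identifying an X-type Pauli with its support vector in $\mathbb{F}_2^n$, conjugation by one block is a fixed invertible linear map $M$, with $Me_q=e_q+e_{q+1}+\cdots+e_{n-1}$, equivalently $(Mv)_j=\sum_{i\le j}v_i$ (mod-2 prefix sum). Then $\mathcal P=\bigcup_q\{M^k e_q:k\ge0\}$. Since prefix-summing preserves the position of the leading $1$, the orbits of distinct $e_q$ are disjoint and $|\mathcal P|=\sum_{q=0}^{n-1}|\mathrm{orbit}(e_q)|$. Writing $M=I+N$ with $N$ strictly triangular and nilpotent, the Frobenius identity $M^{2^k}=(I+N)^{2^k}=I+N^{2^k}$ over $\mathbb{F}_2$ pins the period of $e_q$ to $|\mathrm{orbit}(e_q)|=2^{\lceil\log_2(n-q)\rceil}$.

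Summing gives $|\mathcal P|=\sum_{m=1}^{n}2^{\lceil\log_2 m\rceil}$, and a piecewise estimate over the dyadic ranges $2^{j-1}<m\le 2^j$ yields $|\mathcal P|\le\lfloor(3n^2+1)/4\rfloor$. I expect the main obstacle to be this final arithmetic rather than the conceptual reduction: the orbit-size sum is an intrinsically dyadic quantity that the clean quadratic $\lfloor(3n^2+1)/4\rfloor$ only bounds loosely, so one must check the inequality across every dyadic window and not merely at powers of two. A secondary point needing care is confirming that no extra relations collapse $\mathcal P$ further, i.e. that the commuting generators are independent as claimed.
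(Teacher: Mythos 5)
Your argument is correct, and it reaches the same per-qubit count as the paper ($2^{\lceil\log_2(q+1)\rceil}$ effective parameters on each qubit, hence $\sum_{m=1}^{n}2^{\lceil\log_2 m\rceil}\le\lfloor(3n^2+1)/4\rfloor$) by a genuinely different and considerably shorter route. The paper works in the Schr\"odinger picture: it derives a recursive $2\times 2$ block form for the layer unitary (Lemma~\ref{lemma:layer_uni}), expands the product of layers into terms indexed by binary vectors $\bm a,\bm b$, and proves pairwise parameter combination by an induction on the qubit index that tracks how inserted $\mathrm X$ gates cancel, via a combinatorial analysis of the string-rewriting map $p$ (Lemmas~\ref{lemma:xcomp}--\ref{lemma:x_string}). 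You instead work in the Heisenberg picture: since the entanglement block is Clifford and $\mathrm{CX}$ conjugation maps X-type Paulis to X-type Paulis with trivial sign, every rotation generator becomes some $M^{k}e_q$ for a fixed unipotent $M\in GL(n,\mathbb F_2)$, all generators commute, and the whole question collapses to summing orbit lengths of a nilpotent-plus-identity map, with $M^{2^k}=I+N^{2^k}$ doing the work of the paper's Lemma~\ref{lemma:x_string}. Your approach buys generality and brevity: the modified-CX-order result (Proposition~\ref{prop:rxcx_cnotorder}) and the alternating-entanglement reduction (Corollary~\ref{coro:rxcx-AL-layer}) follow with no extra work, since any CX-only block still induces a triangular unipotent $M$; the paper's approach is more elementary but requires a separate induction for each variant. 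Two small points to tidy: for the stated \emph{upper} bound you only need $K\le|\mathcal P|$ (immediate from the merge into $\exp(-\tfrac{i}{2}\sum_P\phi_P P)$) and only the orbit-size upper bound $|\mathrm{orbit}(e_q)|\le 2^{\lceil\log_2(q+1)\rceil}$ from $N^{q+1}e_q=0$, so the independence worry you flag and the exactness of the period are not actually load-bearing; and with the paper's CX orientation the map is $Me_q=e_q+e_{q-1}+\cdots+e_0$ rather than a suffix sum, which permutes the per-qubit orbit sizes but leaves the multiset $\{2^{\lceil\log_2 m\rceil}\}_{m=1}^{n}$, and hence the bound, unchanged.
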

\myproof{Postponed to appendix~\ref{sec:rx-cx-eff}.}

This proposition provides an upper bound of the number of effective parameters in an RX-CX ansatz with linear entanglement. 
That is, even if such an ansatz has many layers and many single-qubit gates with parameters, the number of effective parameters will always be limited by this upper bound.
Surprisingly, the upper bound grows quadratically as the number of qubits increases. 
For an $n$-qubit system, the unitary transformation on it is a $2^n$-dimensional linear operator which can be considered as a matrix with $4^n$ elements.
Therefore, an ansatz would require an exponentially (with respect to the number of qubits) large number of effective parameters to become expressive and cover all possible unitary transformations.
However, an RX-CX ansatz with linear entanglement cannot support such a large number of effective parameters.
Thus, we argue that this ansatz architecture is not effective since it can only cover a small portion of all possible unitary transformations.

\begin{figure}[t]
    \centering
    \includegraphics[width=0.8\columnwidth]{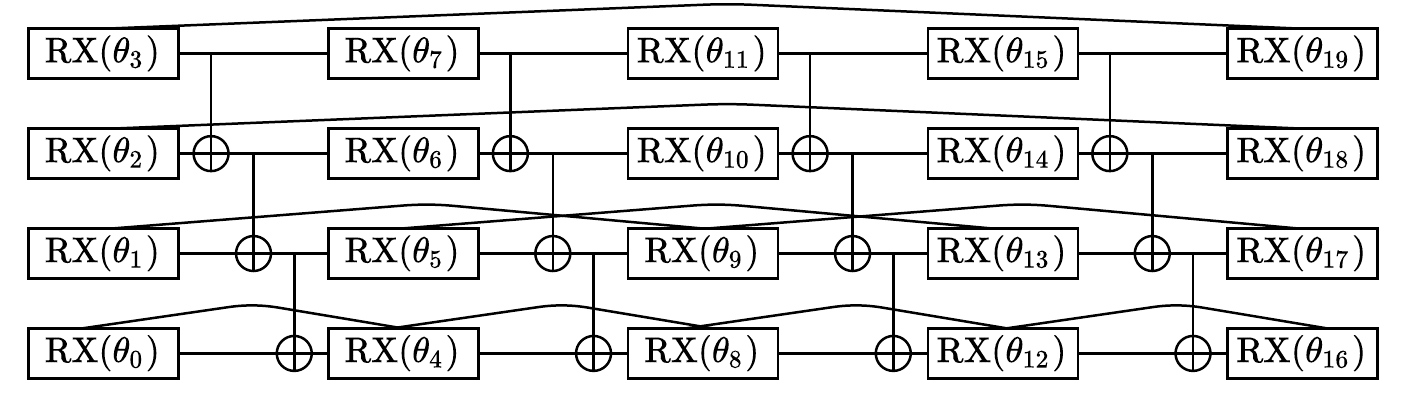}
    \vspace{-10pt}
    \caption{Parameter combination result for 4-qubit RX-CX ansatz with linear entanglement. Parameters that can be combined are connected by a curve.}
    \vspace{-10pt}
    \label{fig:parametercombination}
\end{figure}

\subsection{Order of the CX Gates}
We then investigate the variants of this RX-CX ansatz and we first modify the CX gates in the entanglement layer.
In the following proposition, we show that our result in Proposition~\ref{prop:rx_cx_upperbound} holds when changing the order of the CX gates in a linear entanglement layer.

\begin{proposition}[CX order does not matter]\label{prop:rxcx_cnotorder}
An n-qubit RX-CX ansatz with modified linear entanglement
has at most $\lfloor\frac{3n^2+1}{4}\rfloor $ effective parameters. The modified linear entanglement has alternated CX order compared with the original linear entanglement as shown in Figure~\ref{fig:entanglementblock} (b)(c).
\end{proposition}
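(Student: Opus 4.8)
The plan is to reduce the modified-order case to Proposition~\ref{prop:rx_cx_upperbound} by observing that the reduction underlying that bound depends only on the \emph{connectivity} of the CX gates, not on their temporal order. The single algebraic fact driving the X-rule (Proposition~\ref{prop:x_rule}) is that an $\mathrm{RX}$ gate on the \emph{target} wire of a CX commutes through it: since $\mathrm{CX}=\ket{0}\bra{0}\otimes I+\ket{1}\bra{1}\otimes\mathrm{X}$ and $\mathrm{RX}(\theta)$ commutes with both $I$ and $\mathrm{X}$, we have $\mathrm{CX}\,(I\otimes\mathrm{RX}(\theta))=(I\otimes\mathrm{RX}(\theta))\,\mathrm{CX}$. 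This commutation is purely local and makes no reference to where the CX sits inside the entanglement layer, which is exactly what should make the effective-parameter count insensitive to reordering.

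First I would isolate the two elementary moves that generate the reorderings of Figure~\ref{fig:entanglementblock}(b)(c): (i) sliding an $\mathrm{RX}$ through a CX on its target wire, and (ii) commuting two CX gates that act on disjoint qubit pairs. Next, I would replay the reduction that proves Proposition~\ref{prop:rx_cx_upperbound} on the reordered circuit. The observation is that each step of that reduction invokes only move (i) or the merging of co-wire $\mathrm{RX}$ gates ($\theta\mapsto\theta_1+\theta_2$); whether a given pair of parameters can be brought together is governed by the target/control role of each wire on each chain edge, which every reordering in Figure~\ref{fig:entanglementblock}(b)(c) preserves edge by edge (reordering changes \emph{when} a CX acts, not which qubit is its control or target). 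Hence the same structural data feeds the same counting argument, the combination picture of Figure~\ref{fig:parametercombination} transfers up to a relabeling, and the bound $\lfloor\frac{3n^2+1}{4}\rfloor$ is reproduced.

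The main obstacle is that two CX gates sharing a qubit do \emph{not} commute, so I cannot simply permute a modified layer back into the standard linear order and quote Proposition~\ref{prop:rx_cx_upperbound} verbatim. I would sidestep this by never claiming unitary equivalence of the two ansatz families — only that their effective-parameter counts agree. Concretely, I would re-run the inductive bookkeeping of the linear-case argument on the reordered layer, checking at each qubit that the reordering acts only as a relabeling of which $\mathrm{RX}$ parameters collapse together and never produces a new independent parameter. Verifying that this relabeling never pushes the total past $\lfloor\frac{3n^2+1}{4}\rfloor$ is the technical heart of the proof; everything else is a direct transcription of the analysis already established for the standard linear entanglement.
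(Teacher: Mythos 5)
Your high-level strategy --- keep the same combination bookkeeping and argue that reordering the CX gates within a layer cannot create new independent parameters --- is the same one the paper follows, but the justification you give for the crucial step is both incomplete and aimed at the wrong mechanism. You identify the target-wire commutation $\mathrm{CX}\,(I\otimes\mathrm{RX}(\theta))=(I\otimes\mathrm{RX}(\theta))\,\mathrm{CX}$ as ``the single algebraic fact driving the X-rule,'' and from the observation that reordering preserves each gate's control/target assignment you conclude that ``the same structural data feeds the same counting argument.'' That inference does not go through. The hard part of the linear-entanglement proof is not the target-wire commutation but the \emph{control}-wire propagation of Lemma~\ref{lemma:xpass}: pushing an $\mathrm{X}$ through the control of a CX spawns an $\mathrm{X}$ on the target qubit, and the whole counting argument rests on tracking how these spawned $\mathrm{X}$ gates cascade down the qubit chain and cancel in pairs (the function $p$ and its iterates in Lemma~\ref{lemma:x_string}). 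That cascade \emph{is} sensitive to the temporal order of CX gates sharing a qubit: whether $\mathrm{CX}(k,k-1)$ acts before or after $\mathrm{CX}(k-1,k-2)$ determines on which side of $\mathrm{CX}(k-1,k-2)$ the spawned $\mathrm{X}$ lands, and the paper shows the two cases induce genuinely different propagation maps ($p_1((1,1))=(1,0)$ versus $p_2((1,1))=(0,1)$).

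Consequently the step you defer as ``the technical heart'' --- verifying that the relabeling never produces a new independent parameter --- is exactly where a new argument is required, and you never supply it. The paper's proof handles this by proving, by induction on the layer width, that for \emph{every} word $p_{x_0}\circ\cdots\circ p_{x_{l_1-1}}$ in the two propagation maps, the image of the alternating string $(1,0,1,0,\dots)$ retains an even number of $1$'s until exactly $l-2$ applications, at which point a single $1$ survives; this parity invariant is what guarantees the spawned $\mathrm{X}$ gates still cancel and the combination pattern $\theta_{i+nj}\leftrightarrow\theta_{i+nj+n2^{\lceil\log_2(i+1)\rceil}}$ carries over, giving the same bound $\lfloor\frac{3n^2+1}{4}\rfloor$. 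It also has to rework the layer-unitary recursion, since reordering moves $X_{k-1}$ from the right to the left of $U^{k-1}$ in the block decomposition. Without an argument of this kind --- uniform over all interleavings of the two elementary orderings --- your proposal asserts rather than proves the proposition.
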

\myproof{Postponed to appendix~\ref{sec:order_cnot}.}

Another option to change the CX gates in the entanglement layer is to employ the so-called  alternating entanglement layer. 
The architecture of RX-CX ansatz with alternating entanglement layer is shown in Figure~\ref{fig:rxcxlayer} (b).
In such architecture, we first have one layer of RX gates followed by one layer of CX gates applied on qubit pairs $(0, 1), (2, 3), \dots$.
We then have another layer of RX gates followed by one layer of CX gates applied on qubit pairs $(1, 2), (3, 4), \dots$.
The CX gates are applied to non-overlapping qubit pairs in one layer.

Comparing to the RX-CX ansatzes with linear entanglement, the RX-CX ansatzes with alternating entanglement have more single-qubit gates (or equivalently, more parameters) per CX gate.
Such property would be desirable if all parameters are effective parameters.
Because the CX gate has relatively high execution overhead (high error rate and long gate latency) and an efficient ansatz should  support more effective parameters with fewer CX gates. 
However, a direct corollary of Proposition~\ref{prop:rxcx_cnotorder} is that employing the alternating entanglement layer in the RX-CX ansatz \textit{cannot} improve the upper bound of the number of effective parameters.

\begin{corollary}\label{coro:rxcx-AL-layer}
A $(2L)$-layer RX-CX ansatz with alternating entanglement can be reduced to an $L$-layer RX-CX ansatz with linear entanglement.
\end{corollary}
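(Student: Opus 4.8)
The plan is to show that every consecutive pair of alternating layers collapses into a single linear entanglement layer, so that $L$ such pairs yield an $L$-layer linear ansatz and inherit its effective-parameter bound. First I would isolate two consecutive alternating layers. Writing them in time order (applied left to right) as $R_1\,E\,R_2\,O$, here $R_1$ and $R_2$ are layers of $\mathrm{RX}$ gates on all qubits, $E$ is the even CX layer acting on pairs $(0,1),(2,3),\dots$, and $O$ is the odd CX layer acting on pairs $(1,2),(3,4),\dots$. The key observation, which generalizes Proposition~\ref{prop:x_rule}, is that $\mathrm{X}\,\mathrm{RX}(\theta)=\mathrm{RX}(\theta)\,\mathrm{X}$, so an $\mathrm{RX}$ gate placed on the \emph{target} qubit of a CX commutes through that CX.

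Next I would use this commutation rule to remove the intervening layer $R_2$. The point is that, under the control-target convention of the figure, every qubit is the target of a CX in $E$, or the target of a CX in $O$, or untouched by $O$. The odd-indexed qubits are targets of $E$, so their $\mathrm{RX}$ gates commute backward through $E$ (trivially past the disjoint CX gates acting on other qubits) and then merge, via the $\theta_1+\theta_2$ rule of Proposition~\ref{prop:x_rule}, into $R_1$. The remaining (even-indexed) $\mathrm{RX}$ gates of $R_2$ are either targets of $O$ or act on a qubit untouched by $O$, so they commute forward through $O$ into a trailing layer $R_2''$. After this step the block reads $R_1'\,E\,O\,R_2''$, with the two entanglement blocks now adjacent.

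I would then observe that the CX gates of $E$ followed by those of $O$ are exactly the consecutive pairs $(0,1),(1,2),(2,3),\dots$, i.e., the same gate set as a single linear entanglement layer but in a permuted order. By Proposition~\ref{prop:rxcx_cnotorder} (CX order does not matter), this reordered block has the same effective-parameter structure as a standard linear entanglement layer. The leftover layer $R_2''$ is absorbed into the leading $\mathrm{RX}$ layer of the next alternating pair. Iterating this argument over all $L$ pairs transforms the $(2L)$-layer alternating ansatz into an $L$-layer linear ansatz.

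The step I expect to be the main obstacle is the bookkeeping of the intervening layer $R_2$: one must verify that, for the chosen control-target convention, \emph{every} $\mathrm{RX}$ gate of $R_2$ can be routed either backward through $E$ or forward through $O$. This rests on the disjointness of the pairs within each layer together with the fact that each qubit is a target of exactly one of $E,O$ or is left untouched by $O$; one must also track the $\mathrm{RX}$ gates that spill past the final layer at the boundary of the circuit. Once this routing is established, the equivalence to the $L$-layer linear ansatz follows immediately from Propositions~\ref{prop:x_rule} and~\ref{prop:rxcx_cnotorder}.
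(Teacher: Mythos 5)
Your proposal is correct and follows essentially the same route as the paper's proof: apply the X-rule to route every intermediate $\mathrm{RX}$ gate either backward through the even CX sublayer or forward through the odd one, merge the resulting adjacent CX sublayers into a single linear entanglement layer, and invoke Proposition~\ref{prop:rxcx_cnotorder} to dispose of the reordering. Your version is somewhat more careful than the paper's sketch in that it makes the control/target routing and the boundary bookkeeping explicit, but the underlying argument is identical.
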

\myproof{Postponed to appendix~\ref{sec:order_cnot}.}

\subsection{More CX Gates in the Entanglement Layer}

Another direction to change the ansatz architecture is to have more CX gates in the entanglement layer.
We propose the next two propositions to study the effect of adding CX gates in the entanglement layer. 
The construction of entanglement layer, even with CX gates only, is very complex because we can have many CX gates on arbitrary two qubits in arbitrary order.
Here we explore two approaches to increase the number of CX gates in the entanglement layer.

One direct change is to repeat the entanglement layer multiple times.
However, since the entanglement layer only has CX gates, repeating such a layer may not yield a high expressive power.

\begin{proposition}[Repetition of entanglement layer]\label{prop:rep_deg}
For any entanglement layer that is purely constructed by CX gates, there exists an integer $k$, s.t. $E^k = I$ where $E$ is the unitary transformation represented by the entanglement layer and $I$ is the identity operator.
\end{proposition}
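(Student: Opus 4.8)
The plan is to exploit the special structure of the CX gate on the computational basis. The key observation is that a CX gate is a $0/1$ permutation matrix: with control qubit $c$ and target qubit $t$, it sends a basis state $\ket{x}$ (where $x \in \{0,1\}^n$ is its bit-string label) to $\ket{x'}$, where $x'$ agrees with $x$ except that bit $t$ is replaced by $x_t \oplus x_c$. First I would record that this action is $\mathbb{F}_2$-linear on the index: regarding $x$ as a column vector over $\mathbb{F}_2$, the gate realizes $x \mapsto T_{t,c}\,x$, where $T_{t,c} = I_n + E_{t,c}$ is an elementary transvection (the $n\times n$ identity matrix plus a single off-diagonal $1$). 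Since $T_{t,c}$ is its own inverse over $\mathbb{F}_2$, it lies in the general linear group $GL(n,\mathbb{F}_2)$.

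Next I would compose these maps. An entanglement layer is by assumption a product of CX gates, so its action on the basis labels is the composition of the corresponding transvections. Because $GL(n,\mathbb{F}_2)$ is closed under products, this composition is again an invertible linear map $M \in GL(n,\mathbb{F}_2)$, and the unitary $E$ of the entanglement layer therefore satisfies $E\ket{x} = \ket{Mx}$ for every one of the $2^n$ basis states. (The order in which the gates are applied only changes which element $M$ we obtain, not the fact that $M \in GL(n,\mathbb{F}_2)$.)

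To finish, I would invoke finiteness. The group $GL(n,\mathbb{F}_2)$ has order $\prod_{i=0}^{n-1}(2^n - 2^i)$, which is finite, so its element $M$ has finite multiplicative order; that is, there exists an integer $k$ with $M^k = I_n$. Then $E^k\ket{x} = \ket{M^k x} = \ket{x}$ for all basis states, and since a linear operator is determined by its action on a basis, $E^k = I$, as claimed.

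I expect the only delicate point to be verifying that no relative phases accumulate, so that $E$ is a genuine permutation of the basis rather than a permutation dressed with phase factors. This is exactly the property that fails for parameterized gates such as RZ, and it is precisely what makes the finite-order conclusion possible here. Once that is secured, the argument reduces to the elementary fact that every element of the finite group $GL(n,\mathbb{F}_2)$ (equivalently, of the finite group of $2^n\times 2^n$ permutation matrices) has finite order.
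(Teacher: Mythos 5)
Your proof is correct and rests on the same mechanism as the paper's: $E$ is a permutation matrix, the permutation matrices of fixed dimension form a finite group, and every element of a finite group has finite order. The paper states this in one line, whereas your additional observation that the permutation is induced by an element of $GL(n,\mathbb{F}_2)$ (a product of transvections acting on the basis labels) is a correct refinement that would yield a sharper bound on $k$, but it is not needed for the statement.
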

\myproof{Postponed to appendix~\ref{sec:more_cnot}.}

This result suggests
that repeating an entanglement layer can only yield a finite number of different unitary transformations.
Actually, for any non-negative integer $t = sk + r$, $E^t = (E^k)^sE^r = I^s E^{r} = E^r$ by Proposition~\ref{prop:rep_deg}. Thus, the set of all possible numbers of repetitions is equivalent to the finite set $\{ I, E, E^2, \cdots, E^{k-1} \}$ which forms a cyclic group of order $k$ under matrix multiplication.
Therefore, it may not be efficient to repeat the entanglement layer since it will significantly increase the number of  expensive CX gates.

We also find that the functionality of the CX-based entanglement layer is limited and we formulate this statement in the next proposition.
\begin{proposition}[Limitation of CX-based entanglement layer]\label{prop:limit_cx}
For a given n-qubit unitary transformation $U$, $\forall i, j \ge 2$, it is possible to move $i$-th column of $U$ to $j$-th column of $U$ with a finite number of CXs, but it is not possible to achieve column swap between two arbitrary columns when $n \ge 3$. %
\end{proposition}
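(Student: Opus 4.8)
The plan is to recognize that a CX gate acts on computational basis states as an $\mathbb{F}_2$-linear bijection, so that right-multiplying $U$ by any product of CX gates permutes the columns of $U$ according to an invertible linear map over $\mathbb{F}_2$. Concretely, CX with control $c$ and target $t$ sends the basis label $x \in \mathbb{F}_2^n$ to $x \oplus x_c e_t$, i.e. to $Mx$ with $M = I + E_{tc} \in GL(n,\mathbb{F}_2)$, where $E_{tc}$ is the matrix unit. Hence every CX circuit induces a column permutation $x \mapsto Lx$ for some $L \in GL(n,\mathbb{F}_2)$, and since these elementary transvections generate $SL(n,\mathbb{F}_2) = GL(n,\mathbb{F}_2)$ (over $\mathbb{F}_2$ every invertible matrix has determinant $1$), the achievable column permutations are exactly those coming from $GL(n,\mathbb{F}_2)$. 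Using $1$-based indexing, column $1$ carries the all-zeros label, which is always fixed because linearity forces $L0 = 0$; the columns with index $\ge 2$ carry the nonzero vectors. This already explains the hypothesis $i,j \ge 2$: column $1$ can never be moved.

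For the first claim I would invoke transitivity of $GL(n,\mathbb{F}_2)$ on nonzero vectors: given the labels $v_i, v_j$ of columns $i, j \ge 2$, there is $L \in GL(n,\mathbb{F}_2)$ with $Lv_i = v_j$, which the transvection generators realize as an explicit CX circuit moving column $i$ to column $j$. A constructive version simply reduces $v_i$ to $e_1$ by Gaussian elimination over $\mathbb{F}_2$ (each step clears a set bit by a CX controlled on a fixed set bit, with coordinate relabelings done by SWAPs, themselves products of three CX gates) and composes with the inverse reduction of $v_j$.

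For the second claim, suppose toward contradiction that some CX circuit realizes the pure transposition swapping columns $i$ and $j$ (both $\ge 2$) while fixing every other column. This permutation equals $x \mapsto Lx$ for a linear $L$, so its fixed-point set $\{x : Lx = x\} = \ker(L - I)$ is an $\mathbb{F}_2$-subspace and therefore has cardinality $2^k$ for some $k$. But a transposition fixes every basis label except the two swapped ones, namely $2^n - 2$ of them. For $n \ge 3$ we have $2^n - 2 = 2(2^{n-1}-1)$ with $2^{n-1}-1$ odd and greater than $1$, so $2^n - 2$ is not a power of two, contradicting that the fixed set is a subspace. Hence no such circuit exists. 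This also pinpoints why the bound is $n \ge 3$: for $n = 2$ one has $2^n - 2 = 2 = 2^1$, a legitimate subspace size, consistent with $GL(2,\mathbb{F}_2) \cong S_3$ already containing every transposition of the three nonzero labels.

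The conceptual crux is the first step, identifying the CX action on basis labels with the transvection generators of $GL(n,\mathbb{F}_2)$; once that linear structure is in hand the move statement is just transitivity and the no-swap statement is a one-line counting argument. The point demanding care is the fixed-point count: one must verify that a pure transposition fixes exactly $2^n - 2$ labels (the zero label included) and that the fixed set of a linear map is forced to be a subspace, so that its size must be a power of two.
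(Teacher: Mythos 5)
Your proof is correct, and it takes a genuinely different and more conceptual route than the paper's. The paper proves the positive claim by an explicit induction on the number of qubits, constructing CX sequences that move the second column to any column of index at least $2$ (base cases $\mathrm{CX}(0,1)\mathrm{CX}(1,0)$ and $\mathrm{CX}(0,1)$ for $n=2$, then splitting the index range at $2^k$); for the negative claim it merely exhibits the swap of columns $2^{n-1}+1$ and $2^{n-1}+2$ and asserts, without further argument, that this is a controlled operation not synthesizable from CX gates alone. You instead identify the action of a CX circuit on basis labels with a transvection in $GL(n,\mathbb{F}_2)$, so that the achievable column permutations are exactly the invertible $\mathbb{F}_2$-linear maps on labels. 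This buys you both halves at once: the positive claim is transitivity of $GL(n,\mathbb{F}_2)$ on nonzero vectors (and covers arbitrary $i,j\ge 2$ directly, where the paper only treats $i=2$), and the negative claim follows from the observation that the fixed-point set of a linear map is a subspace of size $2^k$, whereas a pure transposition fixes $2^n-2$ labels, not a power of two for $n\ge 3$. This counting argument actually supplies the justification the paper's counterexample leaves implicit, and your remark that $n=2$ is the genuine boundary case (since $2^2-2=2$ and $GL(2,\mathbb{F}_2)\cong S_3$) is a nice sanity check absent from the paper. The only bookkeeping point worth making explicit is that right-multiplying $U$ by the permutation matrix $E$ with $E\ket{x}=\ket{Lx}$ sends the column labelled $Lx$ to position $x$, i.e.\ the induced column permutation is governed by $L^{-1}$; since $GL(n,\mathbb{F}_2)$ is a group this does not affect either conclusion.
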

\myproof{Postponed to appendix~\ref{sec:more_cnot}.}

An entanglement layer with CX gates is not universal and it cannot even implement swapping two columns.
As a result, to increase the number of effective parameters per CX gate, it would be better to have more parameterized single-qubit gates between the entanglement layers.

\subsection{An Efficient Ansatz Architecture}
Finally, our objective is to have an efficient ansatz architecture design.
We are not going to give the optimal solution since it is a highly complicated problem and the optimal ansatz might depend on the optimization objective function.
Instead, we aim to give a better ansatz architecture design based on the propositions above.

We find that an ansatz with two types of single-qubit gates, RX and RZ, and alternating entanglement would be more efficient as it can give a higher number of effective parameters per CX gate. 
The architecture of this RX-RZ-CX ansatz is shown in Figure~\ref{fig:rxcxlayer} (c).
The alternating entanglement layers are the same as those in the RX-CX ansatz. 
The only difference is that we have two single-qubit gates RX and RZ on each qubit in each single-qubit layer.
We believe that such RX-RZ-CX ansatz with alternating entanglement will be efficient since the RZ gate provides rotations along the $z$-axis which is beyond the rotations along the $x$-axis provided by the RX gates and CX gates.
Also, the alternating entanglement employs relatively few CX gates to support one single-qubit layer.

\begin{theorem}[Efficiency of the RX-RZ-CX ansatz with alternating entanglement]\label{theo:rxrzcxefficiency}
For an $n$-qubit $2L$-layer RX-RZ-CX ansatz with alternating entanglement, the numbers of effective parameters (w.r.t parameter combination) $w_{xz}$ satisfies
$w_{xz} = (4n-3)*L$.
\end{theorem}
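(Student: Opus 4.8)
The plan is to prove the equality $w_{xz}=(4n-3)L$ by pinning down matching upper and lower bounds, both resting on two elementary rules for how a single-qubit rotation passes through a CX gate. First I would record these rules. The X-rule of Proposition~\ref{prop:x_rule} says that an $\mathrm{RX}$ gate on the \emph{target} of a CX commutes with it; its mirror image, a Z-rule, says that an $\mathrm{RZ}$ gate on the \emph{control} commutes with it. Both are immediate from $\mathrm{CX}=\ket{0}\!\bra{0}\otimes I+\ket{1}\!\bra{1}\otimes\mathrm{X}$ together with $\mathrm{X}\,\mathrm{RX}(\theta)=\mathrm{RX}(\theta)\,\mathrm{X}$ and the fact that $\mathrm{RZ}$ is diagonal on the control line, so I would dispatch them by a one-line calculation. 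Their consequence is that a \emph{movable} rotation can be slid across its adjacent entangling sub-layer and fused with the same-axis rotation in the neighbouring single-qubit layer, collapsing two parameters into one.

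For the upper bound I would push every movable rotation to its fusion partner and tally the reductions one double-layer at a time. A single double-layer consists of a single-qubit layer, the A-type sub-layer on pairs $(0,1),(2,3),\dots$, a second single-qubit layer, and the B-type sub-layer on pairs $(1,2),(3,4),\dots$. Each interior qubit switches between control and target across the two sub-layers, whereas the two boundary qubits (qubit $0$ and the last qubit) sit idle in one of the sub-layers. Carrying out the bookkeeping of which of the $4n$ rotations in the block are movable, and taking care that a rotation consumed in one fusion is not reused in another, I expect the idle boundary qubits to force exactly three rotations per block to become redundant, which would give $w_{xz}\le(4n-3)L$; the boundary effect is precisely what breaks the count away from a clean multiple of $n$ and yields the constant $-3$.

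The crux is the lower bound: the $(4n-3)L$ rotations that survive this reduction must be shown to be genuinely effective, so that no further parameter can be eliminated. This is exactly the property that separates the RX-RZ-CX ansatz from the RX-CX ansatz, whose effective-parameter count saturates and never grows with depth (Proposition~\ref{prop:rx_cx_upperbound} and Corollary~\ref{coro:rxcx-AL-layer}); here each added double-layer must contribute a fixed positive increment indefinitely. I would establish this by induction on $L$: assuming the reduced $L$-block circuit carries $(4n-3)L$ independent parameters, I would show that appending one more double-layer introduces $4n$ raw parameters, that the same internal mechanism strips exactly three of them, and---the delicate point---that gluing the new block onto the old one manufactures no additional redundancy across the interface. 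Proving the algebraic independence of the survivors is the main obstacle; I would attack it by showing the parameter-to-unitary Jacobian has full rank $(4n-3)L$ at a generic point, equivalently that the rotation generators, conjugated by the intervening CX and rotation gates, stay linearly independent in the Lie algebra $\mathfrak{u}(2^{n})$. The boundary bookkeeping and this rank/independence argument are where the genuine work lies; the two commutation rules themselves are trivial.
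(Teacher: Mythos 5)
Your proposal is a plan rather than a proof, and both of its load-bearing steps are left undone; moreover, the mechanism you propose for the upper bound is not the one that actually operates in this ansatz. The X-rule and your Z-rule do let a target-side $\mathrm{RX}$ or a control-side $\mathrm{RZ}$ hop across a CX, but in the RX-RZ-CX ansatz every qubit in a single-qubit layer carries \emph{both} an $\mathrm{RX}$ and an $\mathrm{RZ}$, so whichever rotation commutes through the CX immediately lands against an opposite-axis rotation and cannot fuse with anything; the sliding mechanism therefore produces no same-axis fusions on active qubits. The reduction that actually yields the $-3$ happens on the \emph{idle} qubits, where there is no CX to commute through at all: two consecutive single-qubit layers concatenate into $\mathrm{RX}\,\mathrm{RZ}\,\mathrm{RX}\,\mathrm{RZ}$, a four-parameter expression for a generic single-qubit unitary, which carries only three Euler angles. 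That over-parameterization of $SU(2)$, not commutation through CX, is the source of the lost parameters, and your count of ``exactly three per block'' is asserted (``I expect'') rather than derived.

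The lower bound, which you correctly identify as the crux, is also not established: you say you would prove full rank of the parameter-to-unitary Jacobian, equivalently linear independence of the conjugated generators in $\mathfrak{u}(2^n)$, but you give no argument for why this rank is achieved, and for a circuit of arbitrary depth that is exactly the hard part. The paper avoids the Jacobian entirely with an elementary algebraic obstruction: to show that two same-axis angles $\theta_i,\theta_j$ separated by an interleaved opposite-axis rotation cannot be combined, it specializes the interleaved $\mathrm{RZ}$ to angle $\tfrac{\pi}{2}$ and expands $\mathrm{RX}(\theta_i)\mathrm{RZ}(\tfrac{\pi}{2})=\tfrac{\sqrt2}{2}\mathrm{RX}(\theta_i)+\tfrac{\sqrt2}{2}\mathrm{RZ}(\pi)\mathrm{RX}(-\theta_i)$, so that both $\theta_i+\theta_j$ and $\theta_i-\theta_j$ appear in entries of the resulting unitary and no single merged angle can reproduce the dependence. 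Some concrete witness of this kind (or an actual rank computation) is needed before your induction on $L$ can close; as written, the interface step ``gluing the new block manufactures no additional redundancy'' is precisely the unproven content of the theorem. Note also that the paper's own appendix argument only delivers $w_{xz}\ge(4n-3)L$, so if you pursue the stated equality you will additionally need the idle-qubit Euler count above to supply the matching upper bound.
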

\myproof{Postponed to appendix~\ref{sec:rx-rz-cx-eff}.}

\begin{corollary}[] Theorem~\ref{theo:rxrzcxefficiency} can be generalized to other types of HEA such as RY-CX, RX-RY-CX and RY-RZ-CX. 
The number of effective parameters of an $n$-qubit $2L$-layer RY-CX, RX-RY-CX, or RY-RZ-CX ansatz with alternating entanglement is $w_{y} = (n-1)*2L$, $w_{yz} = (4n-3)*L$, or $w_{xy} = (4n-3)*L$, respectively.
\end{corollary}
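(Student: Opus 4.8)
The plan is to avoid re-deriving everything from scratch and instead \emph{reduce} each of the three variants to a case that is already settled, exploiting the fact that $\mathrm{RX}$, $\mathrm{RY}$, and $\mathrm{RZ}$ differ only by a single-qubit change of basis. I would first record the Clifford conjugation identities
\begin{align*}
\mathrm{RZ}(\theta) = \mathrm{H}\,\mathrm{RX}(\theta)\,\mathrm{H}, \qquad \mathrm{RY}(\theta) = \mathrm{S}\,\mathrm{RX}(\theta)\,\mathrm{S}^{\dagger}, \qquad \mathrm{RY}(\theta) = (\mathrm{S}\mathrm{H})\,\mathrm{RZ}(\theta)\,(\mathrm{S}\mathrm{H})^{\dagger},
\end{align*}
where $\mathrm{H}$ is the Hadamard and $\mathrm{S}=\mathrm{diag}(1,i)$ is the phase gate, so that $\mathrm{H}\mathrm{X}\mathrm{H}=\mathrm{Z}$ and $\mathrm{S}\mathrm{X}\mathrm{S}^{\dagger}=\mathrm{Y}$. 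The key structural observation is that the number of effective parameters is invariant under (i) conjugating the \emph{entire} ansatz by a fixed tensor product $V=\bigotimes_i V_i$ of single-qubit gates, since $U(\bm{\theta})\mapsto V\,U(\bm{\theta})\,V^{\dagger}$ is a bijection on the reachable unitaries that commutes with tuning $\bm{\theta}$, and under (ii) affine reparameterizations $\theta\mapsto \pm\theta+c$ of the individual rotation angles. Neither operation can create or destroy a parameter combination, so two ansatz families related by such a transformation have identical effective-parameter counts.

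For RX-RY-CX and RY-RZ-CX I would apply this reduction to land on the RX-RZ-CX ansatz of Theorem~\ref{theo:rxrzcxefficiency}. For instance, conjugating an RY-RZ-CX ansatz qubitwise by $\mathrm{S}^{\dagger}$ sends every $\mathrm{RY}$ to $\mathrm{RX}$, fixes every $\mathrm{RZ}$ (both diagonal, hence commuting with $\mathrm{S}$), and leaves the single-qubit-layer structure intact; an analogous basis change handles RX-RY-CX. Since the single-qubit layers then match RX-RZ-CX exactly, inheriting $w_{xz}=(4n-3)L$ would be immediate \emph{were it not for the entanglement layer}, which the same conjugation also transforms.

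This transformed entangler is the main obstacle. Conjugating a $\mathrm{CX}$ by $\mathrm{S}^{\dagger}$ on both its qubits turns it into a controlled-$\mathrm{Y}$ gate, $\ket{0}\bra{0}\otimes I + \ket{1}\bra{1}\otimes \mathrm{Y}$, and the non-diagonal basis changes needed for RX-RY-CX produce a general two-qubit Clifford that is only \emph{locally equivalent} to $\mathrm{CX}$. The crux of the argument is therefore to peel the resulting \emph{fixed} single-qubit Clifford corrections off each entangler --- e.g.\ $\mathrm{CY}=(I\otimes \mathrm{S})\,\mathrm{CX}\,(I\otimes \mathrm{S}^{\dagger})$ --- and absorb them into the adjacent parameterized single-qubit layers, so that up to fixed boundary gates one recovers a genuine RX-RZ-CX ansatz with a possibly reordered entanglement. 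Here I would invoke Proposition~\ref{prop:rxcx_cnotorder} (reordering $\mathrm{CX}$s does not change the count) and re-establish the combination rules underlying Theorem~\ref{theo:rxrzcxefficiency}, i.e.\ the analogues of the X-rule of Proposition~\ref{prop:x_rule}, for the dressed layers, confirming that the $(4n-3)L$ count survives.

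Finally, RY-CX must be treated separately because it carries only one rotation axis. The same reduction turns it into a single-axis ansatz whose rotations commute with \emph{neither} the control action nor the target action of the entangler (unlike $\mathrm{RX}$, which commutes through the target of $\mathrm{CX}$, and unlike $\mathrm{RZ}$, which commutes through the control). Consequently none of the large-scale collapse responsible for the quadratic saturation bound of Proposition~\ref{prop:rx_cx_upperbound} occurs, and the count grows linearly in the depth. Re-running the parameter-combination bookkeeping of Theorem~\ref{theo:rxrzcxefficiency} with a single rotation per qubit, the only combinable parameters are those on the qubit left untouched by the $\mathrm{CX}$s of a given alternating layer, yielding one combination per layer and hence $w_{y}=(n-1)\cdot 2L$. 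I expect this non-commuting single-axis bookkeeping, together with verifying that the dressed entanglers leave the count invariant, to be the most delicate part of the proof.
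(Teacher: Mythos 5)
Your high-level idea --- that the effective-parameter count is invariant under conjugating the entire ansatz by a fixed local unitary $V=\bigotimes_i V_i$ and under affine reparameterizations of the angles --- is sound, and it even echoes the paper's own treatment of the RY-CX case (the paper invokes $\mathrm{RY}(\theta)=\mathrm{RZ}(\tfrac{\pi}{2})\,\mathrm{RX}(\theta)\,\mathrm{RZ}(-\tfrac{\pi}{2})$). But the reduction has a genuine gap exactly where you flag ``the crux'': the basis change that normalizes the rotation axes necessarily deforms every $\mathrm{CX}$, and the leftover local Clifford corrections cannot all be absorbed into the adjacent parameterized layers. Concretely, for RY-RZ-CX the conjugation by $\bigotimes_i \mathrm{S}^{\dagger}$ turns each entangler into a controlled-$(-\mathrm{Y})$, i.e.\ $(\mathrm{I}\otimes \mathrm{S}^{\dagger})\,\mathrm{CX}\,(\mathrm{I}\otimes \mathrm{S})$; the $\mathrm{S}$ on one side absorbs into a neighbouring $\mathrm{RZ}$ as a shift $\lambda\mapsto\lambda+\tfrac{\pi}{2}$, but the $\mathrm{S}^{\dagger}$ on the other side lands between the $\mathrm{CX}$ target and the next layer's $\mathrm{RX}$: it does not absorb into an $\mathrm{RX}$, it does not commute through the target of $\mathrm{CX}$ (pushing a $Z$-rotation through the target yields a two-qubit $ZZ$ rotation, not a local gate), and rewriting $\mathrm{RX}(\theta)\,\mathrm{S}^{\dagger}$ as $\mathrm{S}^{\dagger}\,\mathrm{RY}(-\theta)$ reintroduces the very axis you removed. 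For RX-RY-CX it is worse: no single-qubit Clifford sends $Y\mapsto Z$ while fixing both $X$ and $Z$, so the control projectors of $\mathrm{CX}$ are rotated as well and corrections appear on the control side too. Your fallback --- ``re-establish the combination rules for the dressed layers'' --- is precisely the hard content, so the reduction never actually lands on Theorem~\ref{theo:rxrzcxefficiency} and nothing is saved.

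The paper instead proves the two-axis cases directly, by the same device as Theorem~\ref{theo:rxrzcxefficiency}: fix one family of angles at $\tfrac{\pi}{2}$ and expand, e.g., $\mathrm{RX}(\theta_i)\,\mathrm{RY}(\tfrac{\pi}{2})=\tfrac{\sqrt{2}}{2}\mathrm{RX}(\theta_i)+\tfrac{\sqrt{2}}{2}\mathrm{RY}(\pi)\,\mathrm{RX}(-\theta_i)$, so that $\theta_i+\theta_j$ and $\theta_i-\theta_j$ would both appear in any merged unitary and no two parameters of the same family can combine; the only surviving combinations are the $\mathrm{RX}\,\mathrm{RZ}\,\mathrm{RX}\,\mathrm{RZ}\to u3$ collapses on the CX-free qubits, which is where the $-3L$ comes from. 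Your RY-CX bookkeeping (``one combination per layer on the untouched qubit'') produces the right total $2nL-2L$, but the claim that no other combinations exist because $\mathrm{RY}$ commutes through neither end of $\mathrm{CX}$ is asserted rather than proved; the paper closes this by writing $\mathrm{RY}$ as a conjugated $\mathrm{RX}$ or $\mathrm{RZ}$ and reusing the Theorem~\ref{theo:rxrzcxefficiency} machinery. (Note also that the appendix in fact establishes only the lower bounds $w\ge(4n-3)L$ and $w\ge(n-1)\cdot 2L$; the equalities in the statement additionally require exhibiting the combinations on the CX-free qubits, which is easy, but worth stating.)
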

\myproof{Postponed to appendix~\ref{sec:rx-rz-cx-eff}.}

The results in Theorem~\ref{theo:rxrzcxefficiency} shows that the number of effective parameters per CX gate of an RX-RZ-CX ansatz with alternating entanglement is far larger than that of an RX-CX ansatz with linear entanglement if they have the same number of layers.
As a result, the RX-RZ-CX ansatzes are expected to outperform the RX-CX ansatzes in VQA applications because the relative expressive power is higher and better solutions can possibly be covered by the expressive ansatzes.

\section{Evaluation}

In this section, we evaluate our major results in the last section by comparing the RX-RZ-CX ansatz with alternating entanglement and other ansatz architectures on various VQA benchmarks.

\subsection{Experimental Setup}
\textbf{Benchmarks:}
We select seven VQA benchmarks. The first four benchmarks are the ground state energy simulation of four molecules (${\rm H_2}$, ${\rm LiH}$, ${\rm BeH_2}$, ${\rm HF}$) using the variational quantum eigensolver (VQE)~\cite{Peruzzo2014AVE}.
The second three benchmarks are to solve three combinatorial optimization problems, the graph max-cut problem (MC), the vertex cover problem (VC), and the traveling salesman problem (TSP) using the quantum approximate optimization algorithm (QAOA)~\cite{Farhi2014AQA}.
The number of qubits and the minimum (optimal value) of the cost function of the benchmarks are listed in Table~\ref{tab:benchmark}.%

\begin{table}[b]
\caption{Benchmark information}
\resizebox{0.98\columnwidth}{!}{
\begin{tabular}{|c|c|c|c|c|c|c|c|}
\hline
Name             & VQE-${\rm H_2}$ & VQE-${\rm LiH}$ & VQE-${\rm BeH_2}$ & VQE-${\rm HF}$ & QAOA-MC & QAOA-VC & QAOA-TSP \\ \hline
\# of Qubits     & 4               & 6               & 8                 & 10             & 4       & 6       & 9        \\ \hline
optimal cost $E$ & -1.1373         & -7.8810         & -15.048           & -98.593        & 2.0     & 3.0     & 202.0    \\ \hline
\end{tabular}
}
\vspace{-10pt}
\label{tab:benchmark}%
\end{table}

\textbf{Ansatz architectures:} We use RX-CX ansatz with linear entanglement, denoted by `RX-CX-L', RX-CX ansatz with alternating entanglement, denoted by `RX-CX-A', and RX-RZ-CX ansatz with alternating entanglement, denoted by `RX-RZ-CX-A' , to illustrate the importance of effective parameters and the effect of the alternating entanglement. 
These ansatzes are either generated by Qiskit or obtained from previous work~\cite{cerezo2020cost, Nakaji2020ExpressibilityOT}.

\textbf{Metric:} We evaluate the approximation error to compare the performance of different ansatzes. The approximation error is defined as:
    $\epsilon = \abs{E_a - E}$,
where $E_a$ is the cost function value optimized by VQA 
, and $E$ is the minimum cost function value 
which is the smallest eigenvalue of the target observable. 
In our simulation-based experiments, $E$ can be obtained by applying the eigensolver from Numpy~\cite{Harris_2020} on the observable of the task to evaluate the performance of the ansatz architectures. 
For a practical problem that cannot be simulated classically, $E$ can be evaluated or cross-validated with other approaches, e.g., experiment results for chemistry simulation tasks.

\textbf{Implementation:} All experiments are implemented on IBM's Qiskit framework~\cite{Qiskit}. For molecule simulation problems (i.e., VQE tasks), we use the Qiskit Chemistry module and PySCF~\cite{PYSCF} driver to generate the Hamiltonian of target molecules. For combinatorial optimization tasks, we use the Qiskit Optimization module to generate the Hamiltonian of the simulated combinatorial optimization problems. 
We use the COBYLA optimizer~\cite{bos2006numerical} to optimize the parameters. 
Since today's public quantum devices cannot accommodate a full VQA training process, all experiments in this paper are performed in the Qiskit Simulator running on a server with a 6-core Intel E5-2603v4 CPU and 32GB of RAM.
Parameters are randomly initialized.

\subsection{Results}

Table~\ref{tab:result} shows the simulation result of different ansatzes on various VQA tasks. 
In general, the RX-RZ-CX-A ansatzes outperform other selected ansatz architectures with more accurate solutions for VQE chemistry simulation tasks. 
For QAOA tasks, the three types of ansatzes have comparable accuracy (all errors are smaller than $10^{-4}$). This is because the solution state of QAOA tasks is a computational basis state, which means a highly entangled final state may not be necessary and a high-quality result may be generated from simpler ansatzes.

\textbf{Effect of effective parameters:} We first compare RX-CX-L ansatz and RX-RZ-CX-A ansatz. With the same number of layers, the number of CX gates in an RX-RZ-CX-A ansatz is half of that in  an RX-CX-L ansatz. 
By Theorem~\ref{theo:rxrzcxefficiency}, RX-RZ-CX-A ansatz has more effective parameters than the RX-CX-L ansatz.
As expected, RX-RZ-CX-A ansatz can have much more accurate results than RX-CX-L ansatz and the error is reduced by $95.5\%$ for VQE tasks. 

We then compare the RX-CX-L ansatz and RX-CX-A ansatz, both of which only have parameters in RX gates.
When they share the same number of layers, we know that the RX-CX-L ansatz has twice as many effective parameters as the RX-CX-A ansatz by Corollary~\ref{coro:rxcx-AL-layer}.
The approximation error of the RX-CX-L ansatz is reduced by $37.5\%$ comparing with the error of the RX-CX-A ansatz (with the same number of layers) on VQE tasks.

These results indicate that incorporating
more effective parameters in the ansatz can improve the expressive power, and thus reduce the approximation error of the ansatz. 
There are mainly two ways to increase the number of effective parameters without changing the ansatz structure: introducing extra single-qubit gates or having more layers.
However, for the RX-CX ansatz, it is not possible to boost the number of effective parameters by adding more layers, as indicated by Proposition~\ref{prop:rx_cx_upperbound}.

\begin{table*}[t] \label{tab:expres}
  \centering
  \caption{Simulation results for VQE and QAOA tasks.
  }
    \resizebox{0.86\textwidth}{!}{
    \begin{tabular}{|c|c|c|c|c|c|c|}
    \hline
    name & ansatz  & \multicolumn{1}{c|}{\# of params.} & \multicolumn{1}{c|}{\# of CXs} & \multicolumn{1}{c|}{\# of layers} & \multicolumn{1}{c|}{ optimized cost $E_a$ } & \multicolumn{1}{c|}{ error $\epsilon$} \\
    \hline
    & RX-CX-L  & 20 & 12 & 4 & -0.83096 & 0.30634  \\
    \cline{2-7}
    VQE-${\rm H_2}$ & RX-CX-A & 20 & 6 & 4 & -0.82081 & 0.31649  \\
    \cline{3-7}
    &   & 40 & 12 & 8 & -0.83096 & 0.30634 \\
    \cline{2-7}
     & RX-RZ-CX-A  & 40 & 6 & 4 &  -1.1170 & \textbf{0.02030}  \\
    \hline
    & RX-CX-L  & 36 & 25 & 5 & -7.6532 & 0.22780  \\
    
    \cline{2-7}
    VQE-${\rm LiH}$  & RX-CX-A & 36 & 13 & 5 & -7.5202 & 0.36080 \\
    \cline{3-7}
    &   & 72 & 25 & 10 & -7.7222 & 0.15880 \\
    \cline{2-7}
    & RX-RZ-CX-A  & 72 & 13 & 5 & -7.8656 & \textbf{0.01540} \\
    \hline
    & RX-CX-L  & 48 & 35 & 5 & -14.802 & 0.24600 \\
    
    \cline{2-7}
    VQE-${\rm BeH_2}$ & RX-CX-A & 48 & 18 & 5 & -14.566 & 0.48200  \\
    \cline{3-7}
    & & 96 & 35 & 10 & -14.880 & 0.16800 \\
    \cline{2-7}
     & RX-RZ-CX-A  & 96 & 18 & 5 & -15.040 &  \textbf{0.00800} \\
    \hline
    & RX-CX-L  & 60 & 45 & 5 & -97.687 & 0.90600  \\
    
    \cline{2-7}
    VQE-${\rm HF}$ & RX-CX-A & 60 & 23 & 5 & -96.756 & 1.8370  \\
    \cline{3-7}
    &  & 120 & 45 & 10 & -97.969 & 0.62400 \\
    \cline{2-7}
     & RX-RZ-CX-A  & 120 & 23 & 5 & -98.568 & \textbf{0.02500} \\
    \hline
    & RX-CX-L  & 20 & 12 & 4 & 2.0000 & $<$ $10^{-4}$  \\
    
    \cline{2-7}
    QAOA-MC & RX-CX-A & 20 & 6 & 4 & 2.0000 & $<$ $10^{-4}$ \\
    \cline{3-7}
    &  & 40 & 12 & 8 & 2.0000 & $<$ $10^{-4}$\\
    \cline{2-7}
     & RX-RZ-CX-A  & 20 & 6 & 4 & 2.0000 & $<$ $10^{-4}$  \\
    \hline
    & RX-CX-L  & 36 & 25 & 5 & 3.0000 & $<$ $10^{-4}$ \\
    
    \cline{2-7}
    QAOA-VC  & RX-CX-A  & 36 & 13 & 5 & 3.0000 & $<$ $10^{-4}$ \\
    \cline{3-7}
    &  & 72 & 25 & 10 & 3.0000 & $<$ $10^{-4}$\\
    \cline{2-7}
    & RX-RZ-CX-A  & 72 & 25 & 5 & 3.0000 & $<$ $10^{-4}$ \\
    \hline
    & RX-CX-L  & 54  & 40 & 5 & 202.00 & $<$ $10^{-4}$ \\
    \cline{2-7}
    QAOA-TSP  & RX-CX-A  & 54 & 20 & 5 & 202.00 & $<$ $10^{-4}$\\
    \cline{3-7}
    & & 108 & 40 & 10 & 202.00 & $<$ $10^{-4}$ \\
    \cline{2-7}
    & RX-RZ-CX-A  & 108 & 20 & 5 & 202.00 & $<$ $10^{-4}$  \\
    \hline
    \end{tabular}%
    }
  \label{tab:result}%
\end{table*}

\textbf{Effect of entanglement layer architecture:} Besides the number of effective parameters, the structure of the entanglement layer may also affect the performance of the ansatz. With twice the number of layers, the RX-CX-A ansatz has the same number of effective parameters as the RX-CX-L ansatz by Corollary~\ref{coro:rxcx-AL-layer}. This indicates that the RX-CX-A ansatz needs to have more single-qubit gates and parameters than the RX-CX-L ansatz if we hope to keep the same number of effective parameters in the RX-CX-A and the RX-CX-L ansatz. 
In this case, many parameters in the RX-CX-A ansatz will be redundant.
In general, the ansatz with a larger number of redundant parameters has lower performance because of the worse trainability. 
However, in our experiment, the results of the RX-CX-A ansatz are slightly better than that of the RX-CX-L ansatz (error reduced by $22\%$ on average).
This may come from the better trainability of an ansatz with alternating entanglement, as indicated by~\cite{Nakaji2020ExpressibilityOT}.

\section{Discussion}\label{sec:discuss}
The ansatz architecture %
of VQA can significantly affect the optimization result quality, trainability, execution overhead, etc. 
An optimal ansatz architecture towards all these design objectives is, to the best of our knowledge, not yet known or may not exist.
In this paper, we focused on the \textit{efficiency} of different ansatz architectures and studied how to provide more expressive power with fewer gates (especially two-qubit gates).
We believe that this is critical if we hope to demonstrate some practical usage of VQA on near-term noisy quantum computing devices.
We investigated several common ansatz design options and found that using the RX-RZ-CX ansatz with alternating entanglement layer can provide more effective parameters per CX gate compared with other common architectures discussed in this study.
These results can guide the design of future VQA ansatz and, hopefully, make progress towards quantum advantage.

Although our conclusion has been evidenced by both theoretical analysis and experiment results, there is much work left since the design space is extremely large for VQA ansatz and there are multiple ansatz architecture optimization objectives.
Here we briefly discuss some potential future research directions.

\textbf{Application-specific HEA:} In this paper, we focus on efficient ansatz design and try to maintain strong expressive power with fewer CX gates for a smaller execution overhead.
The trainability of HEA is another critical problem of great interest since generally HEA trades in trainability for efficiency. 
The application-specific ansatzes are usually easier to train.
It is worth exploring if trainability and efficiency can be achieved simultaneously if application information is introduced into HEA architecture design.

\textbf{Parameter pruning:} Our conclusion suggests that deploying single-qubit gates of different types would increase the number of effective parameters. 
However, similar to classical NN, ansatz with a large number of effective parameters will be inevitably harder to train. 
One possible solution is to remove effective parameters that have relatively small effects on ansatz performance.
The parameter pruning techniques in classical NN may be migrated to VQA to reduce the total number of parameters and improve the trainability.

\textbf{More efficient entanglement:} Our theoretical analysis and experiment results have verified that the RX-RZ-CX ansatz with alternating entanglement can give a high-quality solution with relatively few CX gates.
However, the alternating entanglement layer may not be optimal and the efficiency of the entanglement layer can be further improved.
For example, we may not need to fully connect all qubits in a linear entanglement layer or connect completely different qubit pairs in successive entanglement layers.
Also, the CX gate may not be the most efficient two-qubit gate to create entanglement and we may investigate other common two-qubit gates like the $\rm \sqrt{iSWAP}$ and ZZ gate~\cite{Gokhale2020OptimizedQC}. %

\section{Conclusion}

In this paper, we provide an algebraic viewpoint of the expressive power of ansatz. 
By analyzing the number of effective parameters, we show that the RX-CX ansatz with linear entanglement has limited expressive power. We also examine several common design options that can possibly improve the expressive power of ansatz. 
Based on our theoretical analysis, we suggest that the RX-RZ-CX ansatz with alternating entanglement is more efficient than the RX-CX ansatz with linear entanglement in terms of the number of effective parameters per two-qubit gate. 
Experiment results support our theoretical claim. 
The theoretical framework proposed by this paper can be further extended to more complex ansatz design problems and pave the way to demonstrating near-term quantum advantage using VQAs.

\bibliography{example_paper}
\bibliographystyle{plain}

\newpage
\appendix

\section{Proof of the Propositions and Theorems in Section~\ref{sec:analysis}}

In the following, the qubit indices start from 0.

\subsection{Proposition~\ref{prop:rx_cx_upperbound}: Effective Parameters of RX-CX Ansatz}\label{sec:rx-cx-eff}

\begin{lemma} \label{lemma:xpass}
$\mathrm{CX} \cdot (\mathrm{X} \otimes \mathrm{I}) = (\mathrm{X} \otimes \mathrm{X}) \cdot \mathrm{CX}  = (\mathrm{X} \otimes \mathrm{I}) \cdot \mathrm{CX} \cdot (\mathrm{I} \otimes \mathrm{X}) $.
\begin{equation*}
    \includegraphics[height=8\fontcharht\font`\B]{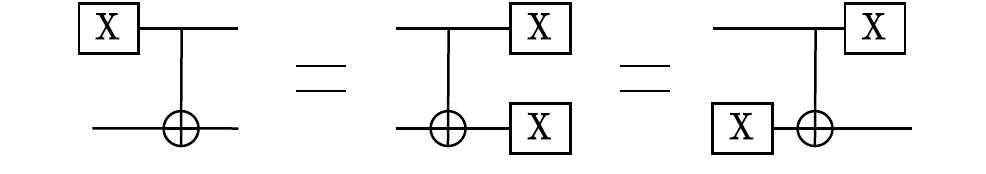} 
\end{equation*}
\end{lemma}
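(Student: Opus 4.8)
The plan is to avoid any $4\times4$ matrix bookkeeping and instead work with the operator decomposition of the controlled-NOT gate. With the convention fixed by the matrix in the preliminaries (control on the first qubit, target on the second), one has $\mathrm{CX} = \ket{0}\bra{0}\otimes\mathrm{I} + \ket{1}\bra{1}\otimes\mathrm{X}$. Everything then reduces to the single-qubit Pauli identities $\mathrm{X}\ket{0}=\ket{1}$, $\mathrm{X}\ket{1}=\ket{0}$ (equivalently $\bra{0}\mathrm{X}=\bra{1}$, $\bra{1}\mathrm{X}=\bra{0}$) together with $\mathrm{X}^2=\mathrm{I}$, so I expect the whole lemma to amount to a short algebraic normalization.

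First I would compute the left-hand side: pushing $\mathrm{X}\otimes\mathrm{I}$ through the decomposition and using $\bra{0}\mathrm{X}=\bra{1}$, $\bra{1}\mathrm{X}=\bra{0}$ gives
\begin{align*}
\mathrm{CX}\cdot(\mathrm{X}\otimes\mathrm{I}) = \ket{0}\bra{1}\otimes\mathrm{I} + \ket{1}\bra{0}\otimes\mathrm{X}.
\end{align*}
Next I would show the remaining two expressions collapse to this same operator. For $(\mathrm{X}\otimes\mathrm{X})\cdot\mathrm{CX}$ the flips $\mathrm{X}\ket{0}=\ket{1}$, $\mathrm{X}\ket{1}=\ket{0}$ act on the kets while $\mathrm{X}\cdot\mathrm{X}=\mathrm{I}$ and $\mathrm{X}\cdot\mathrm{I}=\mathrm{X}$ act on the target factor. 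For $(\mathrm{X}\otimes\mathrm{I})\cdot\mathrm{CX}\cdot(\mathrm{I}\otimes\mathrm{X})$ I would first contract $\mathrm{CX}\cdot(\mathrm{I}\otimes\mathrm{X}) = \ket{0}\bra{0}\otimes\mathrm{X}+\ket{1}\bra{1}\otimes\mathrm{I}$ using $\mathrm{X}^2=\mathrm{I}$, then apply $\mathrm{X}\otimes\mathrm{I}$ on the left. In both cases the result is again $\ket{0}\bra{1}\otimes\mathrm{I}+\ket{1}\bra{0}\otimes\mathrm{X}$, which establishes the three-way equality.

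There is essentially no hard step here; the statement is a one-line computation once the projector form of $\mathrm{CX}$ is in hand, which is exactly why the analogous X-rule (Proposition~\ref{prop:x_rule}) was dispatched ``by calculation.'' The only point requiring care is consistency of the tensor-ordering/endianness convention: all three products must be read with the control qubit as the left factor, matching the $\mathrm{CX}$ matrix given in the preliminaries, so that the Pauli flip on the control is recorded in the bra indices and the conditional flip on the target is recorded in the second tensor slot. Fixing that convention up front makes the verification unambiguous, and the common normal form $\ket{0}\bra{1}\otimes\mathrm{I}+\ket{1}\bra{0}\otimes\mathrm{X}$ certifies that all three circuits realize the same unitary.
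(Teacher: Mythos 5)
Your proof is correct and is essentially the paper's own argument in Dirac notation: writing $\mathrm{CX}=\ket{0}\bra{0}\otimes\mathrm{I}+\ket{1}\bra{1}\otimes\mathrm{X}$ is the same control-conditioned decomposition as the paper's block form $\bigl(\begin{smallmatrix}\mathrm{I}&\bm{0}\\ \bm{0}&\mathrm{X}\end{smallmatrix}\bigr)$, and reducing all three products to the normal form $\ket{0}\bra{1}\otimes\mathrm{I}+\ket{1}\bra{0}\otimes\mathrm{X}$ is the same direct verification. The only cosmetic difference is that you compute the third expression explicitly, whereas the paper obtains the final equality by citing the X-rule (Proposition~\ref{prop:x_rule}); both are fine.
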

\begin{proof}
$\mathrm{CX} \cdot (\mathrm{X} \otimes \mathrm{I}) = \begin{pmatrix}
\mathrm{I} & \bm{0} \\
\bm{0} & \mathrm{X} 
\end{pmatrix}\begin{pmatrix}
\bm{0} & \mathrm{I} \\
\mathrm{I} & \bm{0} 
\end{pmatrix} = \begin{pmatrix}
\bm{0} & \mathrm{I} \\
\mathrm{X} & \bm{0} 
\end{pmatrix} = \begin{pmatrix}
\bm{0} & \mathrm{X} \\
\mathrm{X} & \bm{0} 
\end{pmatrix}\begin{pmatrix}
\mathrm{I} & \bm{0} \\
\bm{0} & \mathrm{X} 
\end{pmatrix} = (\mathrm{X} \otimes \mathrm{X}) \cdot \mathrm{CX}$. On the other hand, with Proposition~\ref{prop:x_rule}, we have $(\mathrm{X} \otimes \mathrm{X}) \cdot \mathrm{CX} = (\mathrm{X} \otimes \mathrm{I}) \cdot \mathrm{CX} \cdot (\mathrm{I} \otimes \mathrm{X})$.
\end{proof}

\begin{figure*}[h]
    \centering
    \includegraphics[width=1\textwidth]{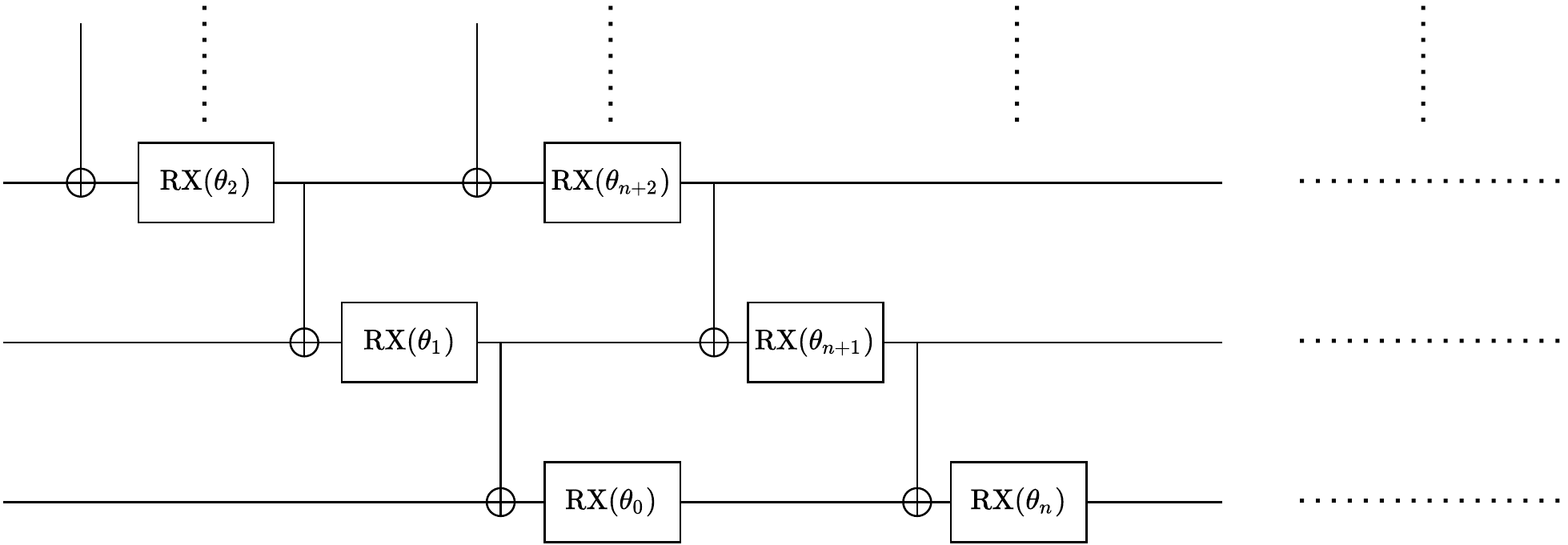}
    \caption{Reorganized RX-CX ansatz with linear entanglement by using Proposition~\ref{prop:x_rule}}
    \label{fig:intev}
\end{figure*}

\begin{lemma}(Layer unitary of RX-CX ansatz with linear entanglement) \label{lemma:layer_uni}
Let $U^0(\theta_{0}) = \mathrm{RX}(\theta_{0})$. Then the unitary for the first layer of $(k+1)$-qubit RX-CX ansatz with linear entanglement can be expressed as:
\begin{equation}\label{equ:recur_layer}
 U^k(\bm{\theta}_k) = \begin{pmatrix}
C_{k}U^{k-1}(\bm{\theta}_{k-1}) & -iS_{k}U^{k-1}(\bm{\theta}_{k-1}) \\
-iS_{k}U^{k-1}(\bm{\theta}_{k-1})X_{k-1} & C_{k}U^{k-1}(\bm{\theta}_{k-1})X_{k-1}  
\end{pmatrix}
\end{equation}
where $X_{k-1} =\begin{pmatrix}
0 & \mathrm{I}_{2^{k-1}} \\
\mathrm{I}_{2^{k-1}} & 0
\end{pmatrix}$, $C_{k} = \cos (\frac{\theta_{k}}{2})$, $S_{k} = \sin (\frac{\theta_{k}}{2})$, $\bm{\theta}_k = (\theta_k,\theta_{k-1},\cdots, 0)$. \\
For simplicity, in the following, we denote $U^k(\bm{\theta}_{k+n*j})$ by $U^k_j$.

\end{lemma}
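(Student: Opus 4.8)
The plan is to prove the recursion~\eqref{equ:recur_layer} by induction on the qubit count, i.e.\ on $k$. I write the first-layer unitary of the $(k+1)$-qubit ansatz as $U^k = E_k R_k$, where $R_k = \mathrm{RX}(\theta_k)\otimes\cdots\otimes\mathrm{RX}(\theta_0)$ is the single-qubit rotation sublayer and $E_k$ is the linear CX entanglement sublayer on qubits $0,\dots,k$. I adopt the convention that qubit $k$ is the most significant (leftmost) tensor factor, so that splitting a $2^{k+1}\times 2^{k+1}$ operator into the $2\times 2$ array of $2^k\times 2^k$ blocks corresponds exactly to conditioning on the value of qubit $k$; this is what makes the claimed block form natural, and under it $X_{k-1}=\mathrm{X}\otimes\mathrm{I}_{2^{k-1}}$ is precisely the bit-flip acting on qubit $k-1$ inside the lower register.

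The base case $k=0$ is immediate, since $U^0(\theta_0)=\mathrm{RX}(\theta_0)$ by definition. For the inductive step I would isolate the two gates in which qubit $k$ participates: the rotation $\mathrm{RX}(\theta_k)$ and the single CX gate joining qubit $k$ to qubit $k-1$. Everything else---the rotations on qubits $0,\dots,k-1$ together with the CX chain among them---is exactly the $k$-qubit first layer $U^{k-1}$ acting on the lower register. Placing the new CX gate adjacent to the rotation layer lets me factor $U^k = (\mathrm{I}_2\otimes E_{k-1})\,\mathrm{CX}_{k,k-1}\,(\mathrm{RX}(\theta_k)\otimes R_{k-1})$. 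I then substitute three block forms: $\mathrm{RX}(\theta_k)\otimes R_{k-1}$ has the $2\times 2$ block pattern with entries $C_kR_{k-1}$ and $-iS_kR_{k-1}$; conditioned on qubit $k$, the CX gate is the block-diagonal $\mathrm{diag}(\mathrm{I}_{2^k},X_{k-1})$; and $\mathrm{I}_2\otimes E_{k-1}$ is the block-diagonal $\mathrm{diag}(E_{k-1},E_{k-1})$, since $E_{k-1}$ leaves qubit $k$ untouched. Carrying out the two block multiplications and collapsing $E_{k-1}R_{k-1}=U^{k-1}$ then reproduces~\eqref{equ:recur_layer}.

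The one nonroutine point, which I expect to be the crux, is making the bit-flip $X_{k-1}$ land on the \emph{right} of $U^{k-1}$ rather than on the left. This follows from a single commutation: $X_{k-1}$ acts nontrivially only on qubit $k-1$, where the rotation sublayer contributes $\mathrm{RX}(\theta_{k-1})$, and $\mathrm{X}$ commutes with $\mathrm{RX}$ (both are functions of $\mathrm{I}$ and $\mathrm{X}$); hence $X_{k-1}R_{k-1}=R_{k-1}X_{k-1}$, so the flip slides past the rotations to sit on the outside of $U^{k-1}$. If instead one fixes the entanglement order so that the new CX is applied after the inner chain, the flip must be pushed through the CX gates of $E_{k-1}$, and the required rewriting is supplied by the X-pass Lemma~\ref{lemma:xpass} (equivalently Proposition~\ref{prop:x_rule}); either route closes the induction, so I would present the version that minimizes bookkeeping.
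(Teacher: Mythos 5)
Your proof is correct and follows essentially the same route as the paper's: both condition on the top qubit to write $U^k$ as the block product $\mathrm{diag}(U^{k-1},U^{k-1})\cdot\mathrm{diag}(\mathrm{I},X_{k-1})\cdot(\mathrm{RX}(\theta_k)\otimes\mathrm{I})$ and both rely on the same key fact that $\mathrm{X}$/$\mathrm{RX}$ commutes through the target of a CX. The only presentational difference is that the paper pre-commutes all rotations through the CX chain to obtain the nested form $\mathrm{RX}(\theta_0)(\mathrm{CX}(1,0)\mathrm{RX}(\theta_1))\cdots(\mathrm{CX}(k,k-1)\mathrm{RX}(\theta_k))$ before reading off the recursion, whereas you keep $E_kR_k$ intact and perform the single commutation $X_{k-1}R_{k-1}=R_{k-1}X_{k-1}$ inside the inductive step.
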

\begin{proof}
The first layer of $(k+1)$-qubit RX-CX ansatz with linear entanglement is $(\mathrm{CX}(1,0)\cdots\mathrm{CX}(k-1,k))(\mathrm{RX}(\theta_{k})\otimes \mathrm{RX}(\theta_{k-1})\otimes\cdots\otimes \mathrm{RX}(\theta_{0}))$. 
Since CX gate and RX gate will exchange on the target qubit, we can convert the unitary to $\mathrm{RX}({\theta_0})(\mathrm{CX}(1,0)\mathrm{RX}({\theta_1}))(\mathrm{CX}(2,1)\mathrm{RX}({\theta_2}))\cdots(\mathrm{CX}(k,k-1)\mathrm{RX}({\theta_{k}}))$, as shown in Figure~\ref{fig:intev}.

Let $U^{k}(\bm{\theta}_k) = \mathrm{RX}({\theta_0})(\mathrm{CX}(1,0)\mathrm{RX}({\theta_1}))(\mathrm{CX}(2,1)\mathrm{RX}({\theta_2}))\cdots(\mathrm{CX}(k,k-1)\mathrm{RX}({\theta_{k}}))$, then we have $U^{k}(\bm{\theta}_k) = \mathrm{I}\otimes U^{k-1}(\bm{\theta}_{k-1})\mathrm{CX}(k,k-1)\mathrm{RX}({\theta_{k}})\otimes \mathrm{I} = \begin{pmatrix}
U^{k-1}(\bm{\theta}_{k-1}) & \bm{0} \\
\bm{0} & U^{k-1}(\bm{\theta}_{k-1})
\end{pmatrix}\begin{pmatrix}
\mathrm{I} & \bm{0} \\
\bm{0} & X_{k-1}
\end{pmatrix}\begin{pmatrix}
C_k\mathrm{I} & -iS_k \mathrm{I} \\
-iS_k \mathrm{I} & C_k\mathrm{I}
\end{pmatrix} = \begin{pmatrix}
C_{k}U^{k-1}(\bm{\theta}_{k-1}) & -iS_{k}U^{k-1}(\bm{\theta}_{k-1}) \\
-iS_{k}U^{k-1}(\bm{\theta}_{k-1})X_{k-1} & C_{k}U^{k-1}(\bm{\theta}_{k-1})X_{k-1}  
\end{pmatrix}$.
\end{proof}

\begin{lemma}\label{lemma:xcomp}
For $n$-qubit RX-CX ansatz with linear entanglement, given vector $\bm{b} \in \{0, 1\}^l$, Let $G^{k}(\bm{b}) = (U^{k}_{l-1}X_{k}^{b_{l-1}})(U^{k}_{l-2}X_{k}^{b_{l-2}})\cdots (U^{k}_{0}X_{k}^{b_{0}})$. Then, \\
\begin{equation*}
    \forall k \ge 0, l = 2^{\lceil \log_2 (k+2) \rceil}, \forall \bm{b} \in \{0, 1\}^{l}, G^{k}(\bm{b}) = G^{k}(\bar{\bm{b}}), 
\end{equation*}
where $\bar{\bm{b}} = (1-b_0,1-b_1,\cdots)$.
\end{lemma}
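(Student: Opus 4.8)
The plan is to reduce the operator identity $G^k(\bm b) = G^k(\bar{\bm b})$ to a statement of linear algebra over $\mathbb{F}_2$. The key observation is that every single-qubit gate in $U^k_j$ is an $\mathrm{RX}$ rotation, and $\mathrm{RX}(\theta)=e^{-i\theta X/2}$ commutes with any tensor product of Pauli-$X$ operators. Consequently, conjugating an $X$-string (a product of $X$'s on a subset $S\subseteq\{0,\dots,k\}$ of qubits, written $X_S$, so that $X_k=X_{\{k\}}$) by a whole layer $U^k_j$ only ``sees'' the $\mathrm{CX}$ gates, which are Clifford and send $X$-strings to $X$-strings. Thus there is a single linear map $T:\mathbb{F}_2^{k+1}\to\mathbb{F}_2^{k+1}$, independent of the layer index $j$ and of all rotation angles, with $U^k_j\,X_S = X_{T(S)}\,U^k_j$ for every $S$. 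First I would make this precise using Lemma~\ref{lemma:xpass} (equivalently Proposition~\ref{prop:x_rule}), which gives the propagation rule for a single $\mathrm{CX}$: an $X$ on the control spreads to control-and-target, while an $X$ on the target is left unchanged.

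Next I would compute $T$ explicitly on the linear-entanglement cascade. Reading $U^k_j$ in the reorganized form of Lemma~\ref{lemma:layer_uni}, the $\mathrm{CX}$ gates are $\mathrm{CX}(1,0),\dots,\mathrm{CX}(k,k-1)$, and applying the single-$\mathrm{CX}$ rule down the chain yields $T(e_j)=e_0+e_1+\cdots+e_j$ for each basis vector $e_j$ (in particular $T(e_k)=\sum_{i=0}^k e_i$, matching the fact that passing $X_k$ through the whole layer produces an $X$ on every qubit). With $T$ in hand I would push all the $X_k^{b_j}$ factors in $G^k(\bm b)$ to the far left: the factor at position $j$ crosses the unitaries $U^k_{l-1},\dots,U^k_j$, picking up one application of $T$ per crossing, so it arrives as $X_{\,b_j\,T^{\,l-j}(e_k)}$. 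Since $X$-strings commute, this gives $G^k(\bm b)=X_{\Sigma(\bm b)}\,U^k_{l-1}\cdots U^k_0$ with $\Sigma(\bm b)=\sum_{j=0}^{l-1} b_j\,T^{\,l-j}(e_k)\in\mathbb{F}_2^{k+1}$. Because the product $U^k_{l-1}\cdots U^k_0$ does not depend on $\bm b$, the claim $G^k(\bm b)=G^k(\bar{\bm b})$ is equivalent to $\Sigma(\bm b)=\Sigma(\bar{\bm b})$; and since $\bar b_j = 1+b_j$ over $\mathbb{F}_2$, this is equivalent to the single condition $\sum_{m=1}^{l} T^m(e_k) = 0$.

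Finally I would prove this condition by nilpotency. Write $T=I+N$, where $N(e_j)=\sum_{i<j}e_i$ strictly lowers the largest qubit index, hence is nilpotent with $N^{k+1}=0$ on $\mathbb{F}_2^{k+1}$. Over $\mathbb{F}_2$ the Frobenius identity $(I+N)^{2^s}=I+N^{2^s}$ gives $T^{2^s}=I+N^{2^s}$, and the binary-expansion factorization $\sum_{m=0}^{2^s-1}x^m=\prod_{t=0}^{s-1}(1+x^{2^t})$ applied to $x=T$ yields $\sum_{m=0}^{l-1}T^m=\prod_{t=0}^{s-1}(I+T^{2^t})=\prod_{t=0}^{s-1}N^{2^t}=N^{\,l-1}$, where $l=2^s$. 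Because $l=2^{\lceil\log_2(k+2)\rceil}\ge k+2$, we have $l-1\ge k+1$, so $N^{\,l-1}=0$ and therefore $\sum_{m=1}^{l}T^m=T\sum_{m=0}^{l-1}T^m=T\,N^{\,l-1}=0$, which is exactly the needed condition. The delicate part is the bookkeeping of the middle paragraph — correctly counting how many layers each $X_k$ crosses and confirming that only the aggregate parity $\Sigma$ survives — together with recognizing that $l=2^{\lceil\log_2(k+2)\rceil}$ is chosen precisely so that $l$ is a power of two with $l-1\ge k+1$, which is exactly what the Frobenius/nilpotency step requires; the opening reduction to $\mathbb{F}_2$ is the conceptual crux that makes the rest routine.
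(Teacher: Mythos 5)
Your proof is correct, and it takes a genuinely different route from the paper's. The paper argues qubit by qubit: it uses Lemma~\ref{lemma:xpass} to push the inserted $\mathrm{X}$ gates on qubit $k$ across pairs of layers, which leaves a residual pattern of $\mathrm{X}$ gates on qubit $k-1$ encoded as a binary string; it then tracks how that string evolves under the cancellation map $p$ as one descends through the qubits, and the heart of the argument is the inductive string computation of Lemma~\ref{lemma:x_string} showing that after $l-2$ rounds the alternating string collapses to a single $1$ (or an even number of $1$'s), which Proposition~\ref{prop:x_rule} then absorbs. You instead propagate each inserted $X_k$ through whole layers at once: since every RX commutes with $X$-strings and the CX cascade conjugates $X$-strings to $X$-strings, each layer acts on $\mathbb{F}_2^{k+1}$ by the same linear map $T=I+N$ with $N(e_j)=e_0+\cdots+e_{j-1}$ nilpotent, and the lemma reduces to the operator identity $\sum_{m=1}^{l}T^m=0$, which follows from the Frobenius/binary-expansion computation $\sum_{m=0}^{2^s-1}(I+N)^m=\prod_{t=0}^{s-1}N^{2^t}=N^{l-1}=0$ once $l-1\ge k+1$. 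I checked the bookkeeping: $T(e_j)=e_0+\cdots+e_j$ is what Lemma~\ref{lemma:xpass} gives on the cascade $\mathrm{CX}(1,0)\cdots\mathrm{CX}(k,k-1)$, the factor at position $j$ indeed crosses $l-j$ layers, and $l=2^{\lceil\log_2(k+2)\rceil}$ is exactly the smallest power of two with $l-1\ge k+1$, so the nilpotency step closes. The two arguments encode the same combinatorics --- your identity $\sum_{m=0}^{l-1}T^m=N^{l-1}=0$ plays the role of the paper's $p^{l-2}$ computation --- but yours is shorter, makes the power-of-two choice of $l$ transparent, and proves the stronger operator statement, from which Corollary~\ref{lemma:xcomp_mul} is immediate. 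What the paper's string-rewriting machinery buys in exchange is reusability: the function $p$ and its variants are invoked again almost verbatim for the modified-entanglement case in Lemma~\ref{lemma:cnotnomatter}, whereas your map $T$ would have to be recomputed for each CX ordering.
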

\begin{proof}
We prove that, $\forall k \in (2^{r}-2, 2^{r+1}-2]$, $G^{k}(\bm{b}) = G^{k}(\bar{\bm{b}})$. Note that, $\forall k \in (2^{r}-2, 2^{r+1}-2]$, we have $2^{\lceil \log_2 (k+2) \rceil} = 2^{r + 1}$.

It is easy to verify that $G^k(\bm{b}) = U^{k}_{l-1}(\theta_{k+n*(l-1)}+b_{l-1}*\pi,\theta_{k-1+n*(l-1)},\cdots)U^{k}_{l-2}(\theta_{k+n*(l-2)}+b_{l-2}*\pi,\theta_{k-1+n*(l-2)},\cdots)\cdots U^{k}_{0}(\theta_{k}+b_{0}*\pi,\theta_{k-1},\cdots)$, i.e., the effect of $U^k_jX_{k}$ is the same as adding a $\mathrm{X}$ gate just before rotation gate $\mathrm{RX}(\theta_{k+j*n})$.

\textbf{Example:} We consider the case where $r = 0$. In this case, $k = 0$. Since $U^0_1X_0U^0_0 = \mathrm{RX}(\theta_{n})\,\mathrm{X}\,\mathrm{RX}(\theta_{0}) = \mathrm{RX}(\theta_{n})\,\mathrm{RX}(\theta_{0})\,\mathrm{X} = U^0_1U^0_0X_0$; $U^0_1X_0U^0_0 = \mathrm{RX}(\theta_{n})\,\mathrm{X}\,\mathrm{RX}(\theta_{0})\,\mathrm{X} = \mathrm{RX}(\theta_{n})\,\mathrm{RX}(\theta_{0}) = U^0_1U^0_0$. Thus, this lemma holds when $r = 0$.

$\forall r$, $l = 2^{r+1}$, by Lemma~\ref{lemma:xpass}, we have $U^k(\bm{\theta}_{k+n})X_{k}U^k(\bm{\theta}_{k}) = U^k(\theta_{n+k}+\pi,\theta_{n+k-1},\cdots,\theta_{n})U^k(\bm{\theta}_{k}) = U^k(\theta_{n+k}+\pi,\theta_{n+k-1},\cdots,\theta_{n})U^k(\bm{\theta}_{k}) = U^k(\theta_{n+k},\theta_{n+k-1}+\pi,\cdots,\theta_{n})U^k(\theta_{k}+\pi,\theta_{k-1},\cdots,\theta_{0})$. Then, by applying Lemma~\ref{lemma:xpass} on every two layers, we can adjust gates on the $k$-th qubit of $G^k(\bm{b})$ exact the same as those of $G^{k}(\bar{\bm{b}})$.

However, we need to handle the extra $\mathrm{X}$ gate on the $k-1$-th qubit after applying Lemma~\ref{lemma:xpass}. %

We extend the notation $G^k(\bm{b})$ to $G^k(\bm{b}^k,\bm{b}^{k-1},\cdots) = U^{k}_{l-1}(\theta_{k+n*(l-1)}+b^k_{l-1}*\pi,\theta_{k-1+n*(l-1)}+b^{k-1}_{l-1}*\pi,\cdots)U^{k}_{l-2}(\theta_{k+n*(l-2)}+b^k_{l-2}*\pi,\theta_{k-1+n*(l-2)}+b^{k-1}_{l-2}*\pi,\cdots)\cdots$. And we will omit the term $\bm{b}^{j}$ if $\bm{b}^{j} = 0$. Thus, $G^k(\bm{b}^k) =G^k(\bm{b}^k,\bm{0},\cdots,\bm{0}) $. And, after applying Lemma~\ref{lemma:xpass} on every two layers, we have $G^k(\bm{b}^k) = G^{k}(\bar{\bm{b}^k}, \bm{b}^{k-1})$, where $\bm{b}^{k-1} = (b^{k-1}_{l-1},\cdots,b^{k-1}_{0}) = (1,0,1,0,\cdots,1,0)$.

We refer to the operation of cancelling every two $\mathrm{X}$ gates sequentially on the $i$-th qubit by Lemma~\ref{lemma:xcomp} and Proposition~\ref{prop:x_rule} as $P^i$. After apply $P^{k-1}$ on $G^{k}(\bar{\bm{b}^k}, \bm{b}^{k-1})$, we have $\bm{b}^{k-1} = \bm{0}$, and $\bm{b}^{k-2} = (1,1,0,0,1,1,0,0,\cdots) = p(\bm{b}^{k-1})$, here $p$ is the function induced by $P^i$ and obviously, $p$ does not depend on $i$.

We assert if $k = 2^r-1$, $l = 2^{r+1}$,  $p^{k-1}((1,0,1,0,\cdots,1,0)) = (1_{l-1},0,0,\cdots,0,1_{\frac{l}{2}-1},0,0,\cdots,0)$, $p^{2k} = (1,0,\cdots,0)$. The proof is left in the Lemma~\ref{lemma:x_string} below.

Then, if $k = 2^r - 1$, we have $G^{k}(\bm{b}^k) = G^{k}(\bar{\bm{b}^k}, \bm{b}^{k-1}) = G^{k}(\bar{\bm{b}^k}, \bm{b}^{0})$, where $\bm{b}^0 = p^{k-1}((1,0,1,0,\cdots,1,0)) = (1_{l-1},0,0,\cdots,0,1_{\frac{l}{2}-1},0,0,\cdots,0)$. Then, by applying Proposition~\ref{prop:x_rule}, $\mathrm{X}$ gate can move on the 0-th qubit, thus we have $G^{k}(\bar{\bm{b}^k}, \bm{b}^{0}) = G^{k}(\bar{\bm{b}^k})$, i.e., $G^{k}(\bm{b}^k) = G^{k}(\bar{\bm{b}^k})$.

If $2^r-1 < k \le 2^{r+1}-2$, likewise, we have $G^{k}(\bm{b}^k) = G^{k}(\bar{\bm{b}^k}, \bm{b}^{k-1}) = G^{k}(\bar{\bm{b}^k}, \bm{b}^{0})$, where $\bm{b}^0 = p^{k-1}((1,0,1,0,\cdots,1,0))$. Once $i < 2^{r+1}-2$, the $p^{i}((1,0,1,0,\cdots,1,0))$ will have an even number of 1. Thus, $\bm{b}^0$ must have even number of 1 in this case. Then, with Proposition~\ref{prop:x_rule}, we have $G^{k}(\bm{b}^k) = G^{k}(\bar{\bm{b}^k}, \bm{b}^{k-1}) = G^{k}(\bar{\bm{b}^k}, \bm{b}^{0}) = G^{k}(\bar{\bm{b}^k})$ and Lemma~\ref{lemma:xcomp} is proved.

\end{proof}

\begin{corollary}
\label{lemma:xcomp_mul}
$
    \forall k \ge 0, \forall t\in \mathbb{Z}_+, l = 2^{\lceil \log_2 (k+2) \rceil}, \forall \bm{b} \in \{0, 1\}^{t*l}, G^{k}(\bm{b}) = G^{k}(\bar{\bm{b}}).
$
\end{corollary}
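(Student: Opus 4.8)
The plan is to reduce the length-$(tl)$ statement to $t$ independent applications of Lemma~\ref{lemma:xcomp}, one per block of length $l$. First I would observe that Lemma~\ref{lemma:xcomp} is really a statement about \emph{any} $l$ consecutive factors of the product, not just the first $l$. Its proof only manipulates how $\mathrm{X}_k$ propagates through the layer unitaries via Lemma~\ref{lemma:xpass} and the recursive form in Lemma~\ref{lemma:layer_uni}, and never uses the specific values of the rotation angles. Consequently, for any starting index $m$, the partial product $(U^k_{m+l-1}X_k^{b_{l-1}})\cdots(U^k_mX_k^{b_0})$ equals the same product with every one of the $l$ bits flipped; this is just Lemma~\ref{lemma:xcomp} applied to the factors $U^k_m,\dots,U^k_{m+l-1}$, whose angles form a fresh tuple to which the lemma (being stated for all angles) applies.

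Next I would decompose $\bm{b}\in\{0,1\}^{tl}$ into $t$ consecutive blocks $\bm{b}^{(0)},\dots,\bm{b}^{(t-1)}$ of length $l$, so that $G^{k}(\bm{b})$ factors as the ordered product $B_{t-1}B_{t-2}\cdots B_0$, where $B_i$ collects the factors with indices $il,\dots,(i+1)l-1$. By the shifted form of Lemma~\ref{lemma:xcomp} from the previous step, each $B_i$ is invariant under flipping all the bits of $\bm{b}^{(i)}$. Since $\bar{\bm{b}}$ is precisely $\bm{b}$ with every block flipped, replacing each $B_i$ by its flipped version simultaneously yields $G^{k}(\bm{b})=G^{k}(\bar{\bm{b}})$. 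Equivalently, one can run an induction on $t$: the base case $t=1$ is Lemma~\ref{lemma:xcomp}, and the inductive step peels off the leftmost length-$l$ block with the shifted lemma while the remaining length-$(tl)$ product is handled by the induction hypothesis.

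The only real subtlety, and the step I would be most careful about, is the shift-invariance claimed in the first paragraph: one must confirm that nothing in the proof of Lemma~\ref{lemma:xcomp} secretly relied on the block being the \emph{first} block or on particular angle values. Since the argument there is purely structural, tracking the parity pattern $p(\cdot)$ of accumulated $\mathrm{X}$ gates across the $k$ qubits and verifying they cancel after $l$ layers, it transfers verbatim to any block, and the remainder is merely bookkeeping of indices. No new computation is needed beyond recognizing this.
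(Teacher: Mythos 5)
Your proposal is correct and follows essentially the same route as the paper: the paper's proof likewise splits $\bm{b}$ into $t$ consecutive blocks of length $l$, writes $G^{k}(\bm{b})=G^{k}(\bm{b}_t)\cdots G^{k}(\bm{b}_1)$, and flips each block using Lemma~\ref{lemma:xcomp}. The shift-invariance point you flag is the same issue the paper handles via its translation-invariance corollary (Corollary~\ref{lemma:trans_invar}), so your extra care there is warranted but does not change the argument.
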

\begin{proof}
Just divide $\bm{b}$ into $\bm{b}_1,\cdots \bm{b}_t$, then $G^{k}(\bm{b}) = G^{k}(\bm{b}_t)G^{k}(\bm{b}_{t-1})\cdots G^{k}(\bm{b}_1) = G^{k}(\bar{\bm{b}}_t)G^{k}(\bar{\bm{b}}_{t-1})\cdots G^{k}(\bar{\bm{b}}_1) = G^{k}(\bar{\bm{b}})$.
\end{proof}
\begin{corollary}(Translation invariance)\label{lemma:trans_invar}
For $G^{k}_i(\bm{b}) = (U^{k}_{l-1+i}X_{k}^{b_{l-1}})(U^{k}_{l-2+i}X_{k}^{b_{l-2}})\cdots (U^{k}_{i}X_{k}^{b_{0}})$, we still have
\begin{equation*}
    \forall k \ge 0, l = 2^{\lceil \log_2 (k+2) \rceil}, \forall \bm{b} \in \{0, 1\}^{l}, G^{k}_i(\bm{b}) = G^{k}_i(\bar{\bm{b}}), 
\end{equation*}
\end{corollary}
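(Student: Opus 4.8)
The plan is to observe that the identity of Lemma~\ref{lemma:xcomp}, namely $G^{k}(\bm{b}) = G^{k}(\bar{\bm{b}})$, holds as an equality of matrices for \emph{every} choice of the rotation angles, and that $G^{k}_i$ is obtained from $G^{k}$ by nothing more than a relabeling of those angles. First I would make precise the dependence of each building block on its parameters: by Lemma~\ref{lemma:layer_uni} and the convention $U^{k}_j = U^{k}(\bm{\theta}_{k+nj})$, the block $U^{k}_j$ is the fixed matrix-valued function $U^{k}$ evaluated at the angle tuple $(\theta_{k+nj}, \theta_{k-1+nj}, \ldots, \theta_{nj})$. Hence $U^{k}_{j+i}$ is exactly $U^{k}_j$ with every angle index raised by $ni$, and because $X_k$ carries no parameters, the product $G^{k}_i(\bm{b})$ is the product $G^{k}(\bm{b})$ with this uniform index shift applied term by term.

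Next I would introduce the index-shift substitution $\sigma_i$ sending each parameter $\theta_m$ to $\theta_{m+ni}$, i.e.\ the reindexing of the free variables of the problem. Under $\sigma_i$ we have $U^{k}_j \mapsto U^{k}_{j+i}$, so the whole word is transported coherently: $\sigma_i(G^{k}(\bm{b})) = G^{k}_i(\bm{b})$ and $\sigma_i(G^{k}(\bar{\bm{b}})) = G^{k}_i(\bar{\bm{b}})$. The crucial point is that Lemma~\ref{lemma:xcomp} asserts $G^{k}(\bm{b}) = G^{k}(\bar{\bm{b}})$ for all values of the angles $\theta_0, \ldots, \theta_{k+n(l-1)}$; indeed its proof only manipulates how $\mathrm{X}$ gates commute past layers via Lemma~\ref{lemma:xpass} and Proposition~\ref{prop:x_rule}, and tracks the induced $\mathrm{X}$-string via the map $p$, never invoking the numerical value of any angle. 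A functional identity in free variables is preserved under any renaming of those variables, so applying $\sigma_i$ to both sides of Lemma~\ref{lemma:xcomp} immediately yields $G^{k}_i(\bm{b}) = G^{k}_i(\bar{\bm{b}})$, which is the claim.

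I expect essentially no hard step here, since the entire content is the recognition that the earlier proof is parameter-agnostic. The only point warranting care is that $\sigma_i$ act consistently on the nested recursion of Lemma~\ref{lemma:layer_uni} — that shifting all angle indices by $ni$ inside $U^{k-1}$ agrees with shifting them in $U^{k}$ — but this is immediate because the recursion in Eq.~\eqref{equ:recur_layer} only mixes blocks sharing the same layer index $j$. If one prefers to avoid the formal substitution, the alternative is to rerun the proof of Lemma~\ref{lemma:xcomp} verbatim with every subscript $j$ replaced by $j+i$: each application of Lemma~\ref{lemma:xpass} and of the cancellation operation $P^i$ goes through unchanged, as these depend only on the gate structure and not on which angles appear.
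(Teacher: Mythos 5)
Your proposal is correct and is essentially the paper's own argument: the paper's proof just says $G^{k}_i(\bm{b})$ is a translation of $G^{k}(\bm{b})$ by $i$ layers and invokes translation invariance of the ansatz, which is precisely your index-shift substitution $\sigma_i$ applied to the parameter-independent identity of Lemma~\ref{lemma:xcomp}. You merely spell out more carefully than the paper does why the shift is legitimate (the identity holds for all angle values, so it survives any renaming of the free variables).
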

\begin{proof}
$G^{k}_i(\bm{b})$ is actually the translation of $G^{k}(\bm{b})$ with $i$ layers shifted. Then this lemma holds because of the translation invariance of RX-CX ansatz.
\end{proof}

\begin{lemma}\label{lemma:x_string}
Let $p$ denote the function induced by the operation of cancelling every two $\mathrm{X}$ gates sequentially on one qubit by Lemma~\ref{lemma:xcomp} and Proposition~\ref{prop:x_rule}. For $k = 2^r-1$ and $l = 2^{r+1}$,  $p^{k-1}((1,0,1,0,\cdots,1,0)) = (1_{l-1},0,0,\cdots,0,1_{\frac{l}{2}-1},0,0,\cdots,0)$, $p^{2k}((1,0,1,0,\cdots,1,0)) = (1,0,\cdots,0)$.
\end{lemma}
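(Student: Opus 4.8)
The plan is to recognize that $p$ is a linear map on $\mathbb{F}_2^{l}$ and to reduce both claims to a single characteristic-$2$ identity for its powers. First I would pin down the explicit form of $p$. The mechanism behind $p$ is the propagation rule of Lemma~\ref{lemma:xpass}: when we cancel the $\mathrm{X}$ gates sitting on qubit $i$, each $\mathrm{X}$ that is pushed through a CX deposits (via $\mathrm{CX}\cdot(\mathrm{X}\otimes\mathrm{I}) = (\mathrm{X}\otimes\mathrm{X})\cdot\mathrm{CX}$) an $\mathrm{X}$ on qubit $i-1$, and these deposits accumulate along the layer index, with pairs annihilating by Proposition~\ref{prop:x_rule}. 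Tracking this accumulation modulo $2$ shows that $p$ is the prefix-parity map
\begin{equation*}
p(\bm{v})_m = v_1 \oplus v_2 \oplus \cdots \oplus v_m ,
\end{equation*}
which is exactly the transformation $(1,0,1,0,\dots)\mapsto(1,1,0,0,\dots)$ already recorded in the proof of Lemma~\ref{lemma:xcomp}.

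Next I would encode $p$ algebraically. Writing $S$ for the shift $(S\bm{v})_m = v_{m-1}$ (with $v_0=0$), prefix parity satisfies $(I+S)p=I$ over $\mathbb{F}_2$, so $p=(I+S)^{-1}$ and $p^{-1}=I+S$ is the adjacent-difference operator. Since $S^{l}=0$ on $\mathbb{F}_2^{l}$ and we are in characteristic $2$, the Frobenius identity $(I+S)^{2^{s}}=I+S^{2^{s}}$ yields the three facts I need:
\begin{equation*}
p^{l} = \big((I+S)^{l}\big)^{-1} = (I+S^{l})^{-1} = I, \qquad p^{-2} = (I+S)^{2} = I + S^{2}, \qquad p^{l/2} = (I+S^{l/2})^{-1} = I + S^{l/2},
\end{equation*}
where the last equality uses $(S^{l/2})^{2}=S^{l}=0$, so $I+S^{l/2}$ is its own inverse.

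With these in hand both claims collapse to short computations on $\bm{v}=(1,0,1,0,\dots,1,0)$, whose nonzero entries sit at the odd positions. Because $2k=l-2$ and $p^{l}=I$, I get $p^{2k}=p^{-2}=I+S^{2}$, and the $m$-th entry of $(I+S^{2})\bm{v}$ is $v_m\oplus v_{m-2}$; these cancel for every $m\ge 3$ since $m$ and $m-2$ have the same parity, leaving a single $1$ at $m=1$, i.e.\ $(1,0,\dots,0)$. For the first identity $k-1=l/2-2$, so $p^{k-1}=p^{l/2}p^{-2}=(I+S^{l/2})(I+S^{2})=I+S^{2}+S^{l/2}+S^{l/2+2}$; the same same-parity cancellation kills all but the two surviving $1$'s at $m=1$ and $m=l/2+1$, which are precisely positions $l-1$ and $l/2-1$ in the paper's index convention.

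The main obstacle is the first step rather than the algebra: I must argue rigorously that the combinatorial operation ``cancel every two $\mathrm{X}$ gates sequentially'' really induces prefix parity, which means carefully tracking the $\mathrm{X}$ deposits through the linear CX chain, handling the boundary behaviour of the $\mathrm{X}$ string at the top and bottom layers, and confirming that the deposit onto qubit $i-1$ does not feed back onto qubit $i$. Once $p$ is identified with $(I+S)^{-1}$, the characteristic-$2$ Frobenius identity does all the remaining work and the two power formulas follow immediately.
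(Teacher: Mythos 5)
Your proof is correct, but it takes a genuinely different route from the paper's. The paper establishes the two identities by induction on $r$: it splits the input $(1,0,\dots,1,0)$ of length $l=2^{r+1}$ into two halves, invokes the locality of $p$ and the fact that $p(\bm{0})=\bm{0}$ (Corollary~\ref{lemma:zeronomatter}) to reduce each half to the case $r-1$, and then tracks a few further applications of $p$ by hand to reach $p^{2k}$. You instead identify $p$ in closed form as the prefix-parity map, i.e.\ $p=(I+S)^{-1}$ over $\mathbb{F}_2^{l}$ with $S$ the nilpotent shift, and then the Frobenius identity $(I+S)^{2^{s}}=I+S^{2^{s}}$ gives $p^{l}=I$, $p^{2k}=p^{-2}=I+S^{2}$, and $p^{k-1}=I+S^{2}+S^{l/2}+S^{l/2+2}$, from which both claimed vectors drop out by the same-parity cancellation; I checked that the surviving positions match the paper's index convention ($m=1\mapsto l-1$, $m=l/2+1\mapsto l/2-1$). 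Your approach buys a closed form for \emph{every} power of $p$ (not just the two exponents in the statement), makes transparent why $l$ must be a power of two, and would also streamline the parity argument in Appendix~\ref{sec:order_cnot}; the paper's induction, by contrast, leans only on properties of $p$ ($p(\bm{0})=\bm{0}$, locality) that are immediate from its operational definition. The one debt you correctly acknowledge is the identification $p=(I+S)^{-1}$ itself: the paper defines $p$ only operationally and records its values on sample inputs, so a complete write-up must verify the prefix-parity formula against the $\mathrm{X}$-deposit mechanism of Lemma~\ref{lemma:xpass}. That identification is consistent with every example tabulated in the paper's proof and with Corollary~\ref{lemma:zeronomatter}, so this is a presentational gap rather than a mathematical one.
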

\begin{proof}
We prove it by induction.

For $r=1$, i.e., $k=1$. Look at the following sequences: \\
$p^0 = \mathrm{id}$ \\
$p((1,0,1,0)) = (1,1,0,0)$ \\
$p((1,1,0,0)) = (1,0,0,0)$ \\
Thus, this lemma holds when $r = 1$. Another example: \\
$r=2$, i.e., $k = 3$. Look at the following sequences: \\
$p((1,0,1,0,1,0,1,0)) = (1,1,0,0,1,1,0,0)$ \\
$p((1,1,0,0,1,1,0,0)) = (1,0,0,0,1,0,0,0)$ \\
$p((1,0,0,0,1,0,0,0)) = (1,1,1,1,0,0,0,0)$ \\
$p((1,1,1,1,0,0,0,0)) = (1,0,1,0,0,0,0,0)$ \\
$p((1,0,1,0,0,0,0,0)) = (1,1,0,0,0,0,0,0)$ \\
$p((1,1,0,0,0,0,0,0)) = (1,0,0,0,0,0,0,0)$ \\
Thus, this lemma holds when $r = 2$.

Suppose this lemma holds for $r=t$. For $r=t+1$, we can divide $\bm{b}$ into two halves $\bm{b}_1$ and $\bm{b}_2$. Then with the locality of $p$, we have $p^{k-1}(\bm{b}) = p^{2^{t+1}-2}(\bm{b}) = (p^{2^{t+1}-2}(\bm{b}_1),p^{2^{t+1}-2}(\bm{b}_2))$. By induction hypothesis, we have $p^{2^{t+1}-2}(\bm{b}_1) = (1,0,\cdots,0)$, thus $p^{k-1}(\bm{b}) = (1_{l-1},0,0,\cdots,0,1_{\frac{l}{2}-1},0,0,\cdots,0)$.

Further, $p^{k}(\bm{b}) = (1_{l-1},\cdots,1_{\frac{l}{2}},0,\cdots,0)$, $p^{k+1}(\bm{b}) = (1,0,\cdots,0,1_{\frac{3l}{4}-1},0,\cdots,0_{\frac{l}{2}},0,\cdots,0)$. Then, we again divide $p^{k+1}(\bm{b})$ into two halves $\bm{b}_1$ and $\bm{b}_2$. Because $p(\bm{0}) = \bm{0}$, thus, $p^{2k}(\bm{b}) = p^{k-1}(p^{k+1}(\bm{b})) = (p^{2^{t+1}-2}(\bm{b}_1), \bm{0}) = ((1,0,\cdots,0),(0,\cdots,0))=(1,0,\cdots, 0)$.

Thus, this lemma holds for $r=t+1$.

Therefore, this lemma holds for any $r \ge 1$.

\end{proof}
\begin{corollary}\label{lemma:zeronomatter}
For $\bm{a} \in \{0,1\}^l$, $p(\bm{0},\bm{a},\bm{0}) = (\bm{0},p(\bm{a}),\bm{0})$.
\end{corollary}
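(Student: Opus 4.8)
The plan is to first make the map $p$ explicit and then read off the zero-padding identity as a short parity computation. Recall that $p$ is induced by the operation $P^i$ of clearing the X gates on one qubit by sliding them together with Lemma~\ref{lemma:xpass} and annihilating adjacent pairs with Proposition~\ref{prop:x_rule}. I would first establish the concrete claim that $p$ is the running-parity (prefix-XOR) transform along the layer index: writing $\bm{b} = (b_{l-1},\dots,b_0)$, one has $p(\bm{b})_m = \bigoplus_{i=m}^{l-1} b_i$. Intuitively, when an X gate at a higher layer is pushed down past the CX gates to meet the next X gate below it, Lemma~\ref{lemma:xpass} deposits one X on the neighbouring qubit at each intervening layer, i.e. on exactly the half-open range strictly between the two merged gates; summing these contributions modulo two over all merges gives the cumulative parity scanned from the top index downward. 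This matches every instance used earlier, e.g. $p((1,0,1,0,\dots,1,0)) = (1,1,0,0,1,1,0,0,\dots)$ in the proof of Lemma~\ref{lemma:xcomp}.

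Granting this characterisation, the corollary is immediate. Scanning $(\bm{0},\bm{a},\bm{0})$ from the left, the leading block of zeros contributes no X gates, so the running parity stays $0$ and the output there is $\bm{0}$. On the middle block the running parity equals the parity accumulated within $\bm{a}$ alone, since the leading zeros add nothing; this is precisely $p(\bm{a})$. On the trailing block the running parity is frozen at the total parity of $\bm{a}$, which is $0$, so the output there is again $\bm{0}$, yielding $p(\bm{0},\bm{a},\bm{0}) = (\bm{0}, p(\bm{a}), \bm{0})$. An equivalent, coordinate-free phrasing is that $p$ respects locality because each elementary move of $P^i$ only rewrites layers lying strictly between two existing X gates, so a region free of X gates neither loses nor gains any.

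The step I expect to be the real work is proving the running-parity characterisation rigorously, that is, tracking where the deposited X gates land under repeated application of Lemma~\ref{lemma:xpass} and verifying that merging a pair of X gates at layers $a<b$ deposits X gates on the lower qubit exactly on $[a,b)$ and nowhere else. The delicate point sits at the trailing boundary: an unpaired X gate would be driven all the way to the bottom layer and deposit X gates across the entire remaining span, so the trailing zeros survive only when $\bm{a}$ carries an even number of X gates. This even-weight condition holds for every block that actually appears in Lemma~\ref{lemma:x_string} (for instance $(1,0,1,0,\dots,1,0)$ has weight $2^{r}$), which is exactly the regime in which the corollary is invoked; I would therefore state the hypothesis in that form and verify the parity bookkeeping at both ends to close the argument.
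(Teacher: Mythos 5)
Your proof is correct in substance but takes a genuinely different route from the paper's. The paper disposes of this corollary in one line, $p(\bm{0},\bm{a},\bm{0}) = (p(\bm{0}),p(\bm{a}),p(\bm{0})) = (\bm{0},p(\bm{a}),\bm{0})$, treating $p$ as a black box with a blockwise ``locality'' property (asserted, not proved, and already invoked in Lemma~\ref{lemma:x_string}) together with $p(\bm{0})=\bm{0}$. You instead propose a closed-form characterisation of $p$ as the running-parity (prefix-XOR) transform and read the corollary off as a two-line parity computation. What this buys is substantial: the formula reproduces every worked instance in Lemma~\ref{lemma:x_string} at a glance, and it exposes the hypothesis that the paper's one-liner silently assumes --- the trailing zero block is preserved only when $\bm{a}$ has even weight, since otherwise the accumulated parity leaks into it and turns it into $\bm{1}$. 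The paper's ``locality'' suffers from exactly the same boundary defect (blockwise splitting fails when the left block has odd weight), so your version is not weaker, merely more honest about the regime of validity; and as you note, even weight holds in every invocation (e.g. $(1,0,\dots,1,0)$ has weight $2^{r}$). The cost is that you must actually prove the prefix-XOR formula, which amounts to the same X-gate bookkeeping the paper distributes across Lemma~\ref{lemma:xpass} and Proposition~\ref{prop:x_rule}; you have correctly identified this as the real work. One small slip in the prose: with the orientation fixed by the paper's examples ($p((1,0,1,0))=(1,1,0,0)$, indices written high to low), merging X gates at layers $a<b$ deposits on the half-open range $(a,b]$, not $[a,b)$; your displayed formula $p(\bm{b})_m=\bigoplus_{i=m}^{l-1}b_i$ is the consistent one, so trust it over the interval notation.
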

\begin{proof}
$p(\bm{0},\bm{a},\bm{0}) = (p(\bm{0}),p(\bm{a}),p(\bm{0})) = (\bm{0},p(\bm{a}),\bm{0})$.

Thus the heading `0's and tailing `0's can be ignored.
\end{proof}

\begin{lemma}\label{lemma:cnot_ix}
For $n$-qubit RX-CX ansatz with linear entanglement, $\mathrm{CX}(i+1, i)$ will not affect the parameter combination result on the $i$-th qubit. 
\end{lemma}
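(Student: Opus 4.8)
The plan is to reduce the claim to the single commutation fact that an RX gate sitting on the \emph{target} qubit of a CX slides through it untouched, and then to observe that this makes $\mathrm{CX}(i+1,i)$ invisible to the fusion of RX parameters on qubit $i$. First I would record the key commutation. Since $\mathrm{RX}(\theta)=\cos(\tfrac{\theta}{2})\mathrm{I}-i\sin(\tfrac{\theta}{2})\mathrm{X}$ is a function of $\mathrm{X}$, and $\mathrm{CX}=\ket{0}\bra{0}\otimes\mathrm{I}+\ket{1}\bra{1}\otimes\mathrm{X}$ acts on its target by a conditional $\mathrm{X}$, the relation $\mathrm{RX}\,\mathrm{X}=\mathrm{X}\,\mathrm{RX}$ gives $\mathrm{CX}\cdot(\mathrm{I}\otimes\mathrm{RX}(\theta)) = (\mathrm{I}\otimes\mathrm{RX}(\theta))\cdot\mathrm{CX}$, where the second register is the target. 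This is exactly the special case of Proposition~\ref{prop:x_rule} with one of the two RX gates set to the identity, and crucially the commutation is clean: no residual gate is left on the control qubit. In the linear entanglement the unique CX touching qubit $i$ as its target is $\mathrm{CX}(i+1,i)$, so every such gate commutes with every RX gate on qubit $i$.

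Next I would restrict attention to the gates acting nontrivially on qubit $i$, since gates supported on disjoint qubits commute and are irrelevant to how qubit $i$'s parameters combine. These are the $\mathrm{RX}$ rotations on qubit $i$, the gates $\mathrm{CX}(i,i-1)$ in which qubit $i$ is the \emph{control}, and the gates $\mathrm{CX}(i+1,i)$ in which it is the \emph{target}. Using the commutation above, I slide each $\mathrm{CX}(i+1,i)$ past the neighbouring RX gates on qubit $i$: two RX rotations separated only by a $\mathrm{CX}(i+1,i)$ (and by gates off qubit $i$) therefore merge into one $\mathrm{RX}(\theta_1+\theta_2)$ by Proposition~\ref{prop:x_rule}, exactly as they would if that $\mathrm{CX}(i+1,i)$ were absent. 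Consequently the only gate that can block the fusion of two RX rotations on qubit $i$ is a $\mathrm{CX}(i,i-1)$, for which qubit $i$ is the control; the combination pattern on qubit $i$ is governed entirely by the positions of these control-CX gates — through Proposition~\ref{prop:x_rule} and the propagating $\mathrm{X}$ residues of Lemmas~\ref{lemma:xpass} and~\ref{lemma:xcomp} — and is independent of $\mathrm{CX}(i+1,i)$.

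The one point that needs care, and which I expect to be the main obstacle, is that $\mathrm{CX}(i+1,i)$ and $\mathrm{CX}(i,i-1)$ share qubit $i$ and do \emph{not} commute, so I cannot literally erase all target-CX gates from the whole circuit. The resolution is that I never need to: to compare the combination result with and without $\mathrm{CX}(i+1,i)$ I only ever move it past RX gates on qubit $i$, which is always legal, and I never move it past $\mathrm{CX}(i,i-1)$. Since the merging of RX parameters on qubit $i$ only reads the control-CX gates $\mathrm{CX}(i,i-1)$ (the ones that generate the propagating $\mathrm{X}$ residues analysed earlier), the transparent target-CX gates leave the combination result on qubit $i$ unchanged, which is precisely the assertion of the lemma.
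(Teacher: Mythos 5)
Your argument is correct and rests on the same fact as the paper's own proof: a CX acts on its target qubit by a (controlled) $\mathrm{X}$, which commutes with $\mathrm{RX}$, so $\mathrm{CX}(i+1,i)$ is transparent to the RX rotations on qubit $i$ and only the control-side gates $\mathrm{CX}(i,i-1)$ can obstruct parameter fusion. The paper phrases this by conditioning on the state of qubit $i+1$ and absorbing the resulting $\mathrm{I}$ or $\mathrm{X}$ into an adjacent $\mathrm{RX}$ as a constant shift of its angle, but the underlying mechanism is identical to your commutation relation $\mathrm{CX}\cdot(\mathrm{I}\otimes\mathrm{RX}(\theta))=(\mathrm{I}\otimes\mathrm{RX}(\theta))\cdot\mathrm{CX}$.
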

\begin{proof}
Consider the $i$-th and the $(i+1)$-th qubits in Figure~\ref{fig:rxcx}. Suppose the state of the $(i+1)$-th qubit is  
$\rho_{i+1} = p_0\qbit{0}{i+1} + p_1 \qbit{1}{i+1}$, then, $\mathrm{CX}(i+1, i)$ on target qubit $i$ is equivalent to single qubit gate $I$ with probability $p_0$ and gate $X$ with probability $p_1$.
Since both $I$ and $X$ gate can be combined into RX gate as a parameter shifting,
all $\mathrm{CX}(i+1, i)$ will not affect the result of parameter combination on the $i$-th qubit. 
Therefore, we only need to consider the effect from the $0$-th qubit to the $(i-1)$-th qubit.
\end{proof}
\begin{figure*}
    \centering
    \includegraphics[width=\textwidth]{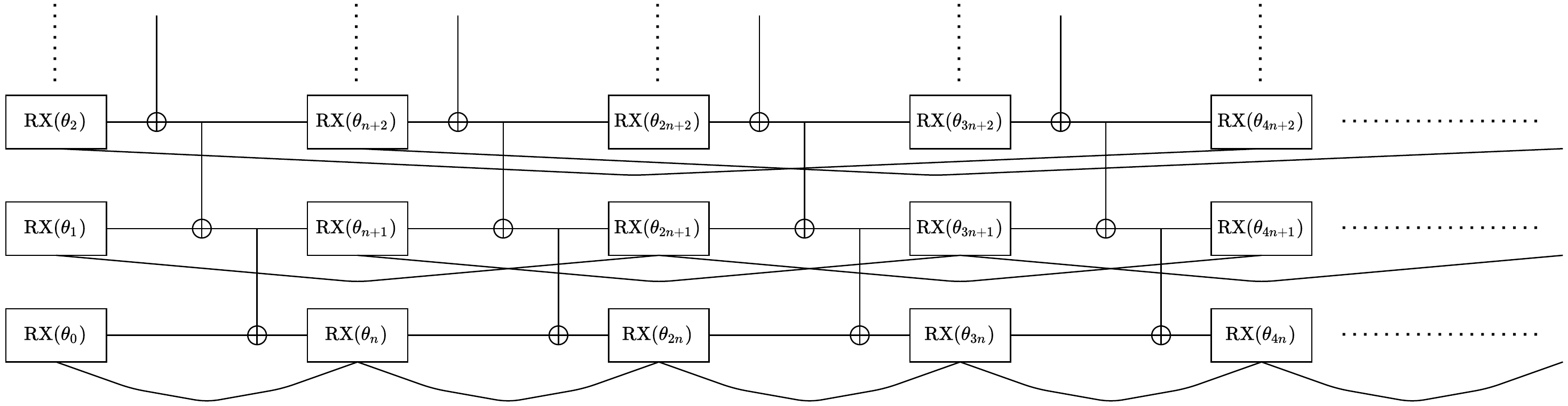}
    \caption{Illustration for parameter combination in an RX-CX ansatz with linear entanglement} %
    \label{fig:rxcx}
\end{figure*}

\begin{lemma}
An $n$-qubit RX-CX ansatz with linear entanglement, the parameters $\theta_{1+n*j}$ (the parameter in the $j$-th RX gate on the 1st qubit) and $\theta_{1+n*j+n*2^{\lceil \log_2 2 \rceil}}$ can be combined.

\end{lemma}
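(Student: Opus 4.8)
The plan is to show that the two RX gates on qubit~1 at layers $j$ and $j+2$ can be fused into a single RX gate carrying the sum of their angles, so that only $\theta_{1+nj}+\theta_{1+n(j+2)}$ remains as an effective parameter. Write the ansatz unitary as $U = P\,\mathrm{RX}(\theta_{1+n(j+2)})^{(1)}\,M\,\mathrm{RX}(\theta_{1+nj})^{(1)}\,R$, where a superscript $(1)$ marks a gate acting on qubit~1, $R$ gathers everything applied before the layer-$j$ rotation on qubit~1, $P$ everything applied after the layer-$(j+2)$ rotation, and $M$ the segment strictly between the two rotations. The whole claim reduces to proving that $M$ commutes with the Pauli $\mathrm{X}$ on qubit~1 (call it $X_1$): once $M X_1 = X_1 M$, $M$ commutes with every $\mathrm{RX}^{(1)}$, so $\mathrm{RX}(\theta_{1+n(j+2)})^{(1)}\,M\,\mathrm{RX}(\theta_{1+nj})^{(1)} = \mathrm{RX}(\theta_{1+n(j+2)}+\theta_{1+nj})^{(1)}\,M$, and $U$ depends on the two angles only through their sum.

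First I would identify which gates of $M$ fail to commute with $X_1$. Every single-qubit rotation commutes with $X_1$ (rotations on qubit~1 commute with $\mathrm{X}$, rotations on other qubits act on a different tensor factor). Among the entangling gates, those having qubit~1 as their \emph{target}, namely the $\mathrm{CX}(2,1)$ gates, also commute with $X_1$ and with $X_0$, and conjugating $X_1$ or $X_0$ by them never creates a new $\mathrm{X}$ on qubit~1 or qubit~2; this is exactly what Lemma~\ref{lemma:cnot_ix} guarantees, letting us ignore $\mathrm{CX}(2,1)$ (and every gate acting only on qubits $\ge 2$) when tracking parameter combination on qubit~1. Consequently the only gates of $M$ that do not commute with $X_1$ are the two $\mathrm{CX}(1,0)$ gates coming from layers $j$ and $j+1$, for which qubit~1 is the \emph{control}.

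The crux is then a short $\mathrm{X}$-propagation count through these two $\mathrm{CX}(1,0)$ gates. By Lemma~\ref{lemma:xpass} each $\mathrm{CX}(1,0)$ sends $X_1\mapsto X_1X_0$ while fixing $X_0$, so applying it twice gives $X_1\mapsto X_1X_0\mapsto X_1X_0\,X_0 = X_1$: the $\mathrm{X}$ that the first $\mathrm{CX}(1,0)$ deposits on the target qubit~0 is annihilated by the second. Hence $M X_1 M^{-1}=X_1$, which is precisely the commutation needed above, and the two parameters combine. The same bookkeeping shows the count of intervening $\mathrm{CX}(1,0)$ gates must be \emph{even} for the cancellation to occur (an odd number leaves a residual $X_0$), which is why the combination step on qubit~1 skips by $2 = 2^{\lceil\log_2 2\rceil}$ layers rather than~$1$.

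The step I expect to be the main obstacle is making the reduction to qubits~0 and~1 fully rigorous: one must be certain that the $\mathrm{CX}(2,1)$ gates (qubit~1 as target) genuinely drop out of the $X_1$-conjugation and cannot reintroduce an $\mathrm{X}$ on qubit~1 across the two layers, even through chains involving higher qubits. This is where the linear, layered structure of the entanglement is essential and where Lemma~\ref{lemma:cnot_ix} does the real work; after that reduction, all that remains is the two-line conjugation above.
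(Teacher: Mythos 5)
Your proof is correct, but it takes a genuinely different route from the paper's. The paper works in the Schr\"odinger picture: it uses the recursive block form of the layer unitary (Lemma~\ref{lemma:layer_uni}), expands the $(0,0)$ entry of the three-layer product $U^1_{1+2n}U^1_{1+n}U^1_1$ into a sum of terms, and pairs terms using the identity $U^0_nX_0U^0_0X_0=U^0_nU^0_0$ (Lemma~\ref{lemma:xcomp} at $k=0$) so that the trigonometric prefactors recombine by angle addition into functions of $\theta_{1+nj}+\theta_{1+n(j+2)}$ only. You instead argue in the Heisenberg picture: the segment $M$ between the two rotations commutes with $X_1$, because the only gates in $M$ that fail to commute with $X_1$ are the two $\mathrm{CX}(1,0)$ gates (qubit~1 as control), the first of which maps $X_1\mapsto X_1X_0$ and the second of which removes the $X_0$ again, while everything else ($\mathrm{RX}$ gates anywhere, $\mathrm{CX}(2,1)$ with qubit~1 as target, and gates supported on qubits $\ge 2$) commutes with both $X_1$ and $X_0$. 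This is sound, shorter, and makes the spacing $2=2^{\lceil\log_2 2\rceil}$ transparent. One attribution to fix: the fact you need about $\mathrm{CX}(2,1)$ is simply that a CX commutes with $\mathrm{X}$ on its \emph{target} (the counterpart of Lemma~\ref{lemma:xpass}, which treats $\mathrm{X}$ on the control), not Lemma~\ref{lemma:cnot_ix}, whose probabilistic-mixture argument serves a different purpose; with that elementary commutation stated, the reduction you flag as the main obstacle is already rigorous. Your conjugation viewpoint is essentially what Lemmas~\ref{lemma:xcomp} and~\ref{lemma:x_string} encode combinatorially when the paper generalizes to qubit $k$, where the propagated Pauli string spreads over qubits $0,\dots,k$ and needs $2^{\lceil\log_2(k+1)\rceil}$ layers to cancel; the paper's heavier entry-wise bookkeeping is the form it reuses in that induction.
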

\begin{proof}
Without loss of generality, we prove parameters $\theta_{1}$ and $\theta_{1+n*2^{\lceil \log_2 2 \rceil}}$ can be combined.

From Equation~(\ref{equ:recur_layer}) (in Lemma~\ref{lemma:layer_uni}), we have
$$U_1 = \begin{pmatrix}
C_{1}U^{0}(\bm{\theta}_{0}) & -iS_{1}U^{0}(\bm{\theta}_{0}) \\
    -iS_{1}U^{0}(\bm{\theta}_{0})X_{0} & C_{k}U^0(\bm{\theta}_{0})X_{0}  
\end{pmatrix}$$. \\
Then, we have:
$$U^1_{1+n}U^1_1 = \begin{pmatrix}
C_{1+n}C_1 U^0_n U^0_0 - S_{1+n}S_1 U^0_n U^0_0X_0 & -iC_{1+n}S_1 U^0_n U^0_0 - iS_{1+n}C_1 U^0_n U^0_0X_0 \\
-iS_{1+n}C_1U^0_nX_0U_0^0 - iC_{1+n}S_1U^0_nX_0U^0_0X_0 & -S_{1+n}S_1U_n^0X_0U_0^0 + C_{1+n}C_1U_n^{0}X_0U_0^0X_0
\end{pmatrix} $$

Without loss of generality, we show $\theta_{1}$ and $\theta_{1+2n}$ can be combined by consider the first entry of $U^1_{1+n*2^1}U^1_{1+n}U^1_1$:
$$C_{1+2n}C_{1+n}C_1U^0_{2n}U^0_{n}U^0_0 - S_{1+2n}C_{1+n}S_{1}U^0_{2n}U^0_{n}X_0U^0_0X_0 - C_{1+2n}S_{1+n}S_1U^0_{2n}U^0_{n}U^0_0X_0 - S_{1+2n}S_{1+n}C_1U^0_{2n}U^0_{n}X_0U_0^0$$

For the first two terms, we have:
$$C_{1+2n}C_{1+n}C_1U^0_{2n}U^0_{n}U^0_0 - S_{1+2n}C_{1+n}S_{1}U^0_{2n}U^0_{n}X_0U^0_0X_0 = C_{1+n}(C_{1+2n}C_1U^0_{2n}U^0_{n}U^0_0 - S_{1+2n}S_{1}U^0_{2n}U^0_{n}X_0U^0_0X_0)$$

To combine $\theta_{1+2n}$ and $\theta_1$ into $\theta_{1+2n}+\theta_1$, we only need to show $U^0_nX_0U^0_nX_0 = U^0_nU^0_0$. This can be shown by applying Lemma~\ref{lemma:xcomp} immediately.

We also have $ U_n^0 U_0^0 X_0 = U_n^0X_0 U_0^0 $ with Lemma~\ref{lemma:xcomp}. Thus the remaining two terms can be also combined, and parameter $\theta_{1+2n}$ and $\theta_1$ will be combined into $\theta_{1+2n}+\theta_1$. \\
Thus, $\theta_{1+2n}$ and $\theta_1$ can be combined into $\theta_{1+2n}+\theta_1$ for the first entry of $U^1_{1+n*2^1}U^1_{1+n}U^1_1$. The parameter combination for other entries can be proved similarly.

\end{proof}

\begin{lemma}\label{lemma:layer_cnt}
For $n$-qubit RX-CX ansatz with linear entanglement, the two parameters $\theta_{i+n*j}$ (the parameter in the $j$-th $\rm RX$ gate on the $i$-th qubit) and $\theta_{i + n*j + n*2^{\lceil \log_2 (i+1)\rceil}}$ on the $i$-th qubit can be combined.
\end{lemma}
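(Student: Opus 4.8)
The plan is to reduce the whole circuit to the subsystem on qubits $0,\dots,i$, then show that the two $\mathrm{RX}$ gates on qubit $i$ that are $l=2^{\lceil\log_2(i+1)\rceil}$ layers apart depend on their parameters only through the sum, so that Proposition~\ref{prop:x_rule} merges them. First I would apply Lemma~\ref{lemma:cnot_ix}: every $\mathrm{CX}(i+1,i)$ touches qubit $i$ only as a target, acting as an $\mathrm{I}$ or $\mathrm{X}$ that is absorbed into a neighbouring $\mathrm{RX}$ as a parameter shift, and hence cannot affect the combination on qubit $i$. What remains is governed by the layer unitary $U^{i}$ of Lemma~\ref{lemma:layer_uni}, in which qubit $i$ enters only as the control of $\mathrm{CX}(i,i-1)$.

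Second, using the factorization $U^{i}_{m}=(\mathrm{I}\otimes U^{i-1}_{m})\,\mathrm{CX}(i,i-1)\,(\mathrm{RX}(\theta_{i+nm})\otimes\mathrm{I})$ read off from Equation~(\ref{equ:recur_layer}), I would expand the window product $U^{i}_{j+l}\cdots U^{i}_{j}$ in the $2\times2$ outer block indexed by qubit $i$. In this expansion each layer contributes a $\cos\frac{\theta_{i+nm}}{2}$ or $\sin\frac{\theta_{i+nm}}{2}$ in the qubit-$i$ block, while along each block path the lower-register content is a product of $U^{i-1}_{m}$ factors carrying an $\mathrm{X}_{i-1}$ wherever the path crosses a $\mathrm{CX}(i,i-1)$ by Lemma~\ref{lemma:xpass}; these are exactly the operators $G^{i-1}_{j}(\bm{b})$ of Corollary~\ref{lemma:trans_invar}, indexed by a bit-string $\bm{b}$ of length $l$.

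Third, and this is the crux, the gap was chosen so that these $\mathrm{X}_{i-1}$ insertions span exactly one period: since $l=2^{\lceil\log_2(i+1)\rceil}=2^{\lceil\log_2((i-1)+2)\rceil}$, the translation-invariant form of Lemma~\ref{lemma:xcomp}, namely Corollary~\ref{lemma:trans_invar}, gives $G^{i-1}_{j}(\bm{b})=G^{i-1}_{j}(\overline{\bm{b}})$. Pairing each block path with its complement identifies the two lower-register factors, so their coefficients may be added; the dependence on $\theta_{i+nj}$ and $\theta_{i+n(j+l)}$ then collapses through $\cos\frac{\theta_{i+nj}}{2}\cos\frac{\theta_{i+n(j+l)}}{2}-\sin\frac{\theta_{i+nj}}{2}\sin\frac{\theta_{i+n(j+l)}}{2}=\cos\frac{\theta_{i+nj}+\theta_{i+n(j+l)}}{2}$ (and its sine companion) into a single rotation of angle $\theta_{i+nj}+\theta_{i+n(j+l)}$, exactly as in the worked $i=1$ case. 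Equivalently, once the full-period $\mathrm{X}_{i-1}$-string is shown to cancel, the two rotations sit adjacent and Proposition~\ref{prop:x_rule} merges them directly.

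I expect the main obstacle to be bookkeeping rather than a new idea: one must check that the $\mathrm{X}_{i-1}$ deposited as the $\mathrm{X}$ part of each rotation crosses a $\mathrm{CX}(i,i-1)$ lands in precisely the slot that makes the relevant conjugate pair of lower-register products differ by a full complement of length $l$ (so Corollary~\ref{lemma:trans_invar} applies verbatim), and that this pairing collapses every block entry of $U^{i}_{j+l}\cdots U^{i}_{j}$, not merely the first. This is the same computation done by hand for $i=1$ in the preceding lemma, now driven for general $i$ by the recursion of Lemma~\ref{lemma:layer_uni} and the period bound of Lemma~\ref{lemma:xcomp}.
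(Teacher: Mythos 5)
Your proposal follows essentially the same route as the paper's proof: reduce to qubits $0,\dots,i$ via Lemma~\ref{lemma:cnot_ix}, expand the window of $l=2^{\lceil\log_2(i+1)\rceil}$ layer unitaries from Lemma~\ref{lemma:layer_uni} into block paths whose lower-register factors are $G^{i-1}(\bm{b})$ operators, pair each path with the one obtained by flipping the first and last block choices (which complements $\bm{b}$ over a full period so that Lemma~\ref{lemma:xcomp}/Corollary~\ref{lemma:trans_invar} identifies the factors), and collapse the coefficients by the angle-addition identity. The bookkeeping you flag is exactly what the paper isolates into Lemma~\ref{lemma:layer_comb} (the prefix-XOR relation between the path vector $\bm{a}$ and the $\mathrm{X}_{i-1}$-insertion vector $\bm{b}$) and Lemma~\ref{lemma:layer_join} (that a path and its partner land in the same entry), so the argument is correct and matches the paper.
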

\begin{proof}

With Lemma~\ref{lemma:cnot_ix}, we only need to consider the effect from the 0-th qubit to the $(i-1)$-th qubit when computing the parameter combination results on $i$-th qubit.

We prove it by induction. For qubit $i = 0$ of $n$-qubit ansatz, we already have that $\theta_{i}$ and $\theta_{i+n*2^{\lceil \log_2 (i+1)\rceil}}$ can be combined by Proposition~\ref{prop:x_rule}.

Assume that when $i = k - 1$, the two parameters $\theta_{i+n*j}$ (the parameter in the $j$-th $\rm RX$ gate on the $i$-th qubit) and $\theta_{i + n*j + n*2^{\lceil \log_2 (i+1)\rceil}}$ on the $i$-th qubit can be combined.

So, for $i = k$, 
without loss of generality, we only prove that parameters $\theta_{k}$ and $\theta_{k+n*2^{\lceil \log_2 (k+1)\rceil}}$ can be combined. 

From Lemma~\ref{lemma:layer_uni}, the unitary of the first layer is 
\begin{equation}
 U^k(\bm{\theta}_k) = \begin{pmatrix}
C_{k}U^{k-1}(\bm{\theta}_{k-1}) & -iS_{k}U^{k-1}(\bm{\theta}_{k-1}) \\
    -iS_{k}U^{k-1}(\bm{\theta}_{k-1})X_{k-1} & C_{k}U^{k-1}(\bm{\theta}_{k-1})X_{k-1}  
\end{pmatrix}
\end{equation}

Then, consider the composition of the first $l = 2^{\lceil \log_2 (k+1) \rceil}$ layers: $\prod\limits_{i = 0}^{l} U^k(\bm{\theta}_{n*i+k})$. For simplicity, in the following, we denote $U^k(\bm{\theta}_{n*i+k})$ by $U^k_{n*i+k}$. \\
Without loss of generality, we still consider the (0, 0) entry of $\prod\limits_{i = 0}^{l} U^k(\bm{\theta}_{n*i+k})$. 
The (0, 0) entry is the sum of many terms during the matrix multiplication. By induction (details are left in Lemma~\ref{lemma:layer_comb} below), we know every term of the (0, 0) entry of $\prod\limits_{i = 0}^{l} U^k(\bm{\theta}_{n*i+k})$ obeys the following constraints: \\
(1) every term is of the form: $\prod\limits_{i=0}^{l} T_{a_i}(\theta_{n*i+k})G_{b_i}^{k-1}(\bm{\theta}_{n*i+k-1})$, where $a_i \in \{0, 1\}$, $T_{a_i}((\theta_{n*i+k})) = (C_{n*i+k})^{1-a_i}(-iS_{n*i+k})^{a_i}$; $b_i \in \{ 0, 1 \}$,  $G_{b_i}^{k-1}(\bm{\theta}_{n*i+k-1}) = (U^{k-1}(\bm{\theta}_{n*i+k-1})X_{k-1})^{b_i} (U^{k-1}(\bm{\theta}_{n*i+k-1}))^{1-b_i}$. \\
(2) For every term in the such entry,  $\bm{b}=(b_0, b_1, \cdots, b_{l})$ is decided by $\bm{a}=(a_0, a_1, \cdots, a_{l})$ by the following process:
$b_0 = a_0$; $b_i = \text{XOR}(b_{i-1}, a_i)$ if $i \ge 1$. Thus, we can represent a term by only using vector $\bm{a}$.

For simplicity, we define some operations on vector $\bm{a}$ or $\bm{b}$: \\
(1) \textbf{Complement}: $\bar{\bm{a}} = (1-a_0, \cdots, 1-a_{l})$.
 \\
(2) \textbf{Length}: $l(\bm{a})$ (or $l(\bm{b})$) refers the number of components in vector $\bm{a}$ (resp. $\bm{b}$). \\
(3) \textbf{Slicing}: $\bm{a}(n:m) = (a_n, a_{n+1}, \cdots, a_m)$. \\
(4) \textbf{Indexing}: $\bm{a}(i) = a_i$. \\
(5) \textbf{Index set}: $s(\bm{a})$ is the lowest index of vector $\bm{a}$, $u(\bm{a})$ is the highest index of $\bm{a}$.

We denote $\prod\limits_{i=s(\bm{a})}^{u(\bm{a})}$
$T_{a_i}(\theta_{n*i+k})G_{b_i}^{k-1}(\bm{\theta}_{n*i+k-1})$ by $R(\bm{a})$; denote $\prod\limits_{i=s(\bm{b})}^{u(\bm{b})} G_{b_i}^{k-1}(\bm{\theta}_{n*i+k-1})$ by $G(\bm{b})$.

Then, we show how terms in the (0, 0) entry combines. Consider a pair of terms $R(\bm{a}_1)$ and $R(\bm{a}_2)$ where $\bm{a}_1(0) = 1 - \bm{a}_2(0)$, $\bm{a}_1(l) = 1 - \bm{a}_2(l)$, and $\bm{a}_1(i) = \bm{a}_2(i)$, $\forall i \in (0, l)$. We prove that if $R(\bm{a}_1)$ is in the (0, 0) entry, then $R(\bm{a}_2)$ must be also in the (0, 0) entry. \\
Actually, if $\bm{a}_1(l) = 0$, since $\prod\limits_{i = 0}^{l} U^k(\bm{\theta}_{n*i+k}) = U^k(\bm{\theta}_{n*l+k})\prod\limits_{i = 0}^{l-1} U^k(\bm{\theta}_{n*i+k})$, $R(\bm{a}_1(0:l-1))$ must be in the (0,0) entry of $\prod\limits_{i = 0}^{l-1} U^k(\bm{\theta}_{n*i+k})$. From Lemma~\ref{lemma:layer_join} below, we know that $R(\bm{a}_2(0:l-1))$ is in the (1,0) entry of $\prod\limits_{i = 0}^{l-1} U^k(\bm{\theta}_{n*i+k})$. Thus, $R(\bm{a}_2) = -iS_{n*l+k}U^{k-1}_{n*l+k-1}R(\bm{a}_2(0:l-1))$ is in the (0, 0) entry of $\prod\limits_{i = 0}^{l} U^k(\bm{\theta}_{n*i+k})$. We can prove the case $\bm{a}_1(l) = 1$ similarly. Thus, if $R(\bm{a}_1)$ is in the (0, 0) entry, then $R(\bm{a}_2)$ must be also in the (0, 0) entry. \\
Then, $R(\bm{a}_1) + R(\bm{a}_2) = T_{\bm{a}_1(0)}T_{\bm{a}_1(l)}\prod\limits_{i=1}^{l-1} T_{\bm{a}_1(i)}(\theta_{n*i+k})G(\bm{b}_1) + T_{\bm{a}_2(0)}T_{\bm{a}_2(l)}\prod\limits_{i=1}^{l-1} T_{\bm{a}_2(i)}(\theta_{n*i+k})G(\bm{b}_2)$. \\
Note that $T_{\bm{a}_1(0)}(\theta_{k})T_{\bm{a}_1(l)}(\theta_{n*l+k}) + T_{\bm{a}_2(0)}(\theta_{k})T_{\bm{a}_2(l)}(\theta_{n*l+k}) = T_s(\theta_{k}+\theta_{n*l+k})$, where $s = \min (\bm{a}_1(0)+\bm{a}_1(l), \bm{a}_2(0)+\bm{a}_2(l))$, then once $G(\bm{b}_1) = G(\bm{b}_2)$, for $R(\bm{a}_1) + R(\bm{a}_2)$, $\theta_{k}$ and $\theta_{n*l+k}$ can be combined into $\theta_{k}+\theta_{n*l+k}$.

Actually, since $\bm{a}_1(0)$ and $\bm{a}_2(0)$ complements, we have $\bm{b}_1(0:l-1) = \overline{\bm{b}_2(0:l-1)}$.
Note that $\bm{b}_1(l) = \bm{b}_2(l)$, we only need to consider $G(\bm{b}_1(0,l-1)) = G(\bm{b}_2(0,l-1))$. 

Note that $l=2^{\lceil \log_2 (k+1) \rceil} = 2^{\lceil \log_2 ((k-1)+2) \rceil}$, then with Lemma~\ref{lemma:xcomp} and Corollary~\ref{lemma:xcomp_mul}, we have $G(\bm{b}_1(0:l-1)) = G(\bm{b}_2(0:l-1))$. This means parameters $\theta_{n*l+k}$ and $\theta_{k}$ can be combined in the (0, 0) entry. We can prove this result on other entries similarly. \\
Therefore, the two parameters $\theta_{i+n*j}$ (the parameter in the $j$-th $\rm RX$ gate on the $i$-th qubit) and $\theta_{i + n*j + n*2^{\lceil \log_2 (i+1)\rceil}}$ on the $i$-th qubit can be combined.
\end{proof}

\begin{lemma}\label{lemma:layer_comb}
$\forall r \ge 1$, every term of the (0, 0) or (1, 0) entry of $\prod\limits_{i = 0}^{r} U^k(\bm{\theta}_{n*i+k})$ obeys the following constraints: \\
(1) Every term is of the form: $\prod\limits_{i=0}^{r} T_{a_i}(\theta_{n*i+k})G_{b_i}^{k-1}(\bm{\theta}_{n*i+k-1})$, where $a_i \in \{0, 1\}$, $T_{a_i}((\theta_{n*i+k})) = (C_{n*i+k})^{1-a_i}(-iS_{n*i+k})^{a_i}$; $b_i \in \{ 0, 1 \}$,  $G_{b_i}^{k-1}(\bm{\theta}_{n*i+k-1}) = (U^{k-1}(\bm{\theta}_{n*i+k-1})X_{k-1})^{b_i} (U^{k-1}(\bm{\theta}_{n*i+k-1}))^{1-b_i}$. \\
(2) For every term in such entry, $\bm{b}= (b_0, b_1, \cdots, b_{r})$ is decided by $\bm{a}= (a_0, a_1, \cdots, a_{r})$ by the following process:
$b_0 = a_0$; $b_i = \mathrm{XOR}(b_{i-1}, a_i)$ if $i \ge 1$. For simplicity, we use $\bm{p}$ to represent this relation, i.e., $\bm{b} = \bm{p}(\bm{a})$. Thus, we can represent a term by using only vector $\bm{a}$.
\end{lemma}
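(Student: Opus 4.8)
The plan is to prove the statement by induction on $r$, carrying the descriptions of the $(0,0)$ and $(1,0)$ top-level blocks \emph{simultaneously} as a single coupled invariant; as will become clear, neither block can be handled in isolation. Write $P_r = \prod_{i=0}^{r} U^k(\bm{\theta}_{ni+k})$ and abbreviate $V_i = U^{k-1}(\bm{\theta}_{ni+k-1})$, $W_i = V_i X_{k-1}$, so that $G_0^{k-1}$ contributes $V_i$ and $G_1^{k-1}$ contributes $W_i$. The starting observation, read directly off Lemma~\ref{lemma:layer_uni}, is that the four $2^k\times 2^k$ blocks of a single layer satisfy
\begin{equation*}
(U^k_i)_{(\alpha,\beta)} = T_{\alpha\oplus\beta}(\theta_{ni+k})\,G^{k-1}_{\alpha}(\bm{\theta}_{ni+k-1}), \qquad \alpha,\beta\in\{0,1\},
\end{equation*}
because the diagonal blocks carry the $\cos$ factor $T_0$ and the off-diagonal blocks the $-i\sin$ factor $T_1$, while the $V$-versus-$W$ choice is governed by the \emph{row} index $\alpha$ alone. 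This single identity encodes the two mechanisms behind the lemma: the scalar index is the parity $\alpha\oplus\beta$ of the two block indices, and the $G$-index equals the row index.

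\textbf{Base case.} For $r=1$ I would expand $P_1 = U^k_{n+k}U^k_k$ block by block: the $(0,0)$ block is $C_{n+k}C_k\,V_1V_0 - S_{n+k}S_k\,V_1W_0$ and the $(1,0)$ block is $-iS_{n+k}C_k\,W_1V_0 - iC_{n+k}S_k\,W_1W_0$. Reading $\bm a$ off the scalar factors and $\bm b$ off the $G$-words, each summand obeys $b_0=a_0$ and $b_1=\mathrm{XOR}(b_0,a_1)$, which settles the base case.

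\textbf{Inductive step.} Assuming the claim for $P_{r-1}$, I would peel off the leftmost layer, $P_r = U^k_{nr+k}\,P_{r-1}$, and expand
\begin{align*}
(P_r)_{00} &= (U^k_{nr+k})_{00}(P_{r-1})_{00} + (U^k_{nr+k})_{01}(P_{r-1})_{10}, \\
(P_r)_{10} &= (U^k_{nr+k})_{10}(P_{r-1})_{00} + (U^k_{nr+k})_{11}(P_{r-1})_{10}.
\end{align*}
The decisive bookkeeping comes from the block identity: both blocks in row $0$ of $U^k_{nr+k}$ carry $G_0$ and hence append $b_r=0$, while both blocks in row $1$ carry $G_1$ and append $b_r=1$; within each row the diagonal block contributes $a_r=0$ and the off-diagonal block $a_r=1$. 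Since $a_r = \mathrm{XOR}(b_{r-1},b_r)$, the $T_0G_0$ block must meet the entry with $b_{r-1}=0$ (namely $(P_{r-1})_{00}$) and the $T_1G_0$ block the entry with $b_{r-1}=1$ (namely $(P_{r-1})_{10}$), and symmetrically for the $(1,0)$ output. By the induction hypothesis every summand of $(P_{r-1})_{00}$ and $(P_{r-1})_{10}$ already satisfies $b_i=\mathrm{XOR}(b_{i-1},a_i)$ for all $i\le r-1$, so appending the single new pair $(a_r,b_r)$ extends the recurrence by exactly one step and preserves the prescribed form for both entries of $P_r$.

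The main obstacle, and the reason for the coupled induction, is precisely this cross-feeding: the $(0,0)$ entry of $P_r$ draws on the $(1,0)$ entry of $P_{r-1}$ and vice versa, so a proof that tracked only one block would not close. Everything else is mechanical once the block identity is in place. A slicker but equivalent alternative that I would also record avoids the induction entirely: expanding the whole product as a path sum over block-index strings $\alpha_0=0,\alpha_1,\dots,\alpha_{r+1}$ and setting $a_i=\alpha_{i+1}\oplus\alpha_i$, $b_i=\alpha_{i+1}$ makes the boundary value $b_0=a_0$ (forced by $\alpha_0=0$) and the recurrence $b_i=\mathrm{XOR}(b_{i-1},a_i)$ immediate consequences of the definitions.
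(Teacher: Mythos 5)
Your proof is correct and follows essentially the same route as the paper's: an induction on $r$ that tracks the $(0,0)$ and $(1,0)$ entries jointly, peels off the leftmost layer $U^k(\bm{\theta}_{n r + k})$, and exploits the fact that the row index of a block determines the $G$-factor (hence $b_r$) while the diagonal/off-diagonal position determines the $T$-factor (hence $a_r$), so that the cross-feeding between the two entries closes the recurrence $b_i = \mathrm{XOR}(b_{i-1}, a_i)$. Your closing path-sum reformulation is just the same bookkeeping unrolled and does not change the substance.
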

\begin{proof}
The first conclusion of the lemma is obvious and we focus on the second one. We can prove it by induction.

We first consider $r=1$. The (0, 0) entry is $C_{k+n}C_k U^{k-1}_{n+k-1} U^{k-1}_{k-1} - S_{1+n}S_1 U^{k-1}_{n+k-1} U^{k-1}_{k-1}X_{k-1}$, the (1,0) entry is $-iS_{k+n}C_k U^{k-1}_{n+k-1} X_{k-1} U^{k-1}_{k-1} -iC_{1+n}S_1 U^{k-1}_{n+k-1}X_{k-1} U^{k-1}_{k-1}X_{k-1}$.  It is easy to verify that this lemma holds when $r = 1$.

Then we assume this lemma holds when $r = t$.

Then for $r = t+1$, we have $\prod\limits_{i = 0}^{t+1} U^k(\bm{\theta}_{n*i+k}) = U^k(\bm{\theta}_{n*(t+1)+k})\prod\limits_{i = 0}^{t} U^k(\bm{\theta}_{n*i+k}) = \begin{pmatrix}
C_{n*(t+1)+k}U^{k-1}_{n*(t+1)+k} & -iS_{n*(t+1)+k}U^{k-1}_{n*(t+1)+k} \\
-iS_{n*(t+1)+k}U^{k-1}_{n*(t+1)+k}X_{k-1} & C_{n*(t+1)+k}U^{k-1}_{n*(t+1)+k}X_{k-1} 
\end{pmatrix} \prod\limits_{i = 0}^{t} U^k(\bm{\theta}_{n*i+k})$.

Since $\prod\limits_{i = 0}^{t} U^k(\bm{\theta}_{n*i+k}) = U^k(\bm{\theta}_{n*t+k})\prod\limits_{i = 0}^{t-1} U^k(\bm{\theta}_{n*i+k})$, we know that for every term in the (0, 0) entry of $\prod\limits_{i = 0}^{t} U^k(\bm{\theta}_{n*i+k})$, we must have $b_t = 0$ because $G^{k-1}_{b_{t}}(\bm{\theta}_{n*t+k-1}) = U^{k-1}(\bm{\theta}_{n*t+k-1})$. Likewise, we know that for every term in the (1,0) entry of $\prod\limits_{i = 0}^{t} U^k(\bm{\theta}_{n*i+k})$, we must have $b_t = 1$.

For every term in the (0, 0) entry of $\prod\limits_{i = 0}^{t} U^k(\bm{\theta}_{n*i+k})$, denote its $\bm{a}$-vector by $\bm{a}_t$ and its $\bm{b}$-vector by $\bm{b}_t$. After multiplying with $C_{n*(t+1)+k}U^{k-1}_{n*(t+1)+k}$, we have vector $\bm{a}_{t+1} = (\bm{a}_t , 0)$, $\bm{b}_{t+1} = (\bm{b}_t , 0)$. Thus, $\bm{b}_{t+1}(t+1) = 0 = \mathrm{XOR}(\bm{a}_{t+1}(t+1),\bm{b}_{t+1}(t))$. And since we already know $\bm{b}_t = \bm{p}(\bm{a}_t)$, we now have $\bm{b}_{t+1} = \bm{p}(\bm{a}_{t+1})$. Likewise, for every term in the (1,0) entry of $\prod\limits_{i = 0}^{t} U^k(\bm{\theta}_{n*i+k})$, after multiplying with $-iS_{n*(t+1)+k}U^{k-1}_{n*(t+1)+k}$, we have vector $\bm{a}_{t+1}(t+1)=1$. Since $\bm{b}_{t+1}(t) = \bm{b}_t(t) = 1$, we have $\bm{b}_{t+1}(t+1) = 0 = \mathrm{XOR}(\bm{a}_{t+1}(t+1), \bm{b}_t(t) = 1)$. Thus, in this case, we still have $\bm{b}_{t+1} = \bm{p}(\bm{a}_{t+1})$.
Combining these two parts, we have terms in (0, 0) entry satisfies the two constraints.

Likewise, we can prove the lemma for (1,0) entry. 
\end{proof}

\begin{lemma}\label{lemma:layer_join}
$\forall r \ge 1$, if $R(\bm{a}_1) = \prod\limits_{i=0}^{r} T_{\bm{a}_1(i)}(\theta_{n*i+k})G(\bm{b}_1)$ is in the (0, 0) entry of $\prod\limits_{i = 0}^{r} U^k(\bm{\theta}_{n*i+k})$, then $R(\bm{a}_2)$ must be in the (1, 0) entry, where $\bm{a}_2(0) = 1-\bm{a}_2(0)$, $\bm{a}_2(i)=\bm{a}_1(i)$, $\forall i \in (0, r]$.
\end{lemma}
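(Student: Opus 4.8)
The plan is to reduce the whole statement to the single bookkeeping fact established in Lemma~\ref{lemma:layer_comb}: a term $R(\bm{a})$ occurring in the block decomposition of $\prod_{i=0}^{r} U^k(\bm{\theta}_{n*i+k})$ carries a companion vector $\bm{b} = \bm{p}(\bm{a})$, and (with the input column pinned to $0$) its block position is completely determined by the last coordinate $\bm{b}(r)$: it sits in the $(0,0)$ entry when $\bm{b}(r)=0$ and in the $(1,0)$ entry when $\bm{b}(r)=1$. This reading comes straight from the block recursion of Lemma~\ref{lemma:layer_uni}, where the lower block row of each factor $U^k$ is exactly the one that appends $X_{k-1}$; tracking whether $X_{k-1}$ survives after the last factor is the same as tracking $\bm{b}(r)$. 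Thus the lemma becomes a purely combinatorial statement about how $\bm{b}(r)$ reacts to flipping the first coordinate of $\bm{a}$.

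First I would record the base data. Since $R(\bm{a}_1)$ lies in the $(0,0)$ entry, Lemma~\ref{lemma:layer_comb} gives $\bm{b}_1 = \bm{p}(\bm{a}_1)$ with $\bm{b}_1(r)=0$. Define $\bm{a}_2$ by $\bm{a}_2(0)=1-\bm{a}_1(0)$ and $\bm{a}_2(i)=\bm{a}_1(i)$ for $i\in(0,r]$. The key computation is that flipping the single entry $\bm{a}(0)$ complements the \emph{entire} companion vector. I would prove $\bm{b}_2(i)=1-\bm{b}_1(i)$ for all $i$ by induction on $i$: the base case is $\bm{b}_2(0)=\bm{a}_2(0)=1-\bm{a}_1(0)=1-\bm{b}_1(0)$, and the inductive step uses that $\mathrm{XOR}$ with a fixed bit commutes with complementation of the other argument, namely $\bm{b}_2(i)=\mathrm{XOR}(\bm{b}_2(i-1),\bm{a}_2(i))=\mathrm{XOR}(1-\bm{b}_1(i-1),\bm{a}_1(i))=1-\mathrm{XOR}(\bm{b}_1(i-1),\bm{a}_1(i))=1-\bm{b}_1(i)$.

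Taking $i=r$ yields $\bm{b}_2(r)=1-\bm{b}_1(r)=1$, so by the position rule $R(\bm{a}_2)$ lies in the $(1,0)$ entry, which is the claim. To be fully rigorous I would also invoke the complete enumeration implicit in Lemma~\ref{lemma:layer_comb}: since the input column is fixed at $0$, each $\bm{a}\in\{0,1\}^{r+1}$ indexes exactly one path through the block factors and hence one genuine term, so $R(\bm{a}_2)$ really does occur in the product rather than being a merely formal expression.

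The hard part will be justifying the position rule ``entry $=(\bm{b}(r),0)$'' cleanly instead of by brute matrix expansion; this is where one must lean on the block recursion of $U^k$ and the bijection $\bm{b}=\bm{p}(\bm{a})$ from Lemma~\ref{lemma:layer_comb}. An alternative is a direct induction on $r$ that peels off the top factor $U^k(\bm{\theta}_{n*r+k})$, but in the case $\bm{a}_1(r)=1$ the truncated term $R(\bm{a}_1(0:r-1))$ sits in the $(1,0)$ entry of $\prod_{i=0}^{r-1}U^k$, so the inductive hypothesis would have to be strengthened to the biconditional ``$R(\bm{a})$ lies in $(0,0)$ iff its first-coordinate flip lies in $(1,0)$'' to apply in both directions. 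The complement-of-$\bm{b}$ argument above sidesteps this case analysis and is automatically symmetric, so I would present it as the main route and mention the inductive version only as a cross-check.
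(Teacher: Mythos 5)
Your proof is correct, but it takes a genuinely different route from the paper's. The paper proves this lemma by direct induction on $r$: it peels off the top factor $U^k(\bm{\theta}_{n*(t+1)+k})$, notes that (for $\bm{a}_1(t+1)=0$) the truncated term $R(\bm{a}_1(0:t))$ sits in the $(0,0)$ entry of the shorter product, applies the inductive hypothesis to place $R(\bm{a}_2(0:t))$ in the $(1,0)$ entry, and then multiplies back by the $(1,1)$ block $C_{n*(t+1)+k}U^{k-1}_{n*(t+1)+k}X_{k-1}$. This is exactly the ``alternative'' you relegate to a cross-check, and your criticism of it is on point: the paper's ``without loss of generality, $\bm{a}_1(t+1)=0$'' is not actually symmetric, since the case $\bm{a}_1(t+1)=1$ lands the truncated term in the $(1,0)$ entry and therefore needs the \emph{converse} of the stated one-directional hypothesis; the paper leaves this strengthening implicit. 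Your main route instead works entirely at the level of the companion vectors: from $b_0=a_0$ and $b_i=\mathrm{XOR}(b_{i-1},a_i)$ you show that flipping $\bm{a}(0)$ complements all of $\bm{b}$, hence flips $\bm{b}(r)$, and then you invoke the position rule (row index of the term equals $\bm{b}(r)$, which is the content of the $b_t=0$ versus $b_t=1$ observation inside the proof of Lemma~\ref{lemma:layer_comb}, together with the fact that each $\bm{a}\in\{0,1\}^{r+1}$ indexes exactly one path and hence one term in column $0$). This buys you an argument that is automatically biconditional and avoids the case split entirely; the cost is that you must make the existence-and-position rule explicit, which the paper only states in one direction. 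Both proofs are valid; yours is the more robust formulation of what the paper is implicitly doing.
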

\begin{proof}
We prove it by induction. \\
We first consider $r = 1$. The (0, 0) entry is $C_{k+n}C_k U^{k-1}_{n+k-1} U^{k-1}_{k-1} + (-iS_{1+n})(-iS_1) U^{k-1}_{n+k-1} U^{k-1}_{k-1}X_{k-1}$, the (1,0) entry is $C_{1+n}(-iS_1 )U^{k-1}_{n+k-1}X_{k-1} +(-iS_{k+n})C_k U^{k-1}_{n+k-1} X_{k-1} U^{k-1}_{k-1}  U^{k-1}_{k-1}X_{k-1}$.  It is easy to verify that this lemma holds when $r = 1$. 

Then we assume this lemma holds when $r=t$. \\
Then, for $r = t + 1$, without loss of generality, we assume $\bm{a}_1(t+1) = 0$. If $R(\bm{a}_1) = \prod\limits_{i=0}^{t+1} T_{\bm{a}_1(i)}(\theta_{n*i+k})G(\bm{b}_1)$ is in the (0, 0) entry of $\prod\limits_{i = 0}^{t+1} U^k(\bm{\theta}_{n*i+k})$, then $R(\bm{a}_1(0:t))$ must be in the (0, 0) entry of $\prod\limits_{i = 0}^{t} U^k(\bm{\theta}_{n*i+k})$. Then, by inductive hypothesis, we have $R(\bm{a}_2(0:t))$ must be in the (1, 0) entry of $\prod\limits_{i = 0}^{t} U^k(\bm{\theta}_{n*i+k})$. Since $\prod\limits_{i = 0}^{t+1} U^k(\bm{\theta}_{n*i+k}) = U^k(\bm{\theta}_{n*(t+1)+k})\prod\limits_{i = 0}^{t} U^k(\bm{\theta}_{n*i+k})$, we have $R(\bm{a}_2) = C_{n*(t+1)+k}U^{k-1}(\bm{\theta}_{n*(t+1)+k-1})X_{k-1}R(\bm{a}_2(0:t))$ being in the (1, 0) entry. Thus this lemma holds when $r = t+1$.

To conclude, this lemma holds for any $r\ge 1$.
\end{proof}

\textbf{Proposition 3.2.}  
\textit{Suppose the number of effective parameters of an $n$-qubit RX-CX ansatz with linear entanglement is $K$. We have the following conclusion:}
\begin{align*}
    K \le \lfloor\frac{3n^2+1}{4}\rfloor
\end{align*}
\begin{proof}
From Lemma~\ref{lemma:layer_comb}, there are at most $2^{\lceil \log_2 (i+1) \rceil}$ effective parameters on $i$-th qubit. Assume $2^r < n \le 2^{r+1}$. Then the total number of effective parameters of linear $n$-qubit RX-CX ansatz is at most 
\begin{equation}
    f(n) = 1 + \sum_{i=1}^r (2^{t-1})2^t + (n - 2^r)*2^{r+1} = \frac{1}{3} - \frac{4^{r+1}}{3} + n*2^{r+1}
\end{equation}

For given $n$, for $r \in [\log_2n-1, \log_2n)$, the maximum value is achieved when $2^{r+1} = \frac{3n}{2}$, i.e., $f(n) \le \frac{3}{4}n^2 + \frac{1}{3}$. Since $f(n)$ is integer, thus we have $K = f(n) \le \lfloor\frac{3n^2+1}{4}\rfloor$.

\end{proof}

\subsection{Proposition~\ref{prop:rxcx_cnotorder} and Corollary~\ref{coro:rxcx-AL-layer}: Order of the CX Gates}\label{sec:order_cnot}

We can extend the result of Lemma~\ref{lemma:xcomp} to a more general case. Consider an entanglement layer which is same as the linear entanglement layer except that the order of CX gates may be different. We denote such entanglement layer as modified linear entanglement layer. Then for any modified linear entanglement layer, we always have $G^{k}(\bm{b}) = G^{k}(\bar{\bm{b}})$.

\begin{lemma}\label{lemma:cnotnomatter}
For $n$-qubit RX-CX ansatz with modified linear entanglement, and $G^{k}(\bm{b}) = (U^{k}_{l-1}X_{k}^{b_{l-1}})(U^{k}_{l-2}X_{k}^{b_{l-2}})\cdots (U^{k}_{0}X_{k}^{b_{0}})$ we still have
\begin{equation*}
    \forall k \ge 0, l = 2^{\lceil \log_2 (k+2) \rceil}, \forall \bm{b} \in \{0, 1\}^{l}, G^{k}(\bm{b}) = G^{k}(\bar{\bm{b}}),
\end{equation*}
where $U^k_i$ is the unitary of one layer from 0-th qubit to $k$-th qubit with gates involved with other qubits ignored.
\end{lemma}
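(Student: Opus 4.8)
The plan is to reduce the modified-linear case to the linear case already settled in Lemma~\ref{lemma:xcomp}, by showing that the layer unitary of a modified linear entanglement obeys the \emph{same} block recursion as Lemma~\ref{lemma:layer_uni}, with the single footprint $X_{k-1}$ replaced by a fixed product of $\mathrm{X}$ and $\mathrm{I}$ gates. First I would observe that, restricted to qubits $0,\dots,k$, the only gate of the modified entanglement layer that touches the top qubit $k$ is the single $\mathrm{CX}(k,k-1)$, whose control is qubit $k$; hence every gate of the layer is block-diagonal with respect to qubit $k$, and the layer unitary factors as $U^k_i = \mathrm{diag}(M_0,M_1)\cdot(\mathrm{RX}(\theta_{k,i})\otimes\mathrm{I})$, where $M_0$ is the modified layer on qubits $0,\dots,k-1$ and $M_1$ is $M_0$ with one extra $\mathrm{X}$ on qubit $k-1$ inserted at the position of $\mathrm{CX}(k,k-1)$.

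Next I would turn $M_1$ into $M_0\,X^{\bm v}$, where $X^{\bm v}$ is a fixed string of $\mathrm{X}$ gates on qubits $0,\dots,k-1$: commuting the inserted $\mathrm{X}$ to the right through the remaining $\mathrm{CX}$ gates by Lemma~\ref{lemma:xpass} (equivalently, conjugating a Pauli $\mathrm{X}$ by the Clifford $\mathrm{CX}$) only ever spawns further $\mathrm{X}$ gates. Since every $\mathrm{RX}$ gate commutes with any such $\mathrm{X}$-string, this yields
\[
U^k_i = \begin{pmatrix} C_{k,i}\,U^{k-1}_i & -iS_{k,i}\,U^{k-1}_i \\ -iS_{k,i}\,U^{k-1}_i X^{\bm v} & C_{k,i}\,U^{k-1}_i X^{\bm v}\end{pmatrix},
\]
which is identical in form to Lemma~\ref{lemma:layer_uni}, and equivalently the pass-through relation $U^k_i X_k = X_k\,U^k_i\,\widehat{X}^{\bm v}$, where $\widehat{X}^{\bm v}$ denotes $X^{\bm v}$ acting on qubits $0,\dots,k-1$ (and thus commuting with $X_k$). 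Using this relation I would move every $X_k^{b_i}$ to the far left of $G^k(\bm b)$ exactly as Lemma~\ref{lemma:xpass} is used in the original proof; because $l=2^{\lceil\log_2(k+2)\rceil}\ge 2$ is a power of two and hence even, the accumulated leading factor $X_k^{\sum_i b_i}$ has the same parity for $\bm b$ and $\bar{\bm b}$, so the claim $G^k(\bm b)=G^k(\bar{\bm b})$ reduces to showing that the residual $\mathrm{X}$-string insertions on qubits $0,\dots,k-1$ cancel.

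The main obstacle is this final cancellation. In the linear case the footprint is the single gate $X_{k-1}$, which lets the induction in Lemma~\ref{lemma:xcomp} peel off one qubit at a time and lets the combinatorial bookkeeping of Lemma~\ref{lemma:x_string} (the map $p$) apply verbatim, whereas here $X^{\bm v}$ may act on several lower qubits simultaneously, breaking the clean one-qubit-at-a-time reduction. I would handle this by induction on $k$, exploiting that $X^{\bm v}$ is a \emph{product} of $\mathrm{X}$ gates so that its single-qubit factors can be pushed down and absorbed independently; the only facts used throughout are that $\mathrm{X}$-strings commute with $\mathrm{RX}$ gates, compose by bitwise addition, and are mapped to $\mathrm{X}$-strings by $\mathrm{CX}$ — precisely the ingredients of Lemmas~\ref{lemma:xpass} and~\ref{lemma:x_string}. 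Checking that $l=2^{\lceil\log_2(k+2)\rceil}$ layers still suffice for a multi-qubit footprint, rather than only for the single gate $X_{k-1}$, is the step I expect to require the most care.
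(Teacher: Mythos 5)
Your reduction is sound as far as it goes: restricted to qubits $0,\dots,k$ the only entangling gate touching qubit $k$ is $\mathrm{CX}(k,k-1)$, the layer is block-diagonal in qubit $k$, and commuting the inserted $\mathrm{X}$ rightward through the remaining CX gates does produce a fixed $\mathrm{X}$-string footprint $X^{\bm v}$; the resulting block recursion, the pass-through relation $U^k_i X_k = X_k U^k_i \widehat{X}^{\bm v}$, and the parity observation for the accumulated $X_k$ factor are all correct. The problem is that everything up to that point is the easy part: the entire mathematical content of the lemma sits in the step you explicitly defer (``checking that $l=2^{\lceil\log_2(k+2)\rceil}$ layers still suffice for a multi-qubit footprint''). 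The single-qubit factors of $X^{\bm v}$ do \emph{not} propagate downward independently in any useful sense. Cancelling a pair of $\mathrm{X}$ gates on qubit $j$ across two layers spawns a pattern on qubit $j-1$ whose shape depends on whether $\mathrm{CX}(j,j-1)$ precedes or follows $\mathrm{CX}(j-1,j-2)$ within a layer; the patterns originating from different source qubits superpose by XOR; and what must be shown is that the pattern finally reaching qubit $0$ always has even weight so that it can be absorbed by Proposition~\ref{prop:x_rule}. Nothing in your proposal establishes this, and ``push down and absorb independently'' is not a proof sketch of it.

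The paper's proof is organized around exactly this point. Rather than collapsing the footprint into a multi-qubit string, it keeps the inserted gate as a single $X_{k-1}$ whose temporal position relative to the sub-layer unitary $U^{k-1}$ (compare Equation~(\ref{equ:recur_layer}) with Equation~(\ref{equ:recur_layer1})) encodes the only two possible local orderings, yielding two propagation maps $p_1$ and $p_2$ with $p_1((1,1))=(1,0)$ and $p_2((1,1))=(0,1)$. The generalization of Lemma~\ref{lemma:x_string} that it then proves --- for every word $p_{\bm x}=p_{x_0}\circ\cdots\circ p_{x_{l_1-1}}$, the image of $(1,0,1,0,\dots,1,0)$ has odd weight if and only if $l_1=l-2$ --- is precisely the quantitative statement your plan needs, and it is established by a genuine induction on $l$ using the halving/locality structure of $p_1,p_2$. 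If you wish to keep the multi-qubit-footprint formulation you would have to prove an equivalent combinatorial claim for the XOR-superposition of the propagated strings; as written, your argument re-derives the setup of the paper's proof and stops where the real work begins.
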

\begin{proof}
Actually, we only need to consider the following two cases: \\
1. $\mathrm{CX}(k,k-1)$ is before $\mathrm{CX}(k-1,k-2)$.
\begin{equation*}
    \includegraphics[height=6\fontcharht\font`\B]{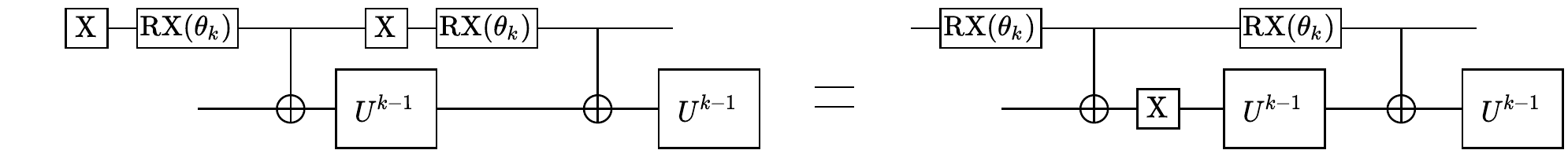} 
\end{equation*}

2. $\mathrm{CX}(k,k-1)$ is after $\mathrm{CX}(k-1,k-2)$. 
\begin{equation*}
    \includegraphics[height=6\fontcharht\font`\B]{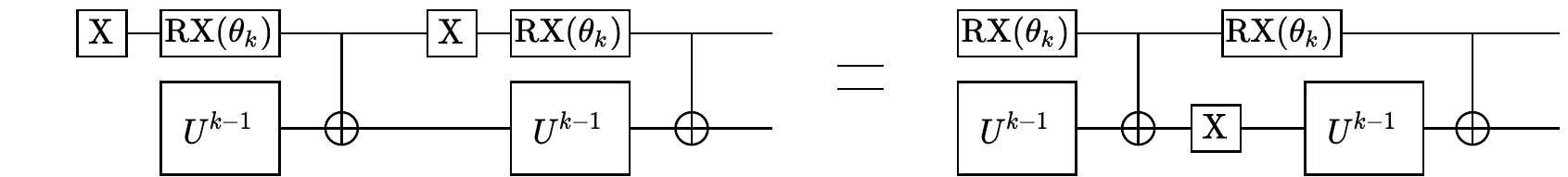} 
\end{equation*}
\end{proof}

For case 1, $p((1,1)) = (1,0)$; for case 2, $p((1,1)) = (0,1)$.

To avoid notation ambiguity, we refer the function induced by the operation of cancelling every two $\mathrm{X}$ gates sequentially on case 1 (resp. case 2) by Lemma~\ref{lemma:xcomp} and Proposition~\ref{prop:x_rule} as $p_1$ (resp. $p_2$).

Likewise, to show $G^{k}(\bm{b}) = G^{k}(\bar{\bm{b}})$, we only need to show: \\
\textbf{Proposition}: for any vector $\bm{x} \in \{ 0,1 \}^{l_1}$,  $p_{\bm{x}} = p_{x_{0}}\circ p_{x_1} \circ \cdots \circ p_{x_{l_1-1}}$,  $p_{\bm{x}}((1,0,1,0,\cdots,1,0))$ contains an odd number of $1$ if and only if $l_1 = l-2$ ($l$ is the length of input), and in that case, $p_{\bm{x}}((1,0,1,0,\cdots,1,0))$ contains only one $1$.

Note that $(0,1,\cdots,0,1)$ can be transformed into $(1,0,\cdots,1,0)$ by Corollary~\ref{lemma:zeronomatter} without changing the conclusion of this proposition. Thus, in the proposition, we only consider initial vector $(1,0,\cdots,1,0)$.

We can prove the proposition by induction on $l = 2^{r+1}$.

We first consider $r = 1$. if $l_1 < l-2 = 2$, then $l_1$ must be $0,1,2,3$. \\
If $l_1 = 0$, $p_{\bm{x}} = \mathrm{id}$, the proposition already holds.
If $l_1 = 1$, $p_{\bm{x}}$ only produces: $(1,1,0,0)$ and $(0,1,1,0)$, still holds. \\
Thus this proposition holds when $r = 1$.

Then we assume this proposition holds when $r = t > 1$. Then, for $r = t + 1$, $l=2^{t+2}$, if $l_1 \le 2^{t+1} - 2$, we can cut $(1,0,1,0,\cdots)_{l}$ into two halves. Then  $p_{\bm{x}}((1,0,1,0,\cdots)_{l}) = (p_{\bm{x}}((1,0,1,0,\cdots)_{l/2}), p_{\bm{x}}((1,0,1,0,\cdots)_{l}/2))$ must have even number of `1'.

Because for $l_1 = 2^{t+1} - 2$, $p_{\bm{x}}((1,0,1,0,\cdots)_{l/2})$ has only one 1, thus $p_{\bm{x}}((1,0,1,0,\cdots)_{l})$ have two `1' with a distance $\frac{l}{2}$. \\
Then, for $l_1 = 2^{t+1} - 1$, no matter $x_{l_1-1}$ is 0 or 1, $p_{\bm{x}}((1,0,1,0,\cdots)_{l})$ must contain $\frac{l}{2}$ consecutive `1' and remaining elements is `0'. In this case, we still have even number of `0'.

Since for both $p_0$ and $p_1$, heading `0's and tailing `0's can be ignored (Corollary~\ref{lemma:zeronomatter}), thus, we can get vector $(1,0,1,0,\cdots,1,0,1)_{\frac{l}{2}-1}$ after applying $p_{\bm{x}}$ when $l_1 = 2^{t+1}$. In this case, we still have an even number of `0'.

By Corollary~\ref{lemma:zeronomatter}, we can pad $(1,0,1,0,\cdots,1,0,1)_{\frac{l}{2}-1}$ to $(1,0,1,0,\cdots,1,0,1,0)_{\frac{l}{2}}$ without affecting the proposition here. 

Then, by induction, we need $l_2 = 2^{t+1} -2$, $\bm{x}_2\in \{0, 1\}^{l_2}$, $p_{\bm{x}_2}$ to make $(1,0,1,0,\cdots,1,0,1,0)_{\frac{l}{2}}$ become a vector with only one 1.

Thus, we need $l_1 = 2^{t+1} -2 + 2^{t+1} = 2^{t+2} -2 = l -2$ to let $p_{\bm{x}}((1,0,1,0,\cdots,1,0))$ contain an odd number of `1', i.e., this proposition holds when $r = t + 1$.

Thus, this proposition holds for any $r \ge 1$.

RX-CX ansatz with modified linear entanglement has the same parameter combination result as the RX-CX ansatz with original linear entanglement, i.e., 

\textbf{Proposition 3.3} (CX order does not matter)\textbf{.} 
\textit{An $n$-qubit RX-CX ansatz with modified linear entanglement has at most $\lfloor\frac{3n^2+1}{4}\rfloor$ effective parameters. The modified linear entanglement has alternated CX order compared with the original linear entanglement as shown in Figure~\ref{fig:entanglementblock} (b)(c).}
\begin{proof}
There are at most two types of form on $k$-th qubit: $U^{k}(\bm{\theta}_{k}) = U^{k-1}(\bm{\theta}_{k-1}) \cdot \mathrm{CX}(k,k-1) \cdot \mathrm{RX}(\theta_k)\otimes \mathrm{I}$ or $U^{k}(\bm{\theta}_{k}) = \mathrm{CX}(k,k-1) \cdot U^{k-1}(\bm{\theta}_{k-1}) \cdot  \mathrm{RX}(\theta_k)\otimes \mathrm{I}$. Since we have already proved the parameter combination of the first form, here we will focus on the second one, i.e.
\begin{align}\label{equ:recur_layer1}
 U^k(\bm{\theta}_k) = \begin{pmatrix}
C_{k}U^{k-1}(\bm{\theta}_{k-1}) & -iS_{k}U^{k-1}(\bm{\theta}_{k-1}) \\
    -iS_{k}X_{k-1}U^{k-1}(\bm{\theta}_{k-1}) & C_{k}X_{k-1}U^{k-1}(\bm{\theta}_{k-1})
\end{pmatrix}
\end{align}

Likewise, we first prove the two parameters $\theta_{k+n*j}$ (the parameter in the $j$-th $\rm RX$ gate on the $i$-th qubit) and $\theta_{k + n*j + n*2^{\lceil \log_2 (k+1)\rceil}}$ on the $k$-th qubit can be combined.

We follow the notations in the proof of Lemma~\ref{lemma:layer_cnt}. 

With loss of generality, we only consider the (0, 0) entry. Let $l = 2^{\lceil \log_2 (k+1)\rceil}$. In this proposition, each term of (0, 0) entry is like $\prod\limits_{i=0}^{l} T_{a_i}(\theta_{n*i+k})W_{b_i}^{k-1}(\bm{\theta}_{n*i+k-1})$, where $a_i \in \{0, 1\}$, $T_{a_i}((\theta_{n*i+k})) = (C_{n*i+k})^{1-a_i}(-iS_{n*i+k})^{a_i}$; $b_i \in \{ 0, 1 \}$,  $W_{b_i}^{k-1}(\bm{\theta}_{n*i+k-1}) = (X_{k-1}U^{k-1}(\bm{\theta}_{n*i+k-1}))^{b_i} (U^{k-1}(\bm{\theta}_{n*i+k-1}))^{1-b_i}$. Denote $\prod\limits_{i=s(\bm{b})}^{u(\bm{b})} W_{b_i}^{k-1}(\bm{\theta}_{n*i+k-1})$ by $W(\bm{b})$. 

Similar to the proof of Lemma~\ref{lemma:layer_cnt}, to show parameter $\theta_{k}$ and $\theta_{n*l+k}$ can be combined, we only need to prove: $U^{k-1}(\bm{\theta}_{n*l+k-1})W(\bm{b}_1(0:l-1)) = U^{k-1}(\bm{\theta}_{n*l+k-1})W(\overline{\bm{b}_1(0:l-1)})$.

We define $\bm{b}_2$ as follows: \\
(1) $\bm{b}_2(0) = 0$. \\
(2) $\bm{b}_2(i+1) = \bm{b}_1(i)$, $\forall i \in (0, l)$.

Then, we have $U^{k-1}(\bm{\theta}_{n*l+k-1})W(\bm{b}_1(0:l-1)) = G(\bm{b}_2(1:l))U^{k-1}(\bm{\theta}_{k-1})$, $U^{k-1}(\bm{\theta}_{n*l+k-1})W(\overline{\bm{b}_1(0:l-1)}) = G(\overline{\bm{b}_2(1:l)})U^{k-1}(\bm{\theta}_{k-1})$.

Since $l=2^{\lceil \log_2 (k+1) \rceil} = 2^{\lceil \log_2 ((k-1)+2) \rceil}$, with Lemma~\ref{lemma:cnotnomatter},  Corollary~\ref{lemma:xcomp_mul} and Corollary~\ref{lemma:trans_invar}, we have $U^{k-1}(\bm{\theta}_{n*l+k-1})W(\bm{b}_1(0:l-1)) = G(\bm{b}_2(1:l))U^{k-1}(\bm{\theta}_{k-1}) = G(\overline{\bm{b}_2(1:l)})U^{k-1}(\bm{\theta}_{k-1}) = U^{k-1}(\bm{\theta}_{n*l+k-1})W(\overline{\bm{b}_1(0:l-1)})$.

Therefore, the two parameters $\theta_{i+n*j}$ (the parameter in the $j$-th $\rm RX$ gate on the $i$-th qubit) and $\theta_{i + n*j + n*2^{\lceil \log_2 (i+1)\rceil}}$ on the $i$-th qubit can be combined. This means that we have at most $2^{\lceil \log_2 (i+1)\rceil}$ effective parameters on $i$-th qubit.

Similar to Proposition~\ref{prop:rx_cx_upperbound}, we have that $n$-qubit RX-CX ansatz with modified linear entanglement has at most $\lfloor\frac{3n^2+1}{4}\rfloor$ effective parameters.

\end{proof}

\begin{figure*}
    \centering
    \includegraphics[width=\textwidth]{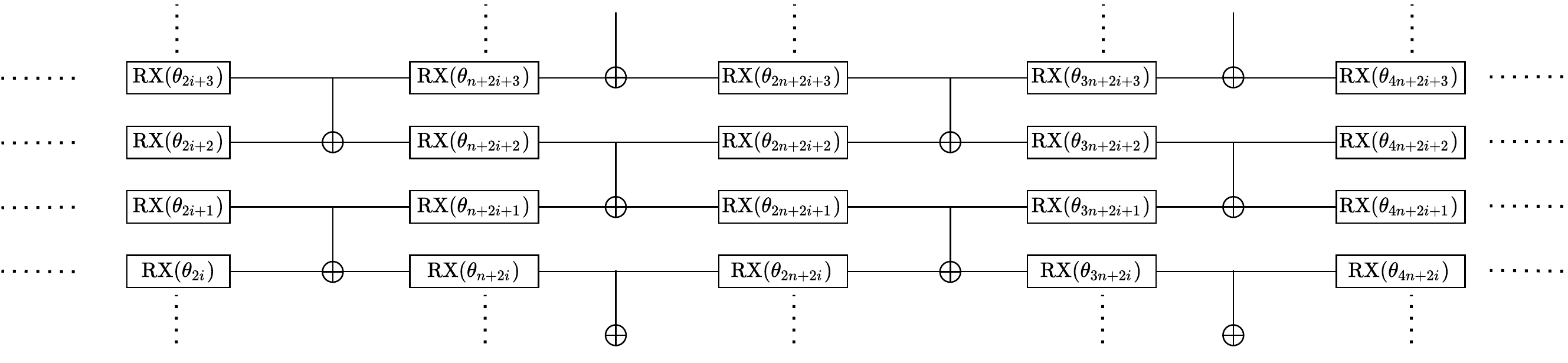}
    \caption{RX-CX ansatz with alternating entanglement}
    \label{fig:rxcx_alter}
\end{figure*}

\textbf{Corollary 3.1.}
A $(2L)$-layer RX-CX ansatz with alternating entanglement can be reduced to an $L$-layer RX-CX ansatz with linear entanglement.
\begin{proof}
X-rule can directly exchange RX and CX on target qubit.

As shown in Figure~\ref{fig:rxcx_alter}, by applying the X-rule repeatedly, we can combine $\theta_{n+i+2k}$ to $\theta_{i+2k}$, $\theta_{n+i+2k+1}$ to $\theta_{2n+i+2k+1}$, ... Then within finite steps, the RX gate between alternating entanglement layers will be all eliminated, and we get an entanglement layer that is same as linear entanglement, regardless of the order of CX.

\end{proof}

\subsection{Proposition~\ref{prop:rep_deg} and Proposition~\ref{prop:limit_cx}: More CX Gates in the Entanglement Layer}\label{sec:more_cnot}

\textbf{Proposition 3.4} (Repetition of entanglement layer)\textbf{.}
\textit{For any entanglement layer that is purely constructed by CX gates, there exists an integer $k$, s.t. $E^k = I$ where $E$ is the unitary transformation represented by the entanglement layer and $I$ is the identity operator.}
\begin{proof}
Because $E$ is just a special permutation matrix, and the number of $2^n$-dimensional permutation matrix is limited. Thus $\{E, E^2, E^3 \cdots,\}$ is a cyclic group, i.e., there would be some $k$, s.t. $E^k = I$.
\end{proof}

\textbf{Proposition 3.5} (Limitation of CX-based entanglement layer)\textbf{.} 
\textit{For a given $n$-qubit unitary transformation $U$, $\forall i, j \ge 2$, it is possible to move $i$-th column of $U$ to $j$-th column of $U$ with a finite number of CXs, but it is not possible to achieve column swap between two arbitrary columns when $n \ge 3$.}
\begin{proof}
Consider the form: $U \cdot E$, where $E$ is the combination of CX gates.
We first prove that we can move second column to another other column except the first column. We prove it by induction.

When qubit number $n = 2$. First consider moving column $2 \to 3$. We can do it by using $E = \text{CX}(0, 1)\text{CX}(1, 0)$. To move column $2 \to 4$, we can do it by using $E = \text{CX}(0,1)$.

Assume this proposition holds for $n = k$. Then for $n = k+1$, if we want to move  column $2\to i$, $i < 2^{k}$, then by hypothesis, we can find $E$ to do it (ignoring the $k$-th qubit). If we want to move column $2 \to i$, where $i > 2^{k}$, then we first apply $CX(0, k)$, then second column will move to the $2^k+1$ column. Then, by induction hypothesis, we can move $2^k+1$ column to any $2^k+j$ columns where $j \ge 1$. \\
Finally we need to show how to move column 2 to column $2^k$. We can do it with only two CX gates: $\text{CX}(0, k)\text{CX}(k, 0)$. \\
Thus this proposition holds for $n = k + 1$.

To prove CXs cannot implement arbitrary column swap, we just need to give a counterexample. Suppose we only want to swap the $(2^{n-1}+1)$-th column and the $(2^{n-1}+2)$-th column. If $n\ge 3$, this example swap is actually a controlled-swap operation which cannot be synthesized solely by CX gates.
\end{proof}

\subsection{Theorem~\ref{theo:rxrzcxefficiency}: Effectiveness of RX-RZ-CX ansatz}\label{sec:rx-rz-cx-eff}

Now, we will discuss the effectiveness of RX-RZ-CX ansatz together with several similar ansatzes.

\textbf{Theorem 3.1} (Efficiency of RX-RZ-CX ansatz with alternating entanglement)\textbf{.} 
\textit{For an $n$-qubit $2L$-layer RX-RZ-CX ansatz with alternating entanglement, the numbers of effective parameters (w.r.t parameter combination) $w_{xz}$ satisfies
$w_{xz} \ge (4n-3)*L$.}
\begin{proof}
Firstly, we consider the parameter combination of RX-RZ-CX ansatz with linear entanglement. We first consider the parameter combination of RX gates. In the RX-RZ-CX ansatz, we make the angles of RZ gate $\lambda_i = \frac{\pi}{2}$. Then, we have $\mathrm{RX}(\theta_i)\mathrm{RZ}(\frac{\pi}{2}) = \frac{\sqrt{2}}{2}\mathrm{RX}(\theta_i) + \frac{\sqrt{2}}{2}\mathrm{RX}(\theta_i)\mathrm{RZ}(\pi) = \frac{\sqrt{2}}{2}\mathrm{RX}(\theta_i) + \frac{\sqrt{2}}{2}\mathrm{RZ}(\pi)\mathrm{RX}(-\theta_i)$. In this case, $\forall i, j$, $\theta_i$ and $\theta_j$ cannot be combined because $\theta_{i} + \theta_{j}$ and $\theta_{i} - \theta_{j}$ will simultaneously appear in the resulted unitary matrix. Similarly, we have $\mathrm{RX}(\frac{\pi}{2})\mathrm{RZ}(\lambda_i) = \frac{\sqrt{2}}{2}\mathrm{RZ}(\lambda_i) + \frac{\sqrt{2}}{2}\mathrm{RX}(\pi)\mathrm{RZ}(\lambda_i) = \frac{\sqrt{2}}{2}\mathrm{RZ}(\lambda_i) + \frac{\sqrt{2}}{2}\mathrm{RZ}(-\lambda_i)\mathrm{RX}(\pi)$. Thus, by making angles of RX gate $\theta_i = \frac{\pi}{2}$, $\forall i, j$, $\lambda_i$ and $\lambda_j$ cannot be combined because $\lambda_{i} + \lambda_{j}$ and $\lambda_{i} - \lambda_{j}$ will simultaneously appear in the resulted unitary matrix. Thus, there is not parameter combination in RX-RZ-CX ansatz with linear entanglement.

For RX-RZ-CX ansatz with alternating entanglement, the parameter combination only happens on qubits without CX gates that single qubit gates can be combined into one u3 gate (generic single qubit rotation). As shown in Figure~\ref{fig:rxcxlayer}(c), for every 2 layers of RX-RZ-CX ansatz, at most 3 parameters are combined on qubits without CX gates. Thus, for a 2L-layer RX-RZ-CX ansatz with alternating entanglement, we have $w_{xz} \ge (4n-3)L$.
\end{proof}

\textbf{Corollary 3.2}\textbf{.} 
\textit{Theorem~\ref{theo:rxrzcxefficiency} can be generalized to other types of HEA such as RY-CX, RX-RY-CX and RY-RZ-CX. 
The number of effective parameters of an $n$-qubit $2L$-layer RY-CX, RX-RY-CX, or RY-RZ-CX ansatz with alternating entanglement is $w_{y} \ge (n-1)*2L$, $w_{yz} \ge (4n-3)*L$, or $w_{xy} \ge (4n-3)*L$, respectively.}

\begin{proof}
For RY-RZ-CX ansatz and RX-RY-CX ansatz, we can prove this proposition with the same technique as in Theorem~\ref{theo:rxrzcxefficiency} by using the following equations: \\
$\mathrm{RY}(\frac{\pi}{2})\mathrm{RZ}(\lambda_i) = \frac{\sqrt{2}}{2}\mathrm{RZ}(\lambda_i) + \frac{\sqrt{2}}{2}\mathrm{RY}(\pi)\mathrm{RZ}(\lambda_i) = \frac{\sqrt{2}}{2}\mathrm{RZ}(\lambda_i) + \frac{\sqrt{2}}{2}\mathrm{RZ}(-\lambda_i)\mathrm{RY}(\pi)$, \\
$\mathrm{RY}(\beta_i)\mathrm{RZ}(\frac{\pi}{2}) = \frac{\sqrt{2}}{2}\mathrm{RY}(\beta_i) + \frac{\sqrt{2}}{2}\mathrm{RY}(\beta_i)\mathrm{RZ}(\pi) = \frac{\sqrt{2}}{2}\mathrm{RY}(\beta_i) + \frac{\sqrt{2}}{2}\mathrm{RZ}(\pi)\mathrm{RY}(-\beta_i)$, \\
$\mathrm{RX}(\frac{\pi}{2})\mathrm{RY}(\beta_i) = \frac{\sqrt{2}}{2}\mathrm{RY}(\beta_i) + \frac{\sqrt{2}}{2}\mathrm{RX}(\pi)\mathrm{RY}(\beta_i) = \frac{\sqrt{2}}{2}\mathrm{RY}(\beta_i) + \frac{\sqrt{2}}{2}\mathrm{RY}(-\beta_i)\mathrm{RX}(\pi)$, \\
$\mathrm{RX}(\theta_i)\mathrm{RY}(\frac{\pi}{2}) = \frac{\sqrt{2}}{2}\mathrm{RX}(\theta_i) + \frac{\sqrt{2}}{2}\mathrm{RX}(\theta_i)\mathrm{RY}(\pi) = \frac{\sqrt{2}}{2}\mathrm{RX}(\theta_i) + \frac{\sqrt{2}}{2}\mathrm{RY}(\pi)\mathrm{RX}(-\theta_i)$.

For RY-CX ansatz, we prove it with the same technique as in Theorem~\ref{theo:rxrzcxefficiency} by using the fact that $\mathrm{RY}(\theta) = \mathrm{RX}(-\frac{\pi}{2})\mathrm{RZ}(\theta)\mathrm{RX}(\frac{\pi}{2}) = \mathrm{RZ}(\frac{\pi}{2})\mathrm{RX}(\theta)\mathrm{RZ}(-\frac{\pi}{2})$.
\end{proof}

\end{document}